\newif\ifdraft\draftfalse
\newcommand\modedraft[1]{#1}
\newcommand\todo[1]{{\color{purple}
    [\textbf{To do:} #1]}}
\newcommand\spcom[1]{\marginpar[{\color{blue}\small\dbend}]{\color{blue}\small\dbend}{\footnotesize \color{blue}[#1 - \textbf{Sophie}]}}
\newcommand\atcom[1]{\marginpar[{\color{orange}\small\dbend}]{\color{orange}\small\dbend}{\footnotesize \color{orange}[#1 - \textbf{Alexandre}]}}
\newcommand\tbcom[1]{\marginpar[{\color{red}\small\dbend}]{\color{red}\small\dbend}{\footnotesize \color{red}[#1 - \textbf{Thomas}]}}
\newcommand\modedraft[1]{}
\newcommand\todo[1]{}
\newcommand\spcom[1]{}
\newcommand\atcom[1]{}
\newcommand\tbcom[1]{}
\title{Semantics of Attack-Defense Trees for Dynamic Countermeasures and  a New Hierarchy of Star-free Languages}
\titlerunning{Semantics of Attack-Defense Trees for Dynamic Countermeasures} 
 \author{Thomas Brihaye \inst{1}
 \orcidID{0000-0001-5763-3130} 
 \and
 Sophie Pinchinat \inst{2} \and
 Alexandre Terefenko \inst{1,2}}
\institute{University of Mons, Belgium  \\   \and
 University of Rennes, France}
\begin{document}
\maketitle

\begin{abstract}
We present a mathematical setting for attack-defense trees, a
classic graphical model to specify attacks and countermeasures. We equip attack-defense trees
 with (trace) language semantics allowing to have an
original dynamic interpretation of countermeasures. Interestingly, the
expressiveness of \adts coincides with star-free languages, and the
nested countermeasures impact the expressiveness of \adts. With an
adequate notion of countermeasure-depth, we exhibit a strict hierarchy of the
star-free languages that does not coincides with the classic
one. Additionally, driven by the use of
\adts in practice, we address the decision problems of trace
membership and of non-emptiness, and study their computational
complexities parameterized by the countermeasure-depth.
\end{abstract}

\section{Introduction}
\label{sec-introduction}
Security is nowadays a subject of increasing attention as means to
protect critical information resources from disclosure, theft or
damage. The informal model of
\emph{attack trees} is due to Bruce Schneier\footnote{\url{https://www.schneier.com/academic/archives/1999/12/attack_trees.html}}
to graphically represent and reason about possible threats
one may use to attack a system. Attack trees have then been
widespread in the industry and are advocated since the 2008 NATO
report to govern the evaluation of the threat in risk analysis. The
attack tree model has attracted the interest of the
academic community in order to develop their mathematical theory
together with formal methods (see the survey \cite{widel2019beyond}).
 
Originally in \cite{schneier1999attack}, the model of attack tree aimed at describing
how an attack goal refines into subgoals, by using two
\emph{operators} $OR$ and $AND$ to coordinate those refinements. The
subgoals are understood in a ``static'' manner in the sense that
there is no notion of temporal precedence between them. Still, with
this limited view, many analysis can be conducted (see for example
\cite{kordy2014attack,gadyatskaya2016modelling}).
Then, the academic community considered two extensions of attack
trees. The first one, called \emph{\adt} (\shadt, for short), is obtained by augmenting
attack trees with nodes representing countermeasures
\cite{mauw2005foundations,kordy2012computational}. The second one, initiated by
\cite{DBLP:conf/sefm/PinchinatAV14,jhawar2015attack}, concerns a
``dynamic'' view of attacks with the ability to specify that the
subgoals must be achieved in a given order. This way to coordinate
the subgoals is commonly specified by using operator $SAND$ (for
Sequential $AND$). In
\cite{audinot2017my}, the authors proposed a \emph{path semantics} for
attack trees with respect to a given a transition system (a model
of the real system).
However, a unifying formal semantics amenable to the coexistence of
both extensions of attack trees -- namely with the defense and the dynamics
-- has not been investigated yet.

\renewcommand{\adt}{adt\xspace}
\renewcommand{\adts}{adts\xspace}
\renewcommand{\Adts}{Adts\xspace}

In this paper, we propose a formal language semantics of \adts, in the
spirit of the trace semantics by \cite{brihaye2022adversarialgandalf}
(for defenseless attack trees), that allows countermeasure features via
the new operator $\Cop$ (for ``countermeasure'').  Interestingly, because
in \adts, countermeasures of countermeasures exist, we
define the \emph{countermeasure-depth} (maximum number of
nested $\Cop$ operators) and analyze its role in terms of
expressiveness of the model.

First, we establish the Small Model Property for \adts with countermeasure-depth
bounded by one (\Cref{boundedsizetracesADT1}), which ensure the existence of small traces in a non-empty semantics. This not so trivial result is a
stepping stone to prove further results.

Second, since our model of \adts is very close to \emph{\seres}
(\shseres for short), that are star-free regular expressions extended
with intersection and complementation, we provide a two-way
 translation from the former to the latter
(\Cref{theo:starfreeequalsADT}). It is known that the class of
languages denoted by \shseres coincides with the class
of \emph{star-free languages} \cite{rozenberg2012handbook}, that can
also be characterized as the class of languages definable in first-order logic over
strings (\FO). We make explicit a translation from \adts into \FO
(\Cref{lem:adttofo}) to shed light on the role played by the
countermeasure-depth. Our translation is reminiscent of the
constructions in \cite{PERRIN1986393} for an alternative proof of the
result in \cite{thomas1982classifying} that relates the classic dot-depth
hierarchy of star-free languages and the \FO quantifier
alternation hierarchy. In particular, we show (\Cref{lem:adtktofok+1})
that any language definable by an \adt with countermeasure-depth less than
equal to $k$ is definable in $\foalt[k+1]$, the $(k+1)$-th level of
the first-order quantifier alternation hierarchy.

\noindent Starting from the proof used in \cite{thomas1982classifying} to show
the strictness of the dot-depth hierarchy, we demonstrate that there
exists an infinite family of languages whose definability by an \adt
requires arbitrarily large countermeasure-depths. It should be noticed that
our notion of countermeasure-depth slightly differs from the
complementation-depth considered in \cite{stockmeyer1974complexity}
for \eres\footnote{arbitrary regular expressions extended with
intersection and complementation.}, because the new operator \ANDop is
rather a relative complementation. As a result, the countermeasure-depth
of \adts induces a new hierarchy of all star-free languages, that we
call the \emph{\ADT-hierarchy}, that coincides (at least on the very first levels) neither
with the dot-depth hierarchy, nor with the first-order logic quantifier
alternation hierarchy.

Third, we study three natural decision problems for
\adts, namely the \emph{membership problem} (\ADTmemb),
the \emph{non-emptiness problem} (\ADTNE) and the \emph{equivalence
problem} (\ADTequiv). The problem \ADTmemb is to determine if a
trace is in the semantics of an \adt. From a practical security point
of view, \ADTmemb addresses the ability to recognize an
attack, say, in a log file. The problem \ADTNE consists of, given an
\adt, deciding if its semantics is non-empty. Otherwise
said, whether the information system can be attacked or not. Finally,
the problem \ADTequiv consists of deciding whether two \adts describe
the same attacks or not. Our results are summarized in \Cref{table:results}.

The paper is organized as follows. \Cref{sec-example} proposes an introductory example. Next, we define our model of \adts in
\Cref{sec-adts}, their trace semantics and countermeasure-depth, and present the Small Model
Property for \adts with countermeasure-depth bounded by one. We then
show in \Cref{sec-expressiveness} that \adts coincide with star-free
languages. We next study the novel hierarchy induced by the
countermeasure-depth (\Cref{sec-ADThierarchy}) and study
decision problems on \adts (\Cref{sec-decisionproblems}).

\section{Introductory Example}
\label{sec-example}
Consider a thief (the proponent) who wants to steal two documents
inside two different safes (Safe 1 and Safe 2), without being seen.
The safes are located in two different but adjacent rooms (Room 1 and
Room 2) in a building; the entrance/exit door of the building leads to
Room 1.  The rooms are separated by a door and each room has a
window. Initially, the thief is outside of the building. A strategy
for the proponent to steal the documents is to attempt to open Safe 1
until it succeeds, then open Safe 2 until it succeeds (and finally to
exit the building). However, this strategy can be easily countered by
the company, say by hiring a security guard visiting the rooms on some regular
basis.

%



Security experts would commonly use an \adt to describe how the
proponent may achieve her goal and, at the same time, the ways
its opponent (the company) may prevent the proponent from reaching her
goal. An informal
\adt expressing the situation is given in
Figure~\ref{fig-adt-countermeasure}, where traditionally goals of
the proponent are represented in red circles, while countermeasures
of the opponent are represented in green squares. An arrow from a left
sibling to a right sibling specifies that the former goal must be
achieved before starting the latter. A countermeasure targeting a proponent goal
is represented with dashed lines.


\begin{figure}[h]
\begin{subfigure}[b]{0.5\textwidth}         
\centering
\includegraphics[scale=.9]{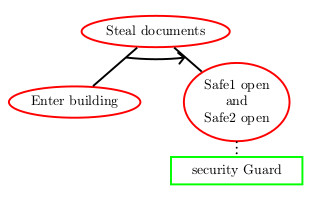}
\caption{A countermeasure from the Company.}
\label{fig-adt-countermeasure}
\end{subfigure}
\hfill
\begin{subfigure}[b]{0.5\textwidth}
\centering
\includegraphics[scale=.9]{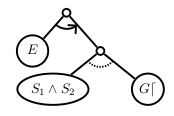}
\caption{Formal representation of the \adt}
\label{fig-adt-formalcountermeasure}
\end{subfigure}
\caption{\Adts for the thief and company problem.}\label{fig-adts-intro}
\end{figure}



As said, the graphical model of \Cref{fig-adt-countermeasure} is informal and
cannot be exploited by any automated tool for reasoning. With the
setting proposed in this contribution, we make it formal, and in
particular we work out a new binary operator \Cop for
``countermeasure'', graphically reflected with a curved dashed line
between two siblings: the left sibling is a proponent's goal while the
right sibling is the opponent's countermeasure -- with this convention,
we can unambiguously retrieve the player's type of an \adt node. The
more formal version of the \adt in \Cref{fig-adt-countermeasure} is
drawn in \Cref{fig-adt-formalcountermeasure} (details
for its construction can be found in \Cref{ex:introductory}).

The proposed semantics for \adts also allows us to consider
nested \Cop operators to express countermeasures of
countermeasures. For example, a proponent countermeasure against the
company countermeasure could be to be disguised as an employee working
in the building; we formalise this situation
in \Cref{ex:introductory}. It should be observed that nested
countermeasure is a core aspect of our contribution and 
the main subject of \Cref{sec-ADThierarchy}.

\section{Attack-Defense Trees and Countermeasure-Depth}
\label{sec-adts}
\paragraph{Preliminary notations}
\label{sec-background}

For the rest of this paper, we fix $\Prop$ a finite set of
propositions and we assume that the reader is familiar with
propositional logic. We use typical symbol $\formula$ for
propositional formulas over $\Prop$ and write a valuation of the
propositional variables $\val$ as an element of $\alphabet\egdef
2^{\Prop}$, that will be viewed as an alphabet. A \emph{trace}
$\trace$ over $\Prop$, is finite word over $\alphabet$, that is a
finite sequence of valuations. We denote the empty trace by $\eword$,
and we define $\Tracesne:=\Traces\backslash \{\eword\}$. For a trace
$\trace=\val_1...\val_n$, we define $\sizetrace{\trace}:=n$, its
the \emph{length}, as the number of valuations appearing in $\trace$,
and we let $\trace[i]\egdef\val_i$, for each
$i \in \{1,\ldots,n\}$. We define the classic \emph{concatenation} of
traces: given two traces $\trace=\val_1...\val_n$ and
$\trace'=\val_1'...\val'_m$, we define
$\trace\concat\trace':=\val_1...\val_n\val_1'...\val_m'$. We also lift
this operator to sets of traces in the usual way: given two sets of
traces $\traceset$ and $\traceset'\subseteq \Traces$, we let
$\traceset\concat\traceset'\egdef\{\trace\concat\trace'~:~\trace\in\traceset \text{
and } \trace'\in\traceset'\}$. For a trace
$\trace=\val_1...\val_n \in \Traces$ and $1\leq i \leq n$, 
the trace $\trace'=\val_1...\val_i$ is a \emph{prefix} of $\trace$,
written $\trace' \prefix \trace$.\\ 

\label{sec-definition}

%
%

We define attack-defense trees (\adts) over $\Prop$, as well as their \emph{trace semantics} and
their \emph{countermeasure-depth}, and develop enlightening examples.
\Adts are standard labeled finite trees with a dedicated set of labels based on the special $\epsilon$ label and propositional formulas for leaves and on the set $\{\ORop, \SANDop, \ANDop, \Cop\}$ for internal nodes.

\begin{definition}\label{def:ADT}
The set \ADT of \emph{\adts} over $\Prop$ is inductively defined by:
\begin{itemize}
  \item the \emph{empty-word leaf} $\epsilonadt$ and every
    propositional formula $\formula$ over $\Prop$ are in \ADT;
\item if trees $\att_1, ...,\att_n$ are in $\ADT$, so are $\ORop(\att_1,...,\att_n)$, $\SANDop(\att_1,...,\att_n)$, and $\ANDop(\att_1,...,\att_n)$;
\item if trees $\att$ and $\att'$ are in $\ADT$, so is $\Cop(\att,\att')$.
\end{itemize}

The \emph{size} of an \adt $\att$, written $\sizeatt{\att}$ is defined
as the sum of the sizes of its leaves, provided the size of
$\epsilonadt$ is $1$, while the size of $\formula$ is its size when
seen as a propositional formula.
\end{definition}



Regarding the semantics, \adts describe a set of traces over
alphabet $\alphabet\egdef 2^\Prop$, hence their \emph{trace
semantics}. Formally, for an \adt $\att$, we define the language
$\semword{\att} \subseteq \Traces$. First, we set
$\semword{\epsilonadt}=\{\epsilon\}$. Now, a leaf \adt hosting formula
$\formula$ denotes the reachability goal $\formula$, that is the set
of traces ending in a valuation satisfying $\gamma$ (we use the classic notations $\true=\prop \lor \lnot \prop$ and $\false=\lnot \true$ with $\prop\in \Prop$). We make our trace
semantics compositional by providing the semantics of the four
operators \ORop, \SANDop, \Cop, and \ANDop in terms of how the
subgoals described by their arguments interact. Operator \ORop
tells that at least one of the subgoals has to be achieved. Operator
\SANDop requires that all the subgoals need being achieved in the
left-to-right order. The binary operator \Cop requires to achieve the
first subgoal without achieving the second one. Finally, operator
\ANDop tells that all subgoals need to be achieved, regardless of the
order. Without any countermeasure, \ANDop can be seen as a relaxation
of the \SANDop, but it is not true in general (see
\cref{ex:sandeach}). 

At the level of the property described by an \adt, i.e.\ a trace
language, the operators correspond to specific language operations:
\ORop corresponds to union, \SANDop to concatenation, \Cop to a
relativized complementation. Only \ANDop corresponds to a
less classic operation: a trace $\trace$ belongs to the language of $\ANDop(\att_1,\att_2)$ if $\trace$ belongs to the language of $\att_1$ and has a prefix in the language of $\att_2$ or vice-versa. Formally:


\begin{definition}
\label{def:andword}
Let $\wordlang_1$, $\wordlang_2$ be two languages over alphabet
$\alphabet$. The \emph{each} of $\wordlang_1$ and $\wordlang_2$ is
the language $\wordlang_1 \andword \wordlang_2 \egdef (\wordlang_1
\concat \alphabet^* \inter \wordlang_2) \union (\wordlang_2 \concat \alphabet^*\inter
\wordlang_1 )$.
\end{definition}

Because operator $\andword$ is associative, we can define
$\wordlang_1 \andword \wordlang_2 \andword \ldots \andword
\wordlang_n$ that amounts to being equal to $\UNION_{i \in
  \{1,\ldots,n\}} (\wordlang_i \inter \INTER_{j\neq i}\wordlang_j
\concat \alphabet^*)$.

\begin{example}
  \label{ex:andword}
  A word $\word=\letter_1\letter_2\ldots\letter_m$ belongs to
  $\wordlang_1 \andword \wordlang_2 \andword \wordlang_3$ whenever
  there are three (possibly equal) positions $i_1,i_2,i_3=m$ such that, for each $j \in \{1, 2, 3\}$,
  the word prefix $\letter_1\ldots\letter_{i_j} \in
  \wordlang_{\pi(j)}$, for some permutation $\pi$ of $(1,2,3)$. 
\end{example}

We can now formally define the \adt semantics.
\renewcommand{\alphabettraces}{\alphabet}
\begin{definition}
  \label{def:semword}~
\begin{itemize}
\item $\semword{\epsilonadt}\egdef\{\eword\}$ and $\semword{\formula}\egdef \{\val_1...\val_n \in \Traces :  \val_n \models \formula\}$;\\ In particular, $\semword{\true}=\Tracesne$ and $\semword{\false}=\emptyset$;
\item $\semword{\ORop(\att_1, ..., \att_n)}\egdef\semword{\att_1} \union ... \union \semword{\att_n}$;
\item $\semword{\SANDop(\att_1, ..., \att_n)}\egdef\semword{\att_1} \concat ... \concat \semword{\att_n}$;
\item $\semword{\Cop(\att_1, \att_2)}\egdef \semword{\att_1} \setminus \semword{\att_2}$;
\item $\semword{\ANDop(\att_1, ..., \att_n)}=\semword{\att_1} \andword ... \andword \semword{\att_n}$.
\end{itemize}
\end{definition}

In the rest of the paper, we say for short that \emph{an \adt is non-empty}, written $\att \neq \emptyset$, whenever
$\semword{\att}\neq\emptyset$. We say that two \adts $\att$ and $\att'$ are
\emph{equivalent}, whenever $\semword{\att}=\semword{\att'}$.

\begin{remark}
  \label{rem:associative}
Since all operators $\union$, $\concat$ and $\andword$ over trace
languages are associative, the trees of the form $\OP(\att_1,
\OP(\att_2, \att_3))$, $\OP(\OP(\att_1,\att_2), \att_3)$, and
$\OP(\att_1, \att_2, \att_3)$ are all equivalent, when $\OP$ ranges
over $\{\ORop, \SANDop, \ANDop\}$.
As a consequence, we may sometime assume that nodes with such
operators are binary.
\end{remark}

We now introduce some notations for particular \adts to ease our
exposition and provide some examples of \adts with their
corresponding trace property.

We define a family of \adts of the form $\adtge, \adtle,
\adteq$, where $\ell$ is a non-zero natural. We let $\adtge \egdef
\SANDop(\true,...,\true)$ where $\true$ occurs $\ell$ times;
$\adtle\egdef\Cop(\true, \adtge)$; and $\adteq\egdef\Cop(\adtge,
\adtge[\ell+1])$.  It is easy to establish that \adt $\adtge$
(resp.\ $\adtle$, $\adteq$) denotes the set of traces of length at
least (resp.\ at most, exactly) $\ell$.
%
%
We also consider particular \adts and constructs for them.
\begin{itemize}
\item $\etrue \egdef\ORop(\epsilonadt,\true)$,
\item $\coatt \egdef \Cop(\etrue,\att)$,\label{def:coatt}
and $\andatt{\att_1}{\att_2} \egdef \coatt[\ORop(\coatt[\att_1],\coatt[\att_2])]$,
\item $\allattall\egdef \SANDop(\etrue,\att,\etrue)$, $\allatt\egdef \SANDop(\etrue,\att)$ and $\attall\egdef \SANDop(\att,\etrue)$.
\item Given a formula $\formula$ over $\Prop$, we let
$\strict{\formula}:=\Cop(\formula, \adtge[2])$. 
\end{itemize}

Based on these notations, we develop further examples.

\begin{example}
  \label{ex:adtsexamples}
  \begin{itemize}
  \item $\semword{\etrue}=\Traces$;
  \item $\semword{\coatt}=\Traces\setminus \semword{\att}$;
  \item $\semword{\andatt{\att_1}{\att_2}}=\semword{\att_1} \inter
    \semword{\att_2}$;
    \item $\semword{\strict{\formula}}$ is set of one-length
      traces whose unique valuation satisfies $\formula$; in
      particular, when a valuation $\val$ is understood as a
      formula, namely formula $\bigwedge_{\prop\in\val} \prop \land
      \bigwedge_{\prop \not \in \val} \lnot \prop$, the \adt
      $\strict{\val}\egdef\Cop(\val, \adtge[2])$ is such that
      $\semword{\strict{\val}}=\{\val\}$;

    \item For a valuation $\val$,
      \begin{itemize}
        \item
          $\semword{\allattall[\strict{\val}]}=\semword{\allattall[\val]}=\semword{\attall[\val]}=\Traces\val\Traces$;
          \item 
      $\semword{\allatt[\strict{\val}]}=\semword{\allatt[\val]}=\semword{\val}=\Traces\val$;
    \item also, $\semword{\attall[\strict{\val}]}=\val\Traces$.
      \end{itemize}
\end{itemize}
\end{example}

 \begin{example}
\label{ex:sandeach} 
 Note that $\ANDop$ cannot be seen as a kind of relaxation of $\SANDop$. For the set of propositions $\{\prop\}$,  if we consider the formula $\prop$ as a leaf, $\semword{\strict{\prop}}=\{\prop\}$ and $\semword{\strict{\lnot\prop}}=\{\emptyset\}$. Thus $\SANDop(\strict{ \lnot\prop},\strict{\prop})=\{\trace \}$ with trace $\trace=\emptyset\prop$. However $\ANDop(\strict{ \lnot\prop},\strict{\prop})=\emptyset$. Let us notice that the construction of this example uses the $\Cop$ operator (hidden in $\strict{\prop}$ and $\strict{\neg \prop}$).
\end{example}       
\begin{example}    
\label{ex:introductory}
We come back to the situation of our introductory example (\Cref{sec-example}). First, we discuss the formal semantics of the informal tree in \cref{fig-adt-countermeasure}. To do so, we propose the following set of propositions: $\Prop=\{E,S_1,S_2,G\}$ where $E$ holds when the thief is entering the building, $S_1$ (resp. $S_2$) holds when the first (resp. second) safe is open, and $G$ is true if a guard is in the building. The situation can be described by the following \adt: $\att_{ex_1}=\SANDop(E, \Cop(S_1\land S_2, \attall[G]))$, represented in \Cref{fig-adt-formalcountermeasure}, where we distinguish \SANDop with a curved line and \Cop with a dashed line. We have $\semword{\att_{ex_1}}=\{\val_1...\val_n\in \Traces ~:~ \val_n \models E\} \concat \big(\{\val_1...\val_n\in \Traces ~:~ \val_n \models S_1 \land S_2\} \setminus \{\val_1...\val_n\in \Traces ~:~ \exists i$ such that $ \val_i  \models G\}\big)$. If we write $\gamma_\varphi=\{\val \in 2^\Prop:\val \models \varphi\}$, we have $\semword{\att_{ex_1}}=\Traces \gamma_E \cdot (\Traces \gamma_{S_1 \wedge S_2} \setminus \Traces \gamma_G  \Traces)$.
In other words, we want all traces where $E$ holds at some point and, after it, $G$ cannot be true and finish by a valuation where $S_1\land S_2$ holds.

In order to illustrate the nesting  of countermeasures, we now allow the thief to disguise himself as an employee (assuming that when disguised, the guard does not identify him as a thief). To do so, we extended the set of propositions: $\Prop'=\{E,S_1,S_2,G,D\}$, where $D$ holds when the thief is disguised. The situation is now described by the following \adt: $\att_{ex_2}=\SANDop(E, \Cop(S_1\land S_2,\allattall[\Cop(G,D)]))$. The semantics for $\att_{ex_2}$ is all traces where $E$ holds at some point and, after it, $G$ cannot be true, except if $D$ holds at the same time, and finish by a valuation where $S_1\land S_2$ holds. A representation of $\att_{ex_2}$ can be found in \Cref{app:fig-countermeasureofcountermeasure} 
\end{example}

\label{sec-subclasses}

We now stratify the set \ADT of \adts
according to their \emph{countermeasure-depth} that denotes the
maximum number of nested countermeasures.

\begin{definition}
The \emph{countermeasure-depth} of an \adt $\att$, written $\counterdepth(\att)$, is 
inductively defined by:
\begin{itemize}
    \item $\counterdepth(\epsilonadt)\egdef\counterdepth(\formula)=0$;
    \item $\counterdepth(\OP(\att_1, ..., \att_n))\egdef\max\{\counterdepth(\att_1), ..., \counterdepth(\att_n)\}$ for every $\OP\in \{\ORop, \SANDop, \ANDop\}$;
    \item $\counterdepth(\Cop(\att_1, \att_2))\egdef\max\{\counterdepth(\att_1), \counterdepth(\att_2)+1\}$
\end{itemize}
\end{definition}
We let $\ADTk\egdef\{\att \in \ADT~:~ \counterdepth(\att)\leq k\}$ be the set of \adts with countermeasure-depth at most $k$.
Clearly $\ADTk[0] \subseteq \ADTk[1] \subseteq \ldots \ADTk \subseteq
\ADTk[k+1] \subseteq \ldots$, and  
$\ADT=\underset{k\in\setn}{\bigcup}\ADTk$.
\begin{example}
  \label{ex:counterdepth} We list a couple of
examples. $\cdepth(\coatt)=1+\cdepth(\att)$;
$\cdepth(\andatt{\att_1}{\att_2})=2+\max\{\cdepth(\att_1),\cdepth(\att_2)\}$;
$\cdepth(\Cop(\Cop(\formula_1,\formula_2), \formula_3))= 1$ while
$\cdepth(\Cop(\formula_3, \Cop(\formula_2,\formula_3)))= 2$;
$\cdepth(\adtle)=\cdepth(\adteq)=1$ while $\cdepth(\adtge)=0$;
$\cdepth(\strict{\formula})=1$; In \Cref{ex:introductory},
$\cdepth(\att_{ex_1})=1$ and $\cdepth(\att_{ex_2})=2$.
%
    

%

  \label{ex:quelADT} 
Also, $\counterdepth(\adtge)=0$, $\counterdepth(\adtle)=\counterdepth(\adteq)=1$, so that $\adtge\in \ADTk[0]$;
$\adtle$ and $\adteq \in \ADTk[1]$. Moreover, $\att_{ex_1}\in \ADTk[1]$ and $\att_{ex_2}\in \ADTk[2]$.
\end{example}

%

We say that a language $\wordlang$ is \emph{$\ADTk$-definable} (resp.\ $\ADT$-definable), written
$\wordlang \in \ADTk$ (resp.\ $\ADT$), whenever $\semword{\att} = \wordlang$, for some $\att \in \ADTk$ (resp.\ for some $k$).

It can be established that non-empty \adts in $\ADTk[1]$ enjoy small
traces, i.e. smaller than the size of the tree.

\begin{theorem}[Small model property for \text{$\ADTk[1]$}]
\label{boundedsizetracesADT1}
An \adt $\att\in \ADTk[1]$ is non-empty if, and only if,
there is a trace $\trace \in \semword{\att}$ with
$\sizetrace{\trace} \leq \sizeatt{\att}$.
\end{theorem}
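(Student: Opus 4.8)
My plan is to prove the statement by structural induction on $\att\in\ADTk[1]$, after strengthening it so that it survives the inductive step. The strengthened claim conditions on an arbitrary trace $u$: \emph{if $\semword{\att}$ contains a trace having $u$ as a prefix, then it contains such a trace of length at most $\sizetrace{u}+\sizeatt{\att}$} (the theorem is the case $u=\eword$). The atomic cases ($\epsilonadt$ and a propositional leaf $\formula$) are immediate, and $\ORop$ is trivial since its semantics is a union and $\sizeatt{\cdot}$ dominates every summand. The first real case is $\SANDop(\att_1,\dots,\att_n)$: from a witness $w=v_1\concat\cdots\concat v_n$ with $v_i\in\semword{\att_i}$ and $u\prefix w$, let $j$ be the factor inside which $u$ ends. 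The factors $v_1,\dots,v_{j-1}$ spell a prefix of $u$, so they are kept untouched -- their total length is already accounted for by $\sizetrace{u}$ -- while $v_j$ is shortened by the induction hypothesis applied with the residual prefix $u'\prefix v_j$, and $v_{j+1},\dots,v_n$ are shortened by the induction hypothesis with empty prefix. The concatenation lies in $\semword{\SANDop(\att_1,\dots,\att_n)}$, still extends $u$, and has length at most $\sizetrace{u}+\sum_{i\ge j}\sizeatt{\att_i}\le\sizetrace{u}+\sizeatt{\att}$.

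For the remaining two operators I would first prove a structural lemma on $\ADTk[0]$: for every $\att'\in\ADTk[0]$ the language $\semword{\att'}$ is a finite union of \emph{pattern languages} $\alphabet^*\gamma_1\alphabet^*\gamma_2\cdots\alphabet^*\gamma_m$ (with $\gamma_i\subseteq\alphabet$), possibly together with $\{\eword\}$, and every occurring pattern has length $m\le\sizeatt{\att'}$; this is a quick induction (a leaf gives $m=1$, $\ORop$ takes unions, $\SANDop$ concatenates patterns so lengths add, and the ``each'' operator interleaves patterns, again with lengths adding). This has two useful consequences: $\semword{\att'}$ is closed under duplicating a letter, so its complement is closed under deleting one of two equal adjacent letters; and a non-empty pattern of length $m$ already contains a trace of length $m$. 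We need this because if $\att=\Cop(\att_1,\att_2)\in\ADTk[1]$ then necessarily $\att_2\in\ADTk[0]$.

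The hard cases are $\Cop$ and $\ANDop$, and I expect $\ANDop$ to be the main obstacle. For $\Cop(\att_1,\att_2)$ the induction hypothesis gives a short $w_1\in\semword{\att_1}\cap u\alphabet^*$; if $w_1\notin\semword{\att_2}$ we are done, and otherwise we have to repair $w_1$ using a (possibly much longer) genuine witness $w\in\semword{\att_1}\setminus\semword{\att_2}$, shrinking $w$ while keeping it out of each of the finitely many patterns describing $\semword{\att_2}$ and inside $\semword{\att_1}$ -- here the bounded pattern length from the structural lemma is what limits how many positions of $w$ are genuinely needed to dodge $\semword{\att_2}$. For $\ANDop(\att_1,\dots,\att_n)$ (binary by associativity), writing $\semword{\ANDop(\att_1,\att_2)}=(\semword{\att_1}\cap\semword{\att_2}\concat\alphabet^*)\cup(\semword{\att_2}\cap\semword{\att_1}\concat\alphabet^*)$, one is handed $w\in\semword{\att_1}$ with a prefix $p\in\semword{\att_2}$ and must produce a \emph{short} word of $\semword{\att_1}$ with a short prefix in $\semword{\att_2}$: the naive move of replacing $p$ by the shortest word of $\semword{\att_2}$ fails because that word need not extend inside $\semword{\att_1}$, so $p$ and $w$ must be shrunk simultaneously, which forces a careful, prefix-aware bookkeeping. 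Throughout, the point that makes all of this delicate is that the target bound $\sizeatt{\att}$ is the \emph{sum} of the sizes of the subtrees, not the product one would get from the obvious product-automaton construction; this is why, I expect, the authors call it ``not so trivial''.
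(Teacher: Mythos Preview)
Your prefix-strengthened induction hypothesis is not strong enough to close the $\Cop$ and $\ANDop$ cases, and you already sense this: you write ``we have to repair $w_1$'' and ``$p$ and $w$ must be shrunk simultaneously'' without actually saying how. The concrete obstruction is that your IH, applied to $\att_1$, produces \emph{some} short $w_1\in\semword{\att_1}$ with $u\prefix w_1$, but records no relationship between $w_1$ and the original witness $w$; in particular $w_1$ need not be a subword of $w$. For $\Cop(\att_1,\att_2)$ you then propose to go back to $w$ and ``shrink $w$ while keeping it \ldots\ inside $\semword{\att_1}$'', but your IH gives you no such shrinking mechanism --- it only asserts existence of \emph{a} short trace, not that $w$ itself can be shortened along subwords. (Your intuition that ``the bounded pattern length limits how many positions of $w$ are needed to dodge $\semword{\att_2}$'' is in fact backwards: your own structural lemma shows that staying \emph{out} of a union of pattern languages is preserved under deleting any non-final letter, so dodging is free; the real constraint is staying \emph{in} $\semword{\att_1}$, and for that you have nothing.) The same issue bites $\ANDop$: once you shorten $p\in\semword{\att_2}$ to some $p'$, nothing guarantees that $p'$ extends inside $\semword{\att_1}$, as you note yourself.

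The paper's fix is to strengthen the invariant in exactly this direction. It introduces the \emph{lift} order $\trace'\lift\trace$ ($\trace'$ is a subsequence of $\trace$ sharing the last letter) and proves that $\semword{\att'}$ is $\lift$-upward closed for every $\att'\in\ADTk[0]$ --- this is your structural lemma pushed slightly further. The inductive object is then a finite set $\gen{\att}$ of short \emph{generators}, carrying a robustness clause: for each $w\in\semword{\att}$ there is $g=\val_1\ldots\val_n\in\gen{\att}$ with $w=w_1\val_1\ldots w_n\val_n$, and replacing each block $w_i$ by \emph{any} of its subwords keeps the trace in $\semword{\att}$. With this, $\Cop$ becomes trivial rather than hard: the generator $g$ of $w$ for $\att_1$ satisfies $g\lift w$, and since $w\notin\semword{\att_2}$ and $\semword{\att_2}$ is $\lift$-upward closed, $g\notin\semword{\att_2}$ as well. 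The genuine work is $\ANDop$, handled by taking the union (shuffle) of the positions in $w$ marked by a generator of $w$ for $\att_1$ and by a generator of the prefix $p$ for $\att_2$: the robustness clause on each side guarantees that the merged subword is still in $\semword{\att_1}$ and still has a prefix in $\semword{\att_2}$. Your prefix parameter $u$ plays no role in the paper's argument; the right bookkeeping is ``which positions of $w$ are kept'', not ``which prefix is preserved''.
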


We here only sketch the proof, whose details can be found
in \Cref{app:smallmodelproperty}. The
technique we employ consists in defining a slight variant of the
classic relation of super-word in language theory, that we call the  \emph{lift} binary
relation. We prove that if $\att \in \ADTk[1]$, then there exists a finite set of generators, denoted $\gen{\att}$, which is sufficient to describe $\semword{\att}$ through the lift relation. Next, we can prove
that the traces in $\gen{\att}$ have size bounded by the number
of leaves of $\att$. Notice that the result also holds for $\ADTk[0]$,
since $\ADTk[0]\subseteq\ADTk[1]$. 


\section{\Adts, Star-free Languages, and First-Order Logic}
\label{sec-expressiveness}

We prove that \adts coincide with star-free languages and first order formulas.

\subsection{Reminders on Star-free Languages and First-Order Logic}

The class of \emph{star-free languages}
introduced
by \cite{mcnaughton1971counter,eilenberg1974automata,pin1984varietes}
(over alphabet $\alphabet$) is obtained from the
finite languages (or alternatively languages consisting of a single
one-length word in $\alphabet$) by finitely many applications of
Boolean operations ($\union$, $\inter$ and $\complement*$ for the
complement) and the concatenation product (see \cite[Chapter
7]{rozenberg2012handbook}). Alternatively, one characterizes star-free
languages by first considering \eres\ -- that are regular expressions
augmented with intersection and complementation, and second by
restricting to \emph{\seres} (\shseres, for short) that are \eres with
no Kleene-star operator.
Regarding computational complexity aspects, we recall the following
the subclass of \shseres of \eres. The word membership problem (\ie
whether a given word belongs to the language denoted by a \shsere) is
in \PTIME \cite[Theorem 2]{kupferman2002improved}, while the
non-emptiness problem (\ie is the denoted language empty?) and the
equivalence problem (\ie do two \shseres denote the same language?)
are hard, both non-elementary \cite[p.\ 162]{stockmeyer1974complexity}.

We now recall classical results on the first-order
logic on finite words \FO (see details in \cite[Chapter
29]{DBLP:books/ems/21/P2021}). The signature of \FO,
say for words over an alphabet $\alphabet$, is composed of a unary
predicate $a(x)$ for each $a \in \alphabet$, whose meaning is the ``letter
at position $x$ of the word is $a$'', and the binary predicate $x<y$
that states ``position $x$ is strictly before position $y$ in the
word''. For a \FO-formula, we define its \emph{size}  $\sizeformula{\foformula}$ as the size of the expression considered as a word.
A language $\wordlang$ is \FO-\emph{definable} whenever there exists a
\FO-formula $\foformula$ such that a word $\word \in \wordlang$ if,
and only if, $\word$ is a model of $\foformula$. Similarly, we say that an \adt $\att$ is \FO-definable if $\semword{\att}$ is \FO-definable.
It is well-known that \FO-definable languages coincide with star-free languages   \cite{mcnaughton1971counter,thomas1982classifying,PERRIN1986393}.


Also, for a fine-grained inspection of \FO, let us denote by
$\foalt[\ell]$ (resp. $\foaltdual[\ell]$) the fragments of \FO
consisting of formulas with at most $\ell$ alternation of $\exists$
and $\forall$ quantifier blocks, starting with $\exists$
(resp. $\forall$). 
The folklore results
regarding satisfiability
of \FO-formulas \cite{stockmeyer1974complexity,meyer2006weak} are:
\begin{inparaenum}[(a)]
\item The satisfiability problem for \FO is non-elementary;
\item The satisfiability for $\foalt[\ell]$ is in $(\ell-1)$-\EXPSPACE.\footnote{with the convention that $0$-\EXPSPACE=\PSPACE.}
\end{inparaenum}
%
%
We are not aware of any result that establishes a tight lower bound
complexity for the satisfiability problem on the $\foalt[\ell]$
fragments of \FO.

We lastly recall the definition of the \emph{dot-depth hierarchy} of star-free languages:
level $0$ of this hierarchy is $\dotdepth[0]\egdef\{\lang \subseteq 2^\alphabettraces : \lang$ is finite
or co-finite\}, and evel $\ell$ is
$\dotdepth \egdef\{\lang \subseteq 2^\alphabettraces
: \lang$ is a Boolean combination of languages of the form
$\lang_1\concat...\concat\lang_n$ where $\lang_1,
..., \lang_n \in \dotdepth[\ell-1]\}$. The dot-depth hierarchy has a tight connection with \FO fragments \cite{thomas1982classifying}: for every $\ell>0$, $\foalt[\ell] \subseteq \dotdepth \subseteq \foalt[\ell+1]$.


\subsection{Expressiveness of \Adts }

The first result of this section consist in showing that \adts and \seres share the same expressiveness.

\begin{theorem}
\label{theo:starfreeequalsADT}
A language $\wordlang$ is star-free if, and only if, $\wordlang$ is $\ADT$-definable.
\end{theorem}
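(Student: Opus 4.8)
The plan is to rely on the classical fact recalled above that the star-free languages over $\alphabet$ are precisely those obtained from the finite languages by finitely many unions, intersections, complementations and concatenations — equivalently, those denoted by the \shseres — and to match each of these operations with an \adt construct. The theorem then follows from a two-way translation, carried out by structural induction in each direction; although one can read the translation off syntactically between \adts and \shseres, it is cleanest to argue semantically via closure properties.

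For the implication ``$\ADT$-definable $\Rightarrow$ star-free'', I would prove by induction on the structure of $\att\in\ADT$ that $\semword{\att}$ is star-free. For the base cases, $\semword{\epsilonadt}=\{\eword\}$ is finite hence star-free, and for a leaf formula $\formula$ one has $\semword{\formula}=\alphabet^{*}\concat F_\formula$ with $F_\formula\egdef\{\val\in\alphabet : \val\models\formula\}$ finite, so $\semword{\formula}$ is the concatenation of the star-free language $\alphabet^{*}=\overline{\emptyset}$ with a finite language. For the inductive step, \Cref{def:semword} shows that $\ORop$, $\SANDop$ and $\Cop$ correspond respectively to union, concatenation and the relative complement $\semword{\att_1}\setminus\semword{\att_2}=\semword{\att_1}\inter\overline{\semword{\att_2}}$, each of which preserves star-freeness; and $\ANDop$ corresponds to $\andword$, whose unfolding from \Cref{def:andword}, namely $\bigcup_{i}\bigl(\semword{\att_i}\inter\bigcap_{j\neq i}(\semword{\att_j}\concat\alphabet^{*})\bigr)$, is again a finite combination of unions, intersections, concatenations and the star-free language $\alphabet^{*}$. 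Hence $\semword{\att}$ is star-free.

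For the converse ``star-free $\Rightarrow$ $\ADT$-definable'', I would show that the class of $\ADT$-definable languages contains every finite language and is closed under union, intersection, complement (relative to $\alphabet^{*}$) and concatenation, and then conclude by induction on the way a given star-free language is built. For finiteness: $\emptyset=\semword{\false}$, $\{\eword\}=\semword{\epsilonadt}$, a nonempty one-word language $\{\val_1\cdots\val_n\}=\semword{\SANDop(\strict{\val_1},\dots,\strict{\val_n})}$ using $\semword{\strict{\val}}=\{\val\}$ from \Cref{ex:adtsexamples}, and an arbitrary finite language is then a finite $\ORop$ of such singletons. For closure, the required constructs are exactly the shorthands of \Cref{ex:adtsexamples}: union is $\ORop$, concatenation is $\SANDop$, intersection is $\andatt{\att_1}{\att_2}$ with $\semword{\andatt{\att_1}{\att_2}}=\semword{\att_1}\inter\semword{\att_2}$, and complement is $\coatt[\att]$ with $\semword{\coatt[\att]}=\Traces\setminus\semword{\att}$. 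Composing these along the construction tree of a given star-free expression yields an \adt with the same semantics, which proves the equivalence.

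The only point requiring a little care is the $\ANDop$ case of the first implication: unlike the other three operators, $\andword$ is not one of the textbook regular operations, so one must check that its defining expression stays within the star-free class — which it does precisely because $\alphabet^{*}=\overline{\emptyset}$ is itself star-free and star-free languages are closed under finite intersection and concatenation. Everything else is routine compositional bookkeeping, made short by the fact that all the needed semantic identities were already recorded in \Cref{ex:adtsexamples}.
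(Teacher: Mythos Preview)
Your proposal is correct and follows essentially the same approach as the paper: a structural induction on \adts for the forward direction (with the $\ANDop$ case handled via the unfolding of $\andword$), and an induction on the star-free build-up for the converse using the constructs $\strict{\val}$, $\ORop$, $\SANDop$, $\coatt[\cdot]$ and $\andatt{\cdot}{\cdot}$ exactly as in \Cref{ex:adtsexamples}. Your treatment of the leaf $\formula$ as $\alphabet^{*}\concat F_\formula$ is in fact slightly more explicit than the paper's own sketch.
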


%
For the ``only if'' direction of \Cref{theo:starfreeequalsADT}, we
reason by induction on the class of star-free languages. For a
language of the form $\{\val\}$ where $\val \in \alphabettraces$ one
can take the \adt $\strict{\val}$ (that is $\Cop(\val, \adtge[2])$).  Now we
can inductively build adequate \adts for compound star-free languages
by noticing that language operations of union and concatenation are
captured by \adts operators \ORop and \SANDop respectively, while complementation 
and intersection are obtained from the $\coatt[.]$ and
$\andatt{.}{.}$ as formalized in \Cref{ex:adtsexamples}. 
%
%
One easily verifies that 
that the size of the \adt corresponding to an \shsere $\reg$ is in $O(\sizereg{\reg})$, where $\sizereg{\reg}$ denotes the size (number of characters) of $\reg$.


\renewcommand{\ere}{\shere\xspace}
\renewcommand{\eres}{\sheres\xspace}
\renewcommand{\Eres}{\shEres\xspace}

For the ``if'' direction of \Cref{theo:starfreeequalsADT}, it is easy
to translate an \adt into an \shsere: the leaf $\epsilonadt$
translates into $\epsilon$, a leaf \adt $\formula$
translates into $\bigcup_{\val \models \formula} \val$ -- notice that this translation is exponential. For
non-leaf \adts, since every operator occurring in the \adt has its language-theoretic counterpart
the translation goes smoothly. However, the translation is exponential because of the \adt operator $\ANDop$, see \Cref{def:andword}.  \\

We now dig into the \ADT-hierarchy induced by the countermeasure-depth
and compare it with the \FO fragments $\foalt[\ell]$ and $\foaltdual[\ell]$.

We first step design a translation from \ADT into \FO, inductively over \adts. The translation of an \adt $\att$ is written $\foformula_\att$.
%
%
For the base cases of \adts $\epsilonadt$ and $\formula$, and we let: $\foformula_\epsilonadt\egdef \forall x \false$ and $\foformula_\formula\egdef \exists x (\forall y \lnot (x<y) \land \bigvee_{\val \models \formula} \val(x))$.


Now, regarding compound \adts, and not surprisingly, operator $\ORop$
is reflected by the logical disjunction: $\foformula_{\ORop(\att_1, \att_2)}\egdef \foformula_{\att_1} \lor \foformula_{\att_2}$, while operator $\Cop$ is
reflected by means of the logical conjunction with the negated second
argument: $\foformula_{\Cop(\att_1, \att_2)}\egdef \foformula_{\att_1} \land \lnot \foformula_{\att_2}$.
     On the contrary, the two remaining operators \SANDop and \ANDop
     require to split the trace into pieces, which can be captured by
     the folklore operation of \emph{left (resp.\ right) position
     relativizations} of \FO-formulas w.r.t.\ a
     position \cite[Proposition 2.1]{PERRIN1986393} (see also formulas
     of the form $\phi^{[x,y]}$
     in \cite{schiering1996counter}). Formally, given a position $x$
     in the trace $\trace$ and an \FO-formula $\foformula$, we define
     formula $\leftrelativ$ (resp.\ $\rightrelativ$) that holds of
     $\trace$ if the prefix (resp.\ suffix) of $\trace$ up to (resp.\
     from) position $x$ satisfies $\foformula$, as follows. For ${\bowtie} \in \{\leq,>\}$, we let:
 \begin{multicols}{2}

 $\somerelativ[\val(y)]=\val(y)$

 $\somerelativ[(y<z)]=(y<z)$

  $\somerelativ[\prop(y)]=\prop(y)$

  $\somerelativ[(\foformula \lor \foformula')]=\somerelativ[\foformula]\lor \somerelativ[\foformula']$ 

  $\somerelativ[(\lnot \foformula)]=\lnot\somerelativ$

  $\somerelativ[(\exists y \,\foformula)]=\exists y~(y\bowtie x \land \somerelativ)$












     \end{multicols}
    
Additionally, we write $\leftrelativ[\foformula][0]$ and
$\rightrelativ[\foformula][0]$ as the formulas obtained from
$\leftrelativ$ and $\rightrelativ$ by replacing every occurrence of
expressions $y\leq x$ and $y>x$ by $\false$ and $\true$ respectively.
%
%
\begin{remark}
\label{foaltrelativization}
For every formula $\foformula \in \foalt[\ell]$, we also have $\leftrelativ[\foformula], \rightrelativ[\foformula] \in \foalt[\ell]$.
\end{remark}

We can now complete the translation from \ADT into \FO by letting
(w.l.o.g., by \Cref{rem:associative}, we can consider binary $\SANDop$ and $\ANDop$):

$\foformula_{\SANDop(\att_1, \att_2)}\egdef\exists x
    [\leftrelativ[\foformula_{\att_1}] \land \rightrelativ[\foformula_{\att_2}]] \lor
    (\leftrelativ[\foformula_{\att_1}][0] \land \foformula_{\att_2})$

    $\foformula_{\ANDop(\att_1, \att_2)}\egdef\exists x [(\leftrelativ[\foformula_{\att_1}] \land \foformula_{\att_2}) \lor (\leftrelativ[\foformula_{\att_2}] \land \foformula_{\att_1})] \lor
    (\leftrelativ[\foformula_{\att_1}][0] \land \foformula_{\att_2}) \lor
    (\leftrelativ[\foformula_{\att_2}][0] \land \foformula_{\att_1})$.\\

\begin{restatable}{lemma}{lemmaadttofo}
\label{lem:adttofo}\label{lem:sizefofromadt}
\begin{itemize}
\item For any $\trace \in \Traces$, $\trace \models \foformula_{\att}$ iff $\trace \in \semword{\att}$.
\item For any \adt $\att$, formula $\foformula_{\att}$ is of size exponential in $\sizeatt{\att}$. 
\end{itemize}
\end{restatable}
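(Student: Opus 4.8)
The plan is to prove both items by a single structural induction on the \adt $\att$, the size bound of item~2 being essentially a bookkeeping refinement of the correctness argument of item~1. The indispensable auxiliary fact for item~1 is the defining property of the relativization operators recalled above (cf.\ \cite[Proposition~2.1]{PERRIN1986393}): for every \FO-formula $\foformula$, every trace $\trace=\val_1\cdots\val_n$ and every position $i$ with $1\le i\le n$, one has $\trace\models\leftrelativ[\foformula]$ under the assignment $x\mapsto i$ iff $\val_1\cdots\val_i\models\foformula$, and likewise $\trace\models\rightrelativ[\foformula]$ under $x\mapsto i$ iff $\val_{i+1}\cdots\val_n\models\foformula$; finally the closed formula $\leftrelativ[\foformula][0]$ holds iff $\eword\models\foformula$. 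I would record this first, since each clause is a one-line induction on $\foformula$ matching the syntactic clauses defining relativization, and every interesting case below reduces to it.

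For item~1 I would then run the structural induction. The base cases are direct: $\foformula_{\epsilonadt}=\forall x\,\false$ holds of exactly the traces with no position, i.e.\ of $\eword$, matching $\semword{\epsilonadt}=\{\eword\}$; and $\foformula_{\formula}$ asserts that there is a last position carrying some valuation $\val\models\formula$, which holds of $\trace=\val_1\cdots\val_n$ iff $n\ge 1$ and $\val_n\models\formula$, i.e.\ iff $\trace\in\semword{\formula}$ (so that $\semword{\true}=\Tracesne$ and $\semword{\false}=\emptyset$ fall out). The cases $\ORop$ and $\Cop$ are immediate from the induction hypothesis together with \Cref{def:semword}, disjunction matching union and conjunction-with-a-negated-second-argument matching relativized complementation $\semword{\att_1}\setminus\semword{\att_2}$. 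For $\SANDop(\att_1,\att_2)$, a trace $\trace$ lies in $\semword{\att_1}\concat\semword{\att_2}$ iff it factors as $\trace=\trace_1\trace_2$ with $\trace_1\in\semword{\att_1}$, $\trace_2\in\semword{\att_2}$; if $\trace_1\neq\eword$ then, by the relativization fact and the induction hypothesis, $x\mapsto\sizetrace{\trace_1}$ witnesses the first disjunct $\exists x[\leftrelativ[\foformula_{\att_1}]\land\rightrelativ[\foformula_{\att_2}]]$, and conversely any witness $x\mapsto i$ of that disjunct yields the factorization with $\trace_1=\val_1\cdots\val_i$; the residual case $\trace_1=\eword$ is exactly what the second disjunct $\leftrelativ[\foformula_{\att_1}][0]\land\foformula_{\att_2}$ records, using $\leftrelativ[\foformula_{\att_1}][0]\equiv$ ``$\eword\in\semword{\att_1}$'' and that unrelativized $\foformula_{\att_2}$ expresses $\trace\in\semword{\att_2}$. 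For $\ANDop(\att_1,\att_2)$ I would unfold \Cref{def:andword}: $\trace\in\semword{\att_1}\andword\semword{\att_2}$ iff ($\trace$ has some prefix in $\semword{\att_1}$ and $\trace\in\semword{\att_2}$) or the symmetric statement; ``$\trace$ has some prefix in $\semword{\att_1}$'' is captured by $(\exists x\,\leftrelativ[\foformula_{\att_1}])\lor\leftrelativ[\foformula_{\att_1}][0]$ (a nonempty prefix, resp.\ the empty prefix), and since $x$ is not free in $\foformula_{\att_2}$, $\exists x(\leftrelativ[\foformula_{\att_1}]\land\foformula_{\att_2})$ is equivalent to $(\exists x\,\leftrelativ[\foformula_{\att_1}])\land\foformula_{\att_2}$; collecting the four disjuncts thereby obtained is literally $\foformula_{\ANDop(\att_1,\att_2)}$. (For $n$-ary $\SANDop$/$\ANDop$ we first rewrite to binary nodes via \Cref{rem:associative}.)

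For item~2 I would set $S(\att)\egdef\sizeformula{\foformula_{\att}}$ and bound it by the same induction. Leaves: $S(\epsilonadt)=O(1)$ and $S(\formula)=O(|\Prop|\cdot 2^{|\Prop|})$, the exponential factor being caused by the disjunction $\bigvee_{\val\models\formula}$ over satisfying valuations. Relativization is cheap: each quantifier of $\foformula$ only acquires an $O(1)$ guard, so $\sizeformula{\leftrelativ[\foformula]},\sizeformula{\rightrelativ[\foformula]}=O(\sizeformula{\foformula})$, and passing to the $[0]$-variants $\leftrelativ[\foformula][0]$, $\rightrelativ[\foformula][0]$ does not increase size. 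Hence $S(\ORop(\att_1,\att_2))=S(\Cop(\att_1,\att_2))=S(\att_1)+S(\att_2)+O(1)$ (purely additive), whereas for $\SANDop$ and $\ANDop$ each $\foformula_{\att_j}$ is copied a bounded number of times (some copies relativized), giving $S(\OP(\att_1,\att_2))\le c\,(S(\att_1)+S(\att_2))+O(1)$ for a fixed constant $c$. Unrolling this recurrence over the at most $\sizeatt{\att}$ internal nodes of $\att$, together with the fact that the leaf contributions are $2^{O(|\Prop|)}$ and hence $2^{O(\sizeatt{\att})}$ once propositions not occurring in $\att$ are discarded, yields $S(\att)=2^{O(\sizeatt{\att})}$, as claimed.

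The main obstacle is the $\SANDop$ and (especially) $\ANDop$ cases of item~1: the bookkeeping needed to match the purely combinatorial factorization and ``has-a-prefix-in'' conditions with the first-order encoding, given that the quantifier $\exists x$ can only name genuine positions $1,\dots,n$ — which is precisely why the separate $[0]$-disjuncts are present, to account for empty prefixes and for the empty trace itself — together with the preliminary need to fix, and verify by induction, the exact semantics of the relativization operators (the role of \cite{PERRIN1986393}). Everything else is a routine structural induction.
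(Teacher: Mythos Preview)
Your proposal is correct and follows essentially the same approach as the paper: a structural induction on $\att$ for both items, with the $\SANDop$/$\ANDop$ cases hinging on the relativization semantics and the separate $[0]$-disjuncts to handle empty prefixes, and the exponential size coming from the duplication of subformulas under $\SANDop$ and $\ANDop$. You simply flesh out details the paper leaves implicit---stating and verifying the relativization lemma explicitly, making the ``$x$ not free in $\foformula_{\att_2}$'' manipulation for $\ANDop$ explicit, and writing out the size recurrence rather than the paper's one-line remark that ``subformulas for subtrees may be duplicated''.
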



In the rest of the paper, we use mere inclusion symbol $\subseteq$
between subclasses of \ADT and subclasses of \FO, with the canonical
meaning regarding the denoted trace languages.

An accurate inspection of the translation
$\att \mapsto \foformula_\att$ entails that every
$\ADTk$-definable \adt can be equivalently represented by a
$\foalt[k+2]$-formula, namely $\ADTk \subseteq \foalt[k+2]$.  However,
we significantly refine this expressiveness upperbound for $\ADTk$.

\begin{restatable}{lemma}{lemmaadtfoimproved}
\label{lem:adtktofok+1}
\begin{enumerate}
\item\label{lem:adt0tofoalt2foalt2dual}  $\ADTk[0] \subseteq \foalt[2]\inter \foaltdual[2]$ -- with an effective translation.
\item\label{lem:itemadtktofok+1} For every $k > 0$, $\ADTk \subseteq \foalt[k+1]$ -- with an effective translation.
\end{enumerate}
\end{restatable}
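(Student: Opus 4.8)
The plan is to prove both items by induction on the structure of the \adt, but using a \emph{sharper} induction hypothesis than the bare statements. Concretely, I would prove simultaneously: (i) if $\cdepth(\att)=0$ then $\foformula_\att$ is equivalent to a formula in $\foalt[2]\inter\foaltdual[2]$; and (ii) if $\cdepth(\att)=k>0$ then $\foformula_\att$ is equivalent to a formula in $\foalt[k+1]$, \emph{and moreover}, if the root of $\att$ is not a $\Cop$ node, then $\foformula_\att$ is in fact equivalent to a formula in $\foalt[k+1]\inter\foaltdual[k+1]$. The reason for tracking this extra "Boolean-closed at level $k+1$'' information is that the $\Cop$ node produces a negation, and to stay at level $k+1$ after negating we need the argument to already be \emph{both} $\foalt[k+1]$ and $\foaltdual[k+1]$, or at least $\foaltdual[k+1]$; so the induction must carry enough to feed the $\Cop$ case. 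For the base cases, $\foformula_{\epsilonadt}=\forall x\,\false$ is quantifier-free up to one $\forall$, hence in $\foalt[1]\inter\foaltdual[1]$, and $\foformula_\formula=\exists x(\forall y\,\lnot(x<y)\land\bigvee_{\val\models\formula}\val(x))$ is in $\foalt[2]$ but also expressible in $\foaltdual[2]$ (push the maximality condition as $\forall y(y\le x)$ — note $\foformula_\formula$ as written is an $\exists\forall$ formula, and its equivalent "last letter satisfies $\formula$'' can also be written $\forall$-first), placing it in $\foalt[2]\inter\foaltdual[2]$, which is consistent with $\ADTk[0]$.

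For the inductive step I would go operator by operator. The $\ORop$ case is immediate: disjunction does not increase quantifier-alternation depth and preserves membership in $\foalt[m]\inter\foaltdual[m]$ when both arguments lie there (the $\foaltdual$ part of a disjunction needs a small argument — a disjunction of $\foaltdual[m]$ formulas is $\foaltdual[m]$ since $\foaltdual[m]$ is closed under disjunction). The $\Cop$ case is where the countermeasure-depth is "spent'': $\foformula_{\Cop(\att_1,\att_2)}=\foformula_{\att_1}\land\lnot\foformula_{\att_2}$, with $\cdepth(\Cop(\att_1,\att_2))=\max\{\cdepth(\att_1),\cdepth(\att_2)+1\}=:k$. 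By induction $\foformula_{\att_2}$ is in $\foalt[\cdepth(\att_2)+1]=\foalt[k']$ for some $k'\le k$, so $\lnot\foformula_{\att_2}\in\foaltdual[k']\subseteq\foaltdual[k]\subseteq\foalt[k+1]$; combined with $\foformula_{\att_1}\in\foalt[\max\{1,\cdepth(\att_1)+ \text{(0 or 1)}\}}]\subseteq\foalt[k+1]$, the conjunction is in $\foalt[k+1]$. (Here I would need the routine fact that a conjunction of a $\foalt[k+1]$ formula and a $\foaltdual[k]$ formula stays in $\foalt[k+1]$, since $\foaltdual[k]\subseteq\foalt[k+1]$ and $\foalt[k+1]$ is closed under conjunction.) The $\SANDop$ and $\ANDop$ cases are the delicate ones: here the translation prepends a single $\exists x$ in front of a Boolean combination of relativized subformulas $\leftrelativ[\foformula_{\att_i}]$, $\rightrelativ[\foformula_{\att_i}]$, $\leftrelativ[\foformula_{\att_i}][0]$, and plain $\foformula_{\att_i}$. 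By Remark \ref{foaltrelativization} relativization preserves the $\foalt[\ell]$ level; the key observation is that the leading $\exists x$ merges with (does not add an alternation in front of) the outermost $\exists$-block of the relativized subformulas — one must check that the relativized formulas can be brought to prenex form starting with $\exists$ without cost, using that each $\somerelativ[(\exists y\,\foformula)]=\exists y(y\bowtie x\land\somerelativ)$ keeps an existential outermost. This is exactly the mechanism of \cite{PERRIN1986393}, and I would cite Proposition 2.1 there and adapt the bookkeeping.

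The main obstacle I anticipate is precisely the $\SANDop$/$\ANDop$ alternation-counting: naively, $\exists x$ in front of a disjunction of formulas of the form $\leftrelativ[\foformula_{\att_1}]\land\rightrelativ[\foformula_{\att_2}]$ looks like it could cost an alternation if $\foformula_{\att_1}\in\foalt[m]$ starts with $\exists$ and $\foformula_{\att_2}\in\foalt[m]$ also starts with $\exists$ but the conjunction forces interleaving. The resolution is that $\foalt[m]$ is closed under conjunction (two $\Sigma_m$ formulas have a $\Sigma_m$ conjunction, after merging like quantifier blocks), so $\leftrelativ[\foformula_{\att_1}]\land\rightrelativ[\foformula_{\att_2}]\in\foalt[m]$ and prefixing $\exists x$ still gives $\foalt[m]$; combined with the $\leftrelativ[\cdot][0]\land\foformula_{\att_2}$ disjunct which is also in $\foalt[m]$, the whole thing is $\foalt[m]$, so $\SANDop$ and $\ANDop$ do \emph{not} increase the level at all — consistent with $\cdepth$ ignoring them. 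Thus for a $\Cop$-free \adt of countermeasure-depth $k$ we inductively stay in $\foalt[\ell_0]$ where $\ell_0$ is determined only by the leaves and the topmost $\Cop$; closing the induction then gives $\ADTk\subseteq\foalt[k+1]$ for $k>0$ and $\ADTk[0]\subseteq\foalt[2]\inter\foaltdual[2]$. I would also double-check the degenerate interaction at $k=0$: an $\att\in\ADTk[0]$ may still contain $\Cop$ nodes as long as both arguments are leaves (e.g. $\Cop(\formula_1,\formula_2)$ has $\cdepth=1$, so actually $\ADTk[0]$ contains \emph{no} $\Cop$ at all) — so item 1 only concerns $\epsilonadt,\formula,\ORop,\SANDop,\ANDop$ combinations, and the $\foalt[2]\inter\foaltdual[2]$ bound comes from the leaf formula $\foformula_\formula$ being the only source of alternation, with $\SANDop/\ANDop$ keeping us at level $2$ as argued.
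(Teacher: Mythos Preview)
Your overall plan for Item~2 is essentially the paper's, but you have misdiagnosed where the difficulty lies, and this leads to a genuine gap in Item~1.

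\medskip

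\textbf{The strengthening is unnecessary.} You propose carrying the extra hypothesis ``if the root is not $\Cop$ then $\foformula_\att\in\foalt[k+1]\cap\foaltdual[k+1]$'' so that negation in the $\Cop$ case does not cost a level. But look again at the definition $\cdepth(\Cop(\att_1,\att_2))=\max\{\cdepth(\att_1),\cdepth(\att_2)+1\}$: the \emph{asymmetry} already guarantees $\att_2\in\ADTk[k-1]$. For $k\ge 2$ the plain hypothesis gives $\foformula_{\att_2}\in\foalt[k]$, hence $\lnot\foformula_{\att_2}\in\foaltdual[k]\subseteq\foalt[k+1]$, and the conjunction with $\foformula_{\att_1}\in\foalt[k+1]$ stays in $\foalt[k+1]$. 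No $\foaltdual[k+1]$ information about $\att_2$ is ever needed. (And your strengthening is not obviously true anyway: for $\att=\SANDop(\Cop(\att_1,\att_2),\att_3)$ with $\cdepth(\att)=k$, your argument for $\foaltdual[k+1]$ would require exactly the step that fails below.)

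\medskip

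\textbf{The real gap is Item~1, $\foaltdual[2]$ side.} The only place where the plain induction for Item~2 does not close is $k=1$: there $\att_2\in\ADTk[0]$, and you need $\lnot\foformula_{\att_2}\in\foalt[2]$, i.e.\ $\foformula_{\att_2}\in\foaltdual[2]$. That is precisely Item~1. But your structural-induction argument for Item~1 does not establish the $\foaltdual[2]$ direction for $\SANDop$ and $\ANDop$. The translation $\foformula_{\SANDop(\att_1,\att_2)}=\exists x[\leftrelativ[\foformula_{\att_1}]\land\rightrelativ[\foformula_{\att_2}]]\lor(\cdots)$ begins with $\exists x$; your ``the leading $\exists$ merges with the outermost $\exists$-block'' works for $\foalt[2]$ but fails for $\foaltdual[2]$, where the subformulas begin with $\forall$. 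Prepending $\exists x$ to a $\foaltdual[2]$ formula gives $\foalt[3]$, not $\foaltdual[2]$, and relativization does not help. So ``$\SANDop/\ANDop$ keeping us at level $2$ as argued'' is unjustified on the $\Pi$ side.

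\medskip

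\textbf{How the paper closes this.} The paper does not attempt a compositional $\foaltdual[2]$ translation. Instead it first invokes a normal form (Remark~\ref{adt0shape}, which rests on the generator machinery behind Theorem~\ref{boundedsizetracesADT1}): every $\att\in\ADTk[0]$ is equivalent to an $\ORop$ of trees of the shape $\SANDop(\formula_1,\ldots,\formula_m)$. For such a tree the paper exhibits a bespoke $\foaltdual[2]$ formula
\[
\forall y\ \exists x_1\cdots\exists x_m\ \big(x_1<\cdots<x_m\ \land\ y\le x_m\ \land\ \hat\formula_1(x_1)\land\cdots\land\hat\formula_m(x_m)\big),
\]
where the outer $\forall y$ encodes ``$x_m$ is the last position''. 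This is the same $\forall$-first trick you noticed for the single leaf $\foformula_\formula$, but applied \emph{globally} after normalization rather than compositionally. Once Item~1 is secured this way, Item~2 goes through by the straightforward induction you describe (the paper phrases it as a double induction on $k$ and a depth parameter $d$, but the content is the same as your structural induction, without the strengthening).
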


Regarding \Cref{lem:adt0tofoalt2foalt2dual} of \Cref{lem:adtktofok+1}, it can be observed that,
whenever $\att\in \ADTk[0]$, the quantifiers $\forall$ and $\exists$ commute in 
$\foformula_\att$ (see \Cref{app:lem:adt0tofoalt2foalt2dual}).
Now, for \Cref{lem:itemadtktofok+1}, the
proof is conducted by induction over $k$  (see \Cref{app:lem:adtktofok+1}). We sketch here the
case $k>1$.
 First, remark that if $\att_1,
..., \att_n \in \ADTk[k-1]$ are $\foalt[k]$-definable, then $\ORop(\att_1,...,\att_n), \SANDop(\att_1,...,\att_n)$ and $\ANDop(\att_1,...,\att_n)$ remain
$\foalt[k]$-definable, as formulas are obtained from conjunctions or disjunctions of $\foalt[k]$-definable formulas. Moreover,
if $\att_1$ and $\att_2$ are $\foalt[k]$-definable, then
$\att=\Cop(\att_1,\att_2)$ is 
$\foalt[k+1]$-definable as a formula can be obtained from a boolan combination of two $\foalt[k]$-definable formulas. Finally, with $k$ still fixed, it
can be shown by induction over the size of an \adt $\att$ that, if $\att \in \ADTk[k]$, since all its countermeasures
operators are of the form $\Cop(\att_1,\att_2)$ where $\att_1\in\ADTk$ and $\att_2\in \ADTk[k-1]$, we have
that \adt $\att$ is also $\foalt[k+1]$-definable, which concludes. \\


We can also establish lowerbounds in the \ADT-hierarchy.
\begin{restatable}{lemma}{lemmadotdepth}
\label{lem:dotdepth}
\begin{enumerate}
 \item\label{lem:withdotsdepth}   $\dotdepth\subseteq\ADTk[2\ell+2]$, and therefore $\foalt[\ell]\subseteq\ADTk[2\ell+2]$





 \item\label{lem:fo1inADT0}
  $\foalt[1] \subseteq \ADTk[0]$ -- with an effective translation.
  \end{enumerate}
\end{restatable}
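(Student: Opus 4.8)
The two items ask for lower bounds in the \ADT-hierarchy: that the dot-depth level $\ell$ embeds into $\ADTk[2\ell+2]$, and that $\foalt[1]$ embeds into $\ADTk[0]$. I will treat item~\ref{lem:fo1inADT0} first since it is the cleaner base case and feeds into item~\ref{lem:withdotsdepth}.

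For item~\ref{lem:fo1inADT0}, the plan is to show that every $\foalt[1]$-definable language is $\ADTk[0]$-definable by translating purely existential first-order formulas into \adts built from $\ORop$, $\SANDop$, $\ANDop$ and leaf formulas only (no $\Cop$). The key observation is that a $\foalt[1]$ sentence is a disjunction of formulas of the shape $\exists x_1 \ldots \exists x_m\, \psi$ with $\psi$ quantifier-free; such a formula asserts the existence of $m$ positions with a prescribed order pattern and prescribed letters. The natural normal form is a finite disjunction over all orderings (with equalities) of the $x_i$'s, which turns each disjunct into a statement of the form ``the trace contains an ordered sequence of positions carrying letters satisfying certain constraints, possibly with gaps''. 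A block of consecutive constrained positions with ``somewhere after'' gaps between them is exactly what $\SANDop$ of leaves interleaved with $\etrue$ expresses (recall $\semword{\allattall[\val]}=\Traces\val\Traces$ and more generally $\semword{\SANDop(\etrue,\formula_1,\etrue,\ldots,\etrue,\formula_k,\etrue)}=\Traces\gamma_{\formula_1}\Traces\cdots\Traces\gamma_{\formula_k}\Traces$). When $\psi$ is a conjunction of several such ordered-pattern requirements that are not linearly ordered among themselves, we combine them with $\ANDop$, using that $\ANDop$ of leaf-patterns captures the ``each'' semantics of \Cref{def:andword}; when $\psi$ is a disjunction we use $\ORop$. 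Since none of $\ORop,\SANDop,\ANDop,\etrue$ introduces a $\Cop$ (and $\etrue=\ORop(\epsilonadt,\true)$ has countermeasure-depth $0$), the resulting \adt lies in $\ADTk[0]$. A small care point is handling negated letter-atoms $\lnot a(x)$ inside $\psi$: since the alphabet $\alphabet=2^{\Prop}$ is fixed and finite, $\lnot a(x)$ becomes a disjunction over the letters different from $a$, which is still a leaf formula, so no complementation operator is needed.

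For item~\ref{lem:withdotsdepth}, the plan is induction on $\ell$. The base case $\ell=0$: a finite or co-finite language. A finite language is a finite union of singletons $\{\val_1\cdots\val_n\}$, each equal to $\semword{\SANDop(\strict{\val_1},\ldots,\strict{\val_n})}$, hence in $\ADTk[1]\subseteq\ADTk[2]$ (each $\strict{\val}$ contributes countermeasure-depth $1$); a co-finite language is the complement $\coatt[\cdot]$ of a finite one, costing one more $\Cop$, landing in $\ADTk[2]=\ADTk[2\cdot 0+2]$. For the inductive step, a language in $\dotdepth$ is a Boolean combination of concatenations $\lang_1\concat\cdots\concat\lang_n$ with each $\lang_i\in\dotdepth[\ell-1]$; by induction each $\lang_i$ is $\ADTk[2(\ell-1)+2]=\ADTk[2\ell]$-definable, so the concatenation is $\ADTk[2\ell]$-definable via $\SANDop$, and the finite Boolean combination is obtained using $\ORop$ (union, free), $\andatt{\cdot}{\cdot}$ (intersection, adds $2$ to the depth, via \Cref{ex:counterdepth}) and $\coatt[\cdot]$ (complement, adds $1$); taking the worst case we land in $\ADTk[2\ell+2]$. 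The ``therefore'' clause $\foalt[\ell]\subseteq\ADTk[2\ell+2]$ then follows immediately from the recalled inclusion $\foalt[\ell]\subseteq\dotdepth$ of \cite{thomas1982classifying}.

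The main obstacle I anticipate is the bookkeeping in item~\ref{lem:fo1inADT0}: turning an arbitrary quantifier-free $\psi$ over the existentially quantified positions into a combination of $\SANDop$/$\ANDop$/$\ORop$ of leaf-patterns while verifying the countermeasure-depth stays $0$, and in particular checking that the $\ANDop$-``each'' semantics faithfully encodes partially-ordered position patterns (several ordered chains that must all be present in the same trace, sharing the last position $=m$ in the sense of \Cref{ex:andword}). One must either push $\psi$ into a disjunctive normal form over linear orders of the variables so that each disjunct is a single chain (making $\SANDop$ suffice and avoiding $\ANDop$ subtleties altogether), or argue carefully that the prefix-alignment in \Cref{def:andword} does not over- or under-generate. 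I would take the DNF-over-linear-orders route to keep the argument clean. The rest is routine, relying on the already-established semantic identities in \Cref{ex:adtsexamples} and the depth computations in \Cref{ex:counterdepth}.
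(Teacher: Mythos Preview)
Your plan matches the paper's proof closely: for item~\ref{lem:fo1inADT0} you take exactly the paper's route (put the quantifier-free matrix in DNF, enumerate all linearizations of the order literals in each conjunct, and render each resulting chain as a $\SANDop$-pattern of leaves with a trailing $\etrue$, all in $\ADTk[0]$), and for item~\ref{lem:withdotsdepth} your induction on $\ell$ with the $\SANDop$/Boolean-combination bookkeeping is the paper's argument. The only point to tighten is the ``worst case $+2$'' claim in the inductive step: the paper makes this precise by rewriting each DNF conjunct $M_1\cap\cdots\cap M_m$ via De~Morgan as $(M_1^c\cup\cdots\cup M_m^c)^c$, so that at most two nested $\coatt[\cdot]$ are introduced regardless of how many $M_i$ are already complements---whereas naively composing the binary $\andatt{\cdot}{\cdot}$ with prior $\coatt[\cdot]$'s (without simplifying double complements) could overshoot to $2\ell+3$ or more.
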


\Cref{lem:withdotsdepth} of \Cref{lem:dotdepth} is obtained by an induction of
$\ell$. Regarding \Cref{lem:fo1inADT0}, the
translation consists in putting the main
quantifier-free subformula of a $\foalt[1]$-formula in disjunctive
normal form, and to focus for each conjunct on the set of "ordering"
literals of the form $x<y$ or $\lnot (x<y)$ (leaving aside the other
literals of the form $\val(x)$ or $\lnot \val(x)$ for a while). Each
ordering literal naturally induces a partial order between the
variables. We expend this partial order constraint over the variables
as a disjunction of all its possible linearizations. For example, the
conjunct $x<y \land \lnot (y<z)$ is expended as the equivalent formula
$(x<y \land y=z) \lor (x<z \land z<y) \lor (z=x \land x<y) \lor
(z<x \land x<y)$.
Now, each disjunct of this new formula, together with the constraints $\val(x)$ (or $\lnot \val(x)$), can easily be specified by a \SANDop-rooted \adt (ie. the root is a \SANDop). The initial $\foalt[1]$-formula then is associated with the \ORop-rooted tree that gathers all the aforementioned \SANDop-rooted subtrees (see \Cref{adt0shape} in \Cref{app:smallmodelproperty}).  Notice that the 
translation may induce at least an exponential
blow-up.

The reciprocal of \Cref{lem:fo1inADT0} in \Cref{lem:dotdepth} is an
open question. Still, we have little hope that it holds because
$\foformula_{\formula}$ seems to require a property expressible  
in $\foalt[2]\setminus \foalt[1]$.

\section{Strictness of the ADT-Hierarchy}
\label{sec-ADThierarchy}

One can notice that the \adt $\adtle[2] \in \ADTk[1]$  defines the finite language of
traces of length at most $1$, while it can be established that languages arising from \adts in
$\ADTk[0]$ are necessarily infinite -- if inhabited by a non-empty
word (see \Cref{remark-adt0sem} of \Cref{app:smallmodelproperty}). Thus, we can easily deduce $\ADTk[0] \subsetneq \ADTk[1]$. This section aims at showing that the entire \ADT-hierarchy is strict:


\begin{proposition}
\label{expressiveness}
For every $k\in \setn$, $\ADTk \subsetneq \ADTk[k+1]$, even if $\Prop=\{\prop\}$.
\end{proposition}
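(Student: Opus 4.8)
The plan is to argue by contradiction, combining the logical characterisations of \Cref{sec-expressiveness} with the (external) fact that the dot-depth hierarchy is strict \cite{thomas1982classifying}, already over a two-letter alphabet. The idea: a collapse $\ADTk=\ADTk[k+1]$ at a single level propagates upward and forces $\ADT=\ADTk$; but then, via \Cref{theo:starfreeequalsADT} and the classical identity ``star-free $=$ \FO-definable'', the whole of \FO over $\alphabet=2^{\{\prop\}}$ would be captured by the fragment $\foalt[k+1]$ (by \Cref{lem:adtktofok+1}), which collapses the dot-depth hierarchy. The case $k=0$ is the observation recorded just before the statement ($\adtle[2]\in\ADTk[1]$ denotes a finite language containing a non-empty word, whereas every inhabited language in $\ADTk[0]$ is infinite), and this already holds for $\Prop=\{\prop\}$; so from now on fix $k\geq 1$ and assume, for contradiction, $\ADTk=\ADTk[k+1]$ as language classes.

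The key lemma I would prove is a \emph{propagation} property: \emph{if $\ADTk[j+1]=\ADTk[j]$ for some $j\geq 0$, then $\ADTk[j+2]=\ADTk[j+1]$}. The proof is a structural induction on $\att$, establishing that every $\att\in\ADTk[j+2]$ is equivalent to some \adt in $\ADTk[j+1]$. Leaves $\epsilonadt,\formula$ are already in $\ADTk[0]$. If $\att=\OP(\att_1,\dots,\att_n)$ with $\OP\in\{\ORop,\SANDop,\ANDop\}$, then every $\att_i\in\ADTk[j+2]$, so by induction each $\att_i$ is equivalent to some $\att_i'\in\ADTk[j+1]$, and by compositionality of $\semword{\cdot}$ the tree $\OP(\att_1',\dots,\att_n')$ is equivalent to $\att$ with $\cdepth(\OP(\att_1',\dots,\att_n'))=\max_i\cdepth(\att_i')\leq j+1$. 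If $\att=\Cop(\att_1,\att_2)$, then $\cdepth(\att_1)\leq j+2$ and $\cdepth(\att_2)\leq j+1$; by induction $\att_1$ is equivalent to some $\att_1'\in\ADTk[j+1]$, and by the hypothesis $\att_2$ is equivalent to some $\att_2'\in\ADTk[j]$; then $\Cop(\att_1',\att_2')$ is equivalent to $\att$ and $\cdepth(\Cop(\att_1',\att_2'))=\max\{\cdepth(\att_1'),\cdepth(\att_2')+1\}\leq j+1$.

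Applying this propagation lemma repeatedly, with base case $j=k$ (which is exactly our assumption $\ADTk[k+1]=\ADTk$), yields $\ADTk[m]=\ADTk$ for every $m\geq k$, hence $\ADT=\ADTk$ since $\ADT=\bigcup_m\ADTk[m]$. By \Cref{theo:starfreeequalsADT} the class $\ADT$ coincides with the star-free languages, which classically coincide with the \FO-definable ones; thus every \FO-definable language over $\alphabet=2^{\{\prop\}}$ is $\ADTk$-definable and therefore, by \Cref{lem:adtktofok+1}, $\foalt[k+1]$-definable. Then for every $\ell\geq k+1$ we obtain $\dotdepth[\ell]\subseteq\foalt[\ell+1]\subseteq\FO\subseteq\foalt[k+1]\subseteq\dotdepth[k+1]$ (using the inclusions $\foalt[\ell]\subseteq\dotdepth[\ell]$ at $\ell=k+1$ and $\dotdepth[\ell]\subseteq\foalt[\ell+1]$ from \cite{thomas1982classifying}), so $\dotdepth[k+1]=\dotdepth[k+2]$, contradicting the strictness of the dot-depth hierarchy over a two-letter alphabet.

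I expect the propagation lemma to be the only delicate point: it relies on the asymmetry of $\Cop$ — only its second argument counts ``one level lower'' in $\cdepth$ — so that substituting into $\att_2$ a minimal-depth equivalent suffices to lower the depth of the whole subtree, while for the symmetric operators a routine compositional replacement is enough; everything else is bookkeeping on top of \Cref{theo:starfreeequalsADT}, \Cref{lem:adtktofok+1}, the dot-depth/\FO correspondence, and the deep external input that the dot-depth hierarchy does not collapse. A more explicit alternative, closer in spirit to the introduction, would transport Thomas's concrete separating family of languages through \Cref{lem:dotdepth} and the contrapositive of \Cref{lem:adtktofok+1}; but getting strictness at \emph{every} level $k$ (rather than along a sparse subsequence of levels) still seems to require the propagation argument, so I would keep the argument above as the backbone.
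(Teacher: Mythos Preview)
Your proof is correct and shares its backbone with the paper's: both establish the propagation lemma (the paper records it as \Cref{lem-collapse}, with essentially the same argument you give, though phrased as ``replace the second argument of every topmost $\Cop$'') and then derive a contradiction from an assumed collapse.

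The difference is in how the contradiction is obtained. The paper first carries out an explicit inductive construction of \adts $\attwitness\in\ADTk[k+1]$ capturing Thomas's languages $\witness$, yielding $\witness\in\ADTk[k+1]\setminus\ADTk[k-2]$ (\Cref{Ltoadt}); a collapse at some level then forces all $\witness$ into a fixed $\ADTk$, contradicting this. You bypass the construction entirely: from $\ADT=\ADTk$ and \Cref{theo:starfreeequalsADT} you get that every \FO-definable language is $\ADTk$-definable, hence $\foalt[k+1]$-definable by \Cref{lem:adtktofok+1}, collapsing the \FO alternation (equivalently dot-depth) hierarchy over a two-letter alphabet --- which Thomas's result rules out as a black box. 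Your route is shorter and avoids the technical work of building $\attwitness$; what the paper's extra effort buys is quantitative information about where the $\witness$ sit in the \ADT-hierarchy (as depicted in \Cref{diag-flevel}) and, as a by-product, the concrete separation $(ab)^+\in\ADTk[2]\setminus\ADTk[1]$, neither of which falls out of your argument.
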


To show that $\ADTk\subsetneq \ADTk[k+1]$, we use a family of
languages, originally introduced in \cite{thomas1984application}, that
we write $\{\witness\}_{k \in \setn}$ over the two-letter alphabet
$\alphabet$ obtained from $\Prop=\{\prop\}$. For readability, we use
symbol $a$ (resp.\ $b$) for the valuation $\{p\}$ (resp.\ $\emptyset$)
of $\alphabet$. Formally, we define $\witness \subseteq \Traces$ as
follows -- where $\wolfgangmesure{\word}$ denotes the number of 
occurrences of $a$ minus the number of occurrences of  $b$ in the word $\word$:

We let $\word \in \witness$ whenever all the following holds.
\begin{itemize}
\item $\wolfgangmesure{\word}=0$;
\item for every $\word'\prefix\word$, $0 \leq \wolfgangmesure{\word'} \leq k$;
\item there exists $\word''\prefix\word$ s.t.\ $\wolfgangmesure{\word''}=k$.
\end{itemize}

In \cite[Theorem 2.1]{thomas1984application}, it is shown that
$\witness \in \dotdepth[k]\setminus \dotdepth[k-1]$, for all
$k\geq 1$.


We now determine the position of $\witness$ languages in the \ADT-hierarchy. 
We show \Cref{Ltoadt}.

\begin{restatable}{proposition}{lemmaltoadt}
\label{Ltoadt}
For each $k> 0$, $\witness \in \ADTk[k+1]\setminus\ADTk[k-2]$. 
\end{restatable}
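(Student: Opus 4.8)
The plan is to prove the two halves of the statement separately, using the results already developed in the paper as black boxes as much as possible.

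\textbf{Upper bound: $\witness \in \ADTk[k+1]$.} The first step is to note that $\witness \in \dotdepth[k]$ by \cite[Theorem 2.1]{thomas1984application}, and then invoke \Cref{lem:withdotsdepth} of \Cref{lem:dotdepth}, which gives $\dotdepth[\ell] \subseteq \ADTk[2\ell+2]$. Unfortunately this only yields $\witness \in \ADTk[2k+2]$, which is far too weak for the claimed bound $\ADTk[k+1]$. So instead I would build an \adt for $\witness$ directly and track its countermeasure-depth carefully. The idea: $\witness$ is the intersection of three conditions — $\wolfgangmesure{\word} = 0$, every prefix has value in $[0,k]$, and some prefix has value exactly $k$. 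The first condition is itself star-free (it is ``number of $a$'s equals number of $b$'s'', which is \emph{not} star-free in general! — so care is needed). Actually the cleanest route is an inductive construction mirroring the one in \cite{thomas1984application} or \cite{PERRIN1986393}: express $\witness$ using $\witness[k-1]$ via a decomposition of words according to the last time the running sum hits $k-1$ before reaching $k$, using \SANDop, \ORop and a single layer of $\Cop$ (to forbid the sum from exceeding a bound or from returning to $0$ prematurely). Each induction step should add exactly one nested \Cop, giving $\cdepth \leq k+1$ after accounting for the base case, which uses one \Cop (e.g. via $\strict{\cdot}$ or $\adtle$). The main obstacle here is getting the bookkeeping to add \emph{exactly} one level per step and not two; I expect one has to be clever about reusing the relativized-complementation structure of $\Cop$ rather than intersecting with a freshly complemented language at each stage.

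\textbf{Lower bound: $\witness \notin \ADTk[k-2]$.} Here I would argue by contradiction via the \FO quantifier-alternation hierarchy. Suppose $\witness \in \ADTk[k-2]$. If $k-2 \geq 1$, then \Cref{lem:itemadtktofok+1} of \Cref{lem:adtktofok+1} gives $\ADTk[k-2] \subseteq \foalt[k-1]$, so $\witness \in \foalt[k-1]$. But $\foalt[k-1] \subseteq \dotdepth[k-1]$ by the Thomas correspondence (\cite{thomas1982classifying}, recalled in the excerpt: $\foalt[\ell] \subseteq \dotdepth[\ell] \subseteq \foalt[\ell+1]$), and $\witness \notin \dotdepth[k-1]$ by \cite[Theorem 2.1]{thomas1984application} — contradiction. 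For the small cases $k - 2 \in \{-1, 0\}$ (i.e.\ $k \in \{1,2\}$): when $k=2$, $\ADTk[0]$-definable languages that contain a nonempty word are infinite (\Cref{remark-adt0sem}), and one checks $\witness[2]$ is finite or otherwise argue $\witness[2] \notin \dotdepth[0]$ combined with $\ADTk[0] \subseteq \foalt[2] \cap \foaltdual[2]$ — actually cleanest is: $\ADTk[0] \subseteq \foalt[1]$? no, only $\foalt[2]\cap\foaltdual[2]$, so here I would instead directly use that $\witness \notin \dotdepth[0]$ (finite/co-finite) since it is neither finite nor co-finite for $k \geq 1$, while every $\ADTk[0]$ language is — wait, that's false too. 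Let me reconsider: for $k=2$ the claim is $\witness[2] \notin \ADTk[0]$, and since $\ADTk[0] \subseteq \foalt[2]\cap\foaltdual[2]$, it suffices to show $\witness[2] \notin \foaltdual[2]$ or use the specific structure of $\ADTk[0]$ languages from \Cref{adt0shape}; when $k=1$, $\ADTk[-1]$ is vacuous/empty so there is nothing to prove. I would handle $k=1,2$ as explicit small-case remarks.

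\textbf{Main obstacle.} The genuinely hard part is the upper bound construction achieving depth exactly $k+1$ rather than $2k+2$; the lower bound is essentially a direct corollary of the already-established embedding $\ADTk[k-2] \subseteq \foalt[k-1] \subseteq \dotdepth[k-1]$ together with Thomas's strictness theorem, so the work there is only in the degenerate small values of $k$. I would therefore devote most of the proof to an explicit inductive family of \adts for $\witness$ together with a careful lemma that each inductive step increases the countermeasure-depth by at most one.
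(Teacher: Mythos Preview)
Your plan matches the paper's approach on both halves.

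For the lower bound, your chain $\ADTk[k-2]\subseteq\foalt[k-1]\subseteq\dotdepth[k-1]$ together with $\witness\notin\dotdepth[k-1]$ is exactly what the paper does (in one line). Your hesitation about the small cases is warranted but easily fixed: for $k=1$ the class $\ADTk[-1]$ is vacuous; for $k=2$ do not go through \FO at all, just use \Cref{remark-adt0sem} (every $\ADTk[0]$ language is $\lift$-upward closed), and observe that $\witness[2]$ is not upward closed since e.g.\ $aabb\in\witness[2]$ while $aabbb\notin\witness[2]$. You circled around \Cref{adt0shape} without landing on this; it is the clean argument.

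For the upper bound, you have correctly isolated the only real difficulty: an inductive construction that adds exactly one $\Cop$ layer per step rather than two. The paper carries this out by introducing the companion languages $\witness^{+}$ and $\witness^{-}$ and then rewriting Thomas's mutual recursion
\[
\witness[k+1]=(\witness^{+}a\Traces\cap\Traces b\witness^{-})\setminus(\Traces a\witness^{+}a\Traces\cup\Traces b\witness^{-}b\Traces)
\]
so that the intersection is replaced by a concatenation (concretely, showing $\witness^{+}a\Traces\cap\Traces b\witness^{-}=\witness^{+}a\Traces b\witness^{-}$ inside the relevant context, and a slightly more elaborate rewriting for $\witness[k+1]^{+}$). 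Once the intersection is gone, the outer set difference is a single $\Cop$ whose second argument stays at depth $k$, and the first argument is a \SANDop/\ORop of depth-$k$ pieces; hence depth $k+1$. Your remark that one must ``reuse the relativized-complementation structure of $\Cop$ rather than intersecting with a freshly complemented language'' is precisely this: encoding $A\cap B$ via $\andatt{\cdot}{\cdot}$ would cost two extra $\Cop$ layers, so the intersection must be eliminated structurally instead. The technical meat you have not written is the lemma certifying these rewritten recursions still define $\witness$, $\witness^{+}$, $\witness^{-}$; in the paper this is \Cref{lem:newwitness}, proved by an elementary prefix-counting argument using \Cref{wolfgangmesure}.
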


First, $\witness\notin \ADTk[k-2]$ because   $\witness$ is
not $\foalt[k-1]$-definable (\cite{thomas1984application}). 
By an inductive argument over $k$ (see \Cref{app:Ltoadt}), we can build an \adt
$\attwitness \in \ADTk[k+1]$ that captures $\witness$. We only sketch here the case $k=1$.
For $\witness[1]$, we set
$\attwitness[1] \egdef \Cop(b,\ORop(\attall[\strict{b}], \allattall[\Cop(\adtge[2],\ANDop(\attall[a], \attall[b]))]))$,
depicted in \Cref{fig:attwitness1}. Note that
$\counterdepth(\attwitness[1])=2$ and that by a basic use of
semantics, we have $\semword{\attwitness[1]}=(ab)^+=\witness[1]$.\\


\begin{figure}[h]
\begin{subfigure}[b]{0.47\textwidth}
\includegraphics[scale=.8]{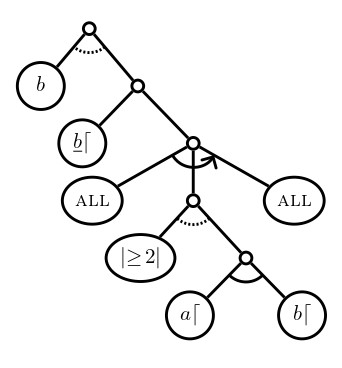}
    \caption{Representation of $\attwitness[1]$}
    \label{fig:attwitness1}
\end{subfigure}
\hfill
\begin{subfigure}[b]{0.53\textwidth}
    \scalebox{0.8}{\begin{tikzpicture}[xscale=0.5, yscale=0.46,node distance=1cm]

   \node (adt0) at (0,0) {$\ADTk[0]$};
   \node (sigma1) at (3, -1) {$\foalt[1]$};
   \node (pass) [above=of sigma1] {};

   \node (adt1)  [above=of adt0] {$\ADTk[1]$};
   \node (adt2)  [above=of adt1] {$\ADTk[2]$};
   \node (adtdots)  [above=of adt2] {$\vdots$};

      \node (sigma2) [above=of pass] {$\foalt[2]$};
   \node (sigma3) [above=of sigma2] {$\foalt[3]$};
   \node (sigmadots) [above=of sigma3] {$\vdots$};
   
   \node (ab+) at (-2.5, 4.4) {$(ab)^+$};
     \node(pi2) at (4.3, 4.4) {$\foaltdual[2]$};

      \path[-] 
    (sigma1) edge node {} (adt0)
    edge node {} (sigma2)
    (adt0) edge node {} (adt1)
    (sigma2) edge node {} (sigma3)
    (adt1) edge node {} (adt2)
    edge node {} (sigma2)
    (adt2) edge node {} (adtdots)
    edge node {} (sigma3)
    (sigma3) edge node {} (sigmadots)
    (adt0) edge node {} (pi2)
    (pi2) edge node {} (sigma1)
    edge node {} (sigma3)
         ;

    \path[->, dashed]

    (ab+) edge [bend right]  node [below=5pt, left= 3pt] {$\notin$} (adt1)
    edge [bend left] node [above=5pt, left=3pt] {$\in$}   (adt2);
   
    \node (adtkm1) at (8,1) {$\ADTk[k-1]$};
    \node(adtkm2)  [below= of adtkm1] {$\ADTk[k-2]$};
    \node (adtk) [above =of adtkm1] {$\ADTk$};
    \node (bk) at (12, 4) {$\dotdepth[k]$};
    \node (sigmak) at (12, 2.5) {$\foalt[k]$};
    \node (sigmakp1) at (12, 5.5) {$\foalt[k+1]$};
    \node (sigmakm1) at (12, -0.5) {$\foalt[k-1]$};
    \node (bkm1) at (12, 1) {$\dotdepth[k-1]$};
    \node (adtkp1)  [above= of adtk] {$\ADTk[k+1]$};
    \node (dots) at (8, 10) {$\vdots$};
    \node (adt2k) at (8, 14) {$\ADTk[2k+2]$};
    \node (bkp1) at (12, 7) {$\dotdepth[k+1]$};
    \node (sigmakp2)  at (12, 8.5) {$\foalt[k+2]$};

    \node (belowdots) at (8, -2.5) {$\vdots$};
    \node (abovedots) at (8, 18)  {$\vdots$};
    \node (belowdotsfoalt) at (12, -2) {$\vdots$};
    \node (abovedotsfoalt) at (12, 10){$\vdots$};

    \node (witness) at (6,2.5) {$\witness[k]$};

     \path[-] 
    
    (adtkm1) edge node {} (sigmak)
    edge node {} (adtk)
    (bk) edge node {} (adt2k)
    (sigmakm1) edge node {} (belowdotsfoalt)
    edge node {} (bkm1)
    (sigmak) edge node {} (bkm1)
    edge node {} (bk)
    (sigmakp1) edge node {} (bk)
    edge node {} (bkp1)
    (sigmakp2) edge node {} (bkp1)
    edge node {} (abovedotsfoalt)
    (adtk) edge node {} (adtkp1)
    edge node {} (sigmakp1)
    (adtkp1) edge node {} (dots)
    edge node {}(sigmakp2)
    (dots) edge node {} (adt2k)
    (adt2k) edge node {} (abovedots)
    (adtkm2) edge node {} (adtkm1)
    edge node {} (sigmakm1)
    (bkm1) edge node {} (dots)
    (bkp1) edge node {} (abovedots);

    \path[->, dashed]

    (witness) edge [bend right]  node [below=5pt, left= 3pt] {$\notin$} (adtkm2)
    edge [bend left] node [above=5pt, left= 3pt] {$\in$}   (adtkp1);

\end{tikzpicture}}
    \caption{Summary of the results on \ADT-hierarchy}
    \label{diag-flevel}
\end{subfigure}
\end{figure}








Now we have all the material to prove \Cref{expressiveness}. Indeed, assuming the hierarchy collapses at some level will contradict \Cref{Ltoadt}.
Notice that this argument is not constructive as we have no
witness of $\ADTk \subsetneq \ADTk[k+1]$ but  for $k=1$ where it can be shown that $(ab)^+ \in \ADTk[2]\setminus \ADTk[1]$ (see \Cref{app:smallmodelproperty}, \Cref{ex:ab+generator} and \Cref{prop:genadt1}).
Our results about the \ADT-hierarchy are depicted on Figure~\ref{diag-flevel}.



\section{Decision Problems on Attack-Defense Trees}
\label{sec-decisionproblems}

We study classical decision problems on languages, through the
lens of \adts, with a focus on the role played by the countermeasure-depth in
their complexities. The problems are the following.

\begin{itemize}
\item The \emph{membership} problem, written \ADTmemb, is defined by:\\
    \ADTmembpb
\item The \emph{non-emptiness} problem, written \ADTNE, is defined by:\\
  \ADTNEpb
  \end{itemize}

We use notations \ADTmemb[k] and \ADTNE[k] whenever the input \adts of
the respective decision problems are in $\ADTk$, with a fixed $k$. Our
results are summarized in \Cref{table:results}, and we below comment on them, row by row.

\begin{table}[h]
\begin{threeparttable}[b]
\begin{tabular}{|c|c|c|c|c|}
\cline{2-5}
\multicolumn{1}{c|}{}
& $\ADTk[0]$ & $\ADTk[1]$ & $\ADTk$ $(k\geq 2)$ & $\ADT$ \\
 \hline
 \ADTmemb & \PTIME     & \PTIME     & \PTIME  & \PTIME \\
 \hline
 \ADTNE & \NP-comp & \NP-comp & $(k+1)$-\EXPSPACE & non-elem \\
        &              &                & \tnote{1}~~$\geq$ $\NSPACE(g(k-5,c \sqrt{\frac{n-1}{3}}))$           &                \\
 \hline
 \ADTequiv & \coNP-comp & $4$-\EXPSPACE   & $(k+2)$-\EXPSPACE & non-elem \\
        &              &           &  \tnote{2}~~$\geq$ $\NSPACE(g(k-4,c \sqrt{\frac{n-1}{3}}))$             &                \\
 \hline
\end{tabular}
\begin{tablenotes}
\item [1] if $k \geq 5$.
\item [2] if $k \geq 4$.
\end{tablenotes}
\end{threeparttable}
\caption{Computational complexities of decision problems on \adts.\label{table:results}}
\end{table}

Regarding \ADTmemb (first row of \Cref{table:results}), we recall that
\adts and \shseres are expressively equivalent, but with a
translation (\Cref{theo:starfreeequalsADT}) from the former to the
latter that is not polynomial. We therefore cannot exploit \cite[Theorem
  2]{kupferman2002improved} for a \PTIME complexity of the word
membership problem for \shseres, and have instead developed a
dedicated alternating logarithmic-space algorithm in
\Cref{app:sec-decisionproblems}.

Regarding \ADTNE (second row of \Cref{table:results}), and because
\shseres can be translated  as \adts  (see ``if'' direction in the proof
of \Cref{theo:starfreeequalsADT}), the problem \ADTNE inherits from
the hardness of the non-emptiness of \shseres
\cite[p.\ 162]{stockmeyer1974complexity}. In its full generality, \ADTNE is therefore
non-elementary (last column). Moreover, by our exponential translation
of $\ADTk$ into the \FO-fragment $\foalt[k+1]$
(\Cref{lem:adtktofok+1}), we obtain the $(k+1)$-\EXPSPACE upper-bound
complexity for \ADTNE[k] (recall satisfiability problem for
$\foalt[\ell]$ is $(\ell-1)$-\EXPSPACE). Additionally, a lower-bound
for \ADTNE[k] is directly given by \cite[Theorem
  4.29]{stockmeyer1974complexity}: for \shseres that linearly
translate as \adts with countermeasure-depth $k$, their non-emptiness
is at least $\NSPACE(g(k-5,c \sqrt{\frac{n-1}{3}}))$. Interestingly,
the Small Model Property (\Cref{boundedsizetracesADT1}) yields an \NP
upper-bound complexity for \ADTNE[1], also applicable for \ADTNE[0], that is optimal since one can
reduce the \NP-complete satisfiability of propositional formulas to
the non-emptiness of leaf \adts. 

Finally regarding \ADTequiv (last row of \Cref{table:results}), one
can observe that \ADTNE and \ADTequiv are very close.  First, \ADTNE
is a particular case of \ADTequiv with the second input \adt
$\att_2\egdef \emptyset$.  As a consequence, \ADTequiv inherits from
the hardness of \ADTNE, and is therefore non-elementary (last column).
Also, because deciding the equivalence between $\att_1$ and $\att_2$
amounts to deciding whether $\ORop (\Cop(\att_1, \att_2), \Cop(\att_2,
\att_1))=\emptyset$, we get reduction from \ADTequiv[k] into
\ADTNE[k+1] which provides the results announced in
Columns 1-3.

\section{Discussion}
\label{sec-conclusion}

\hspace{\parindent} First, we discuss our  model of \adts with regard to the literature.
However, we do not compare with settings where \adts leaves are actions \cite{mauw2005foundations}, as they yield only finite languages, and address other issues \cite{widel2019beyond}.

\noindent Our \adts have particular features, but remain somehow standard. Regarding the syntax, firstly, even though we did not type our nodes as
proponent/opponent, the countermeasure operator fully determines the alternation between attack and defense. Also, we introduced the non-standard leaf $\emptyleaf$ for the singleton empty-trace language, 
not considered in the literature. Still, it is a very natural object in the formal language landscape,
and anyhow does not impact our overall computational complexity analysis.
Regarding the semantics, it can be shown that the definition of our
operator \ANDop together with the reachability goal semantics of the leaves coincides
with the acknowledged semantics considered in \cite{audinot2017my,brihaye2022adversarialgandalf}.

Second, we  discuss our results. 
We showed the strictness of the $\ADT$-hierarchy in a non-constructive manner. However, exhibiting an element of
$\ADTk[k+1]\setminus \ADTk$ ($k\geq 2$) is still an open question. Since
$\witness[1] \in \ADTk[2]\setminus\ADTk[1]$, languages $\witness$ are natural candidates. We conjecture it is the case and we are currently working on 
Erenfeucht-Fraissé(EF)-like games for \adts (in the spirit
of \cite{thomas1984application} for \shseres) to prove it. Moreover, EF-like games for \adts  may also help to better compare the \ADT-hierarchy and the \FO alternation hierarchy, in particular, whether the hierarchies eventually coincide. For now, finding a tighter inclusion of
$\foalt[\ell]$ in some $\ADTk$ for each $\ell$ seems difficult; recall that we established $\foalt[\ell]\subseteq \ADTk[2\ell+2]$. Any progress in this line would be of great help to obtain tight complexity bounds for \ADTNE[k] and \ADTequiv[k]. 
%
%
 Finally, determining the level of a language in the \ADT-hierarchy seems as hard as determining its level in the dot-depth hierarchy, recognised as a difficult question
\cite{place2015tale,DBLP:conf/birthday/DiekertG08}.

\bibliography{ref}

\appendix
\section{Figure of $\att_{ex_2}$ in \Cref{ex:introductory}} \label{app:fig-countermeasureofcountermeasure}

\begin{figure}[H]
    \centering
\includegraphics[scale=1]{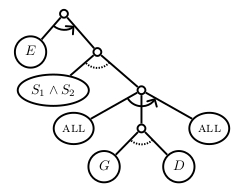}
\end{figure}

%

\section{Proof of \Cref{boundedsizetracesADT1}}
\label{app:smallmodelproperty}

We consider the \emph{lift}
binary relation between traces (\Cref{def:lift}), a slight variant of
the classic relation of super-word in language theory. We use the lift
to define generators of a set of traces (see \Cref{def:generator}) for
$\att$ an \adt. We define (in \Cref{def:gen}) $\gen{\att}$, a finite
set of traces, and we show in \Cref{prop:genadt1} that, if
$\att \in \ADTk[1]$, then $\gen{\att}$ is a set of generators of the
semantics of $\att$. We also show in \Cref{prop:genbounded} that an
element of $\gen{\att}$ has its size bounded by the number of leaves
of $\att$. \Cref{prop:genbounded} and \Cref{prop:genadt1} are enough
to deduce the small model property of $\ADTk[1]$, stated
in \Cref{boundedsizetracesADT1}. 

\begin{definition}
\label{def:lift}
A trace $\trace$ is a $\emph{lift}$ of a trace
$\trace'=\val_1...\val_n$, written $\trace' \lift \trace$, whenever 
$\trace\in\Traces \val_1... \Traces\val_n$. 
\end{definition}

Notice that $\lift$ is an order over $\Traces$ and that lift relation is like the relation of super-word where the last letters of the respective traces are
identical. It can be shown by induction of trees that every $\att \in \ADTk[0]$, the set
$\semword{\att}$ is $\lift$-upward closed, and is therefore infinite:

\begin{lemma}\label{remark-adt0sem}
Let $\att \in \ADTk[0]$. If $\trace \in \semword{\att}$, then for each trace $\trace'$ such that $\trace\lift\trace'$, we have $\trace'\in \semword{\att}$.  
\end{lemma}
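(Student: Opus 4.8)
The plan is to prove the statement by structural induction on the \adt $\att \in \ADTk[0]$, maintaining the stated property as the induction invariant: whenever $\trace \in \semword{\att}$ and $\trace \lift \trace'$, then $\trace' \in \semword{\att}$ (i.e.\ $\semword{\att}$ is $\lift$-upward closed). Since $\counterdepth(\att)=0$, the operator $\Cop$ does not occur in $\att$, so only five cases arise: the two leaves $\epsilonadt$ and $\formula$, and the internal operators $\ORop$, $\SANDop$ and $\ANDop$, which by \Cref{rem:associative} we may assume binary.

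First I would handle the base cases. For $\epsilonadt$, $\semword{\epsilonadt}=\{\eword\}$ and the only lift of $\eword$ is $\eword$ itself (an empty product of sets $\Traces\val_i$), so there is nothing to prove. For a leaf $\formula$, any lift $\trace'$ of a trace $\trace=\val_1\ldots\val_n\in\semword{\formula}$ belongs to $\Traces\val_1\cdots\Traces\val_n$, hence ends with the letter $\val_n$; since $\val_n\models\formula$ we get $\trace'\in\semword{\formula}$. The case $\ORop(\att_1,\att_2)$ is immediate from the induction hypothesis since the semantics is a union. The crux for the two remaining cases is an elementary observation on lifts: if $\trace=\val_1\ldots\val_m\lift\trace'$, witnessed by $\trace'=u_1\val_1u_2\val_2\cdots u_m\val_m$ with each $u_i\in\Traces$, then any factorization $\trace=\trace_1\concat\trace_2$ (say $\trace_1$ has length $k$) lifts to the factorization $\trace'=\trace'_1\concat\trace'_2$ with $\trace'_1=u_1\val_1\cdots u_k\val_k$ and $\trace'_2=u_{k+1}\val_{k+1}\cdots u_m\val_m$, and moreover $\trace_1\lift\trace'_1$ and $\trace_2\lift\trace'_2$ (empty factors lifting to empty factors, by reflexivity of $\lift$).

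Then for $\SANDop(\att_1,\att_2)$ I would write $\trace=\trace_1\concat\trace_2$ with $\trace_i\in\semword{\att_i}$, split a given lift $\trace'$ at the end of $\trace_1$ as above, and invoke the induction hypothesis on each $\att_i$ to conclude $\trace'\in\semword{\att_1}\concat\semword{\att_2}$. For $\ANDop(\att_1,\att_2)$, recall $\semword{\ANDop(\att_1,\att_2)}=(\semword{\att_1}\concat\alphabet^*\inter\semword{\att_2})\union(\semword{\att_2}\concat\alphabet^*\inter\semword{\att_1})$; if $\trace$ lies in the first disjunct, then $\trace\in\semword{\att_2}$ lifts to $\trace'\in\semword{\att_2}$ directly by the induction hypothesis on $\att_2$, while from a prefix $\sigma\in\semword{\att_1}$ of $\trace$ one splits $\trace'$ at the end of $\sigma$ and uses the induction hypothesis on $\att_1$ to exhibit a prefix of $\trace'$ in $\semword{\att_1}$, so $\trace'$ is again in the first disjunct; the second disjunct is symmetric, and the $n$-ary operator follows since $\andword$ is associative. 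The only step needing a little care is this last one, where one must propagate the two conjuncts (``is in $\semword{\att_2}$'' and ``has a prefix in $\semword{\att_1}$'') simultaneously through the lift; all other cases are routine. Upward closure implies infiniteness of $\semword{\att}$ as soon as it contains a nonempty word, which is the use made of the lemma elsewhere.
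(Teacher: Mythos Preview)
Your proof is correct and follows essentially the same structural induction as the paper's own proof. If anything, you are more careful: you give an explicit argument for the $\ANDop$ case (which the paper dismisses as ``direct and similar'' to $\SANDop$), and your treatment of the $\epsilonadt$ base case---that the only lift of $\eword$ is $\eword$ itself---is more accurate than the paper's claim that ``there is no lift of $\eword$''.
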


 \begin{proof}
 We show this result by induction over the shape of an \adt.
 
 If $\att=\emptyleaf$, then there is no lift of $\eword$. If $\att$ is a non-empty leaf $\formula$, the result holds since the lift relation preserves the last letter. 
 
  Now, we assume that the result holds for two \adts $\att_1$ and $\att_2$ and we need to show the result for $\ORop(\att_1, \att_2)$, $\SANDop(\att_1, \att_2)$ and $\ANDop(\att_1, \att_2)$.
  
  The three cases are direct and similar, we only write the case $\att=\SANDop(\att_1, \att_2)$:
  if $\trace\in \semword{\att}$ and $\trace\lift\trace'$ for a certain trace $\trace'$, then we can write $\trace=\trace_1\concat\trace_2$ with $\trace_1\in \semword{\att_1}$ and $\trace_2 \in \semword{\att_2}$. From the structure of a lift, we can also write $\trace'=\trace'_1\concat\trace'_2$ such that $\trace_1\lift\trace'_1$ and $\trace_2\lift\trace'_2$, we can then conclude with our induction hypothesis. 
 \end{proof}

From \Cref{remark-adt0sem}, it is immediate that any non-empty
 semantics of an \adt in $\ADTk[0]$ is necessarily infinite. We now turn to the notion of
 \emph{generators} to establish the Small Model Property.

\begin{definition}
\label{def:generator}
Let $\traceset \subseteq \Traces$. A \emph{set of generators of
$\traceset$} is a finite subset $G$ of $\traceset$ with the following
property: for each non-empty trace $\trace_\traceset \in \traceset$ there exists
$\trace_{gen}=\val_1...\val_n\in G$ such that we can write $\trace_\traceset=\word_1\val_1...\word_n\val_n$ and for each $\trace \in \Traces$, if we can write $\trace=\word_1'\val_1...\word_1'\val_n$ with $\word_1'\val_1\lift\word_1\val_1$, ..., $\word_n'\val_n\lift\word_n\val_n$, then $\trace\in \traceset$.
\end{definition}

\begin{example}
For the singleton set of propositions $\{\prop\}$, we consider the set of traces
$\traceset=\prop^+\setminus\{\prop\prop\}$. The set
$\{\prop, \prop\prop\prop\}$ is a set of generators for $T$. However,
the set $\{\prop\}$ is not because we have $\prop\prop\prop\in T$ and
$\prop\lift \prop\prop\prop$ but we also have
$\prop \lift \prop\prop\lift\prop\prop\prop$ and $\prop\prop\notin T$.
\end{example}

\begin{example}
\label{ex:ab+generator}
For the singleton set of propositions $\{\prop\}$, we let $a$ be the valuation
that makes $\prop$ true and $b$ the valuation that makes $\prop$ false. Thus each trace is a
word over the alphabet $\{a,b\}$. Now, remark that the set $(ab)^+$ has no set of
generators. Indeed, assume that $G \subseteq (ab)^+$ is a set of
generators. We write $(ab)^N$ for the largest element of $G$, thus
$(ab)^Nab \in (ab)^+ \setminus G$. However, for each $\trace_{gen}\in
G$, we have $\trace_{gen}\lift (ab)^Nb \lift (ab)^Nab$ and
$(ab)^Nb\notin (ab)^+$, which contradicts \Cref{def:generator}.
\end{example}

We now define the set $gen(\att)$ for each \adt $\att$ that will be our
candidate to show that every
of \adts in $\ADTk[1]$ has a set of generators.

\begin{definition}
\label{def:gen}
Let $\att \in \ADTk[1]$. The set $\gen{\att}$ is inductively defined as follows: 
\begin{itemize}
    \item $\gen{\emptyleaf}=\emptyset$ and $\gen{\formula}= \{\val \in 2^\Prop : \val \models \formula \}$,
    \item $\gen{\ORop(\att_1, \att_2)}= \gen{\att_1}\union \gen{\att_2}$,
    \item $\gen{\SANDop(\att_1, \att_2)}= \gen{\att_1}\concat \gen{\att_2}$,
    \item  $\gen{\Cop(\att_1, \att_2)}= \gen{\att_1}\setminus \semword{\att_2}$,
    \item $\gen{\ANDop(\att_1, \att_2)}= (\gen{\att_1}\shuff \gen{\att_2})\inter \semword{\ANDop(\att_1, \att_2)}$, where $\shuff$ is the classic shuffle operator inductively defined as:
   $\eword \shuff \trace=\trace \shuff \eword=\{\trace\}$,
     $\val_1\trace_1 \shuff \val_2\trace_2= \val_1 \concat (\trace_1 \shuff \val_2\trace_2) \union \val_2 \concat (\val_1\trace_1 \shuff \trace_2)$ and
     $\val\trace_1 \shuff \val\trace_2= \val \concat (\trace_1 \shuff \val\trace_2) \union \val \concat (\val\trace_1 \shuff \trace_2) \union \val \concat (\trace_1 \shuff \trace_2)$.
\end{itemize}

\end{definition}

 We can prove the following \Cref{prop:genbounded} by induction over \adts.

\begin{restatable}{lemma}{lemmagenbounded}
\label{prop:genbounded}
For each trace $\trace \in \gen{\att}$, its size $\sizetrace{\trace}$ is bounded by the number of leaves of $\att$. 
\end{restatable}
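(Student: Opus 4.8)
The statement to prove is \Cref{prop:genbounded}: for each trace $\trace \in \gen{\att}$, its length $\sizetrace{\trace}$ is bounded by the number of leaves of $\att$. The natural approach is a structural induction on the \adt $\att$, following exactly the inductive clauses of \Cref{def:gen}, and tracking the stronger bookkeeping invariant ``(number of leaves of $\att$) $\geq$ (length of any trace in $\gen{\att}$)''. Let me write $\leaves{\att}$ informally for the number of leaves of $\att$ (the paper writes ``number of leaves''); note $\leaves{\OP(\att_1,\ldots,\att_n)} = \sum_i \leaves{\att_i}$ and $\leaves{\epsilonadt} = \leaves{\formula} = 1$.

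\textbf{Base cases and easy inductive cases.} For $\att = \epsilonadt$ we have $\gen{\epsilonadt} = \emptyset$, so there is nothing to check. For $\att = \formula$, every element of $\gen{\formula}$ is a single valuation $\val \in 2^\Prop$, hence has length $1 = \leaves{\formula}$. For $\ORop(\att_1,\att_2)$, any $\trace \in \gen{\att_1} \cup \gen{\att_2}$ is in one of the two sets, so by induction $\sizetrace{\trace} \le \leaves{\att_i} \le \leaves{\att_1} + \leaves{\att_2}$. For $\SANDop(\att_1,\att_2)$, a trace $\trace = \trace_1 \concat \trace_2$ with $\trace_i \in \gen{\att_i}$ satisfies $\sizetrace{\trace} = \sizetrace{\trace_1} + \sizetrace{\trace_2} \le \leaves{\att_1} + \leaves{\att_2}$ by the induction hypotheses and additivity of length under concatenation. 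For $\Cop(\att_1,\att_2)$, we have $\gen{\Cop(\att_1,\att_2)} = \gen{\att_1} \setminus \semword{\att_2} \subseteq \gen{\att_1}$, so the bound $\sizetrace{\trace} \le \leaves{\att_1} \le \leaves{\att_1} + \leaves{\att_2} = \leaves{\Cop(\att_1,\att_2)}$ follows directly from the induction hypothesis for $\att_1$ (and the fact that removing elements only shrinks the set).

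\textbf{The \ANDop case — the main obstacle.} The only nontrivial case is $\ANDop(\att_1,\att_2)$, where $\gen{\ANDop(\att_1,\att_2)} = (\gen{\att_1} \shuff \gen{\att_2}) \cap \semword{\ANDop(\att_1,\att_2)}$. The key observation needed is that the shuffle of two traces does not increase total length: if $\trace \in \trace_1 \shuff \trace_2$ then $\sizetrace{\trace} = \sizetrace{\trace_1} + \sizetrace{\trace_2}$. This should itself be proved by a small induction on the definition of $\shuff$; the only delicate clause is the third one ($\val\trace_1 \shuff \val\trace_2$), where the disjunct $\val \concat (\trace_1 \shuff \trace_2)$ produces a word of length $1 + (\sizetrace{\trace_1}-1) + (\sizetrace{\trace_2}-1) + 1 = \sizetrace{\val\trace_1} + \sizetrace{\val\trace_2} - 1$ — wait, this is strictly shorter, which is still fine for an upper bound, so in fact $\sizetrace{\trace} \le \sizetrace{\trace_1} + \sizetrace{\trace_2}$ with equality in the first two disjuncts. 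So I should state the shuffle-length fact as an inequality $\sizetrace{\trace} \le \sizetrace{\trace_1} + \sizetrace{\trace_2}$, which is all that is needed. Granting this, for $\trace \in \gen{\ANDop(\att_1,\att_2)}$ we have $\trace \in \trace_1 \shuff \trace_2$ for some $\trace_i \in \gen{\att_i}$, hence $\sizetrace{\trace} \le \sizetrace{\trace_1} + \sizetrace{\trace_2} \le \leaves{\att_1} + \leaves{\att_2} = \leaves{\ANDop(\att_1,\att_2)}$ by the induction hypotheses; the intersection with $\semword{\ANDop(\att_1,\att_2)}$ only removes traces and so cannot break the bound. By \Cref{rem:associative} it suffices to treat binary $\SANDop$ and $\ANDop$, so the $n$-ary versions follow. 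This completes the induction; the only real work is the auxiliary length lemma for $\shuff$, which is routine.
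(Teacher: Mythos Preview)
Your proof is correct and follows the same structural induction as the paper's own argument. The only difference is cosmetic: the paper simply asserts that for $\ANDop$ one has $\sizetrace{\trace}\leq n_1+n_2$, whereas you spell out the auxiliary inequality $\sizetrace{\trace}\leq\sizetrace{\trace_1}+\sizetrace{\trace_2}$ for $\trace\in\trace_1\shuff\trace_2$ (correctly noting that the third clause of $\shuff$ can merge a shared letter and hence give a strict inequality).
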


\begin{proof}
    
 We show this result by induction over the shape of an \adt.
 
 If the \adt is the empty leaf $\emptyleaf$, then there is no element in $\gen{\att}$. If \adt is a non-empty leaf $\formula$ then all elements of $\gen{\att}$ are of size 1. 
 
 Now, we assume that the result holds for two \adts $\att_1$ and $\att_2$ and we show the result for $\gen{\ORop(\att_1, \att_2)}$, $\gen{\SANDop(\att_1, \att_2)}$, $\gen{\ANDop(\att_1, \att_2)}$ and $\gen{\Cop(\att_1,\att_2)}$. For the rest of the proof, we write $n_1$ the number of leaves of $\att_1$ and $n_2$ the number of leaves of $\att_2$.
 
 By induction, if $\trace \in \gen{\ORop(\att_1, \att_2)}$, then $\sizetrace{\trace}\leq MAX\{n_1, n_2\}$. If $\trace\in \gen{\SANDop(\att_1, \att_2)}$ or $\trace\in \gen{\ANDop(\att_1, \att_2)}$, then $\sizetrace{\trace}\leq n_1+n_2$. Finally, if $\trace\in \gen{\Cop(\att_1, \att_2)}$, then $\sizetrace{\trace}\leq n_1$. In each case, $\sizetrace{\trace}\leq n_1+n_2$ and $n_1+n_2$ is the number of leaves of the four studied \adts, which concludes. 
\end{proof}  

If $\att$ has no nested countermeasures yields the following. 

\begin{restatable}{proposition}{propgenadt}
\label{prop:genadt1}
When $\att \in \ADTk[1]$, the set $\gen{\att}$ is a set of generators for $\semword{\att}$.
\end{restatable}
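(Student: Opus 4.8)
The plan is to prove \Cref{prop:genadt1} by structural induction on the \adt $\att \in \ADTk[1]$, following exactly the inductive definition of $\gen{\att}$ in \Cref{def:gen}. For each case I must verify two things: that $\gen{\att}$ is a finite subset of $\semword{\att}$ (finiteness is immediate from \Cref{prop:genbounded} together with the finiteness of $\alphabet$; inclusion $\gen{\att}\subseteq\semword{\att}$ is a routine induction), and that the generating property holds, i.e.\ every non-empty $\trace_\traceset\in\semword{\att}$ is ``covered'' by some $\trace_{gen}=\val_1\ldots\val_n\in\gen{\att}$ in the sense of \Cref{def:generator} --- that any trace obtained by independently lifting each block $\word_i\val_i$ of the factorization $\trace_\traceset=\word_1\val_1\cdots\word_n\val_n$ stays inside $\semword{\att}$.

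First I would dispatch the base cases: $\emptyleaf$ is vacuous (no non-empty traces), and for a leaf $\formula$, a non-empty trace in $\semword{\formula}$ ends in some $\val\models\formula$, so $\val\in\gen{\formula}$ works and any lift of a single block $\word\val$ still ends in $\val\models\formula$. For the inductive cases I would treat $\ORop$ and $\SANDop$ first, as they are the easiest: for $\ORop$, a trace in $\semword{\att_1}\cup\semword{\att_2}$ lies in one disjunct, apply the corresponding hypothesis; for $\SANDop$, factor $\trace_\traceset=\trace^{(1)}\concat\trace^{(2)}$ with $\trace^{(i)}\in\semword{\att_i}$, get generators $\trace_{gen}^{(i)}$ for each piece, and concatenate --- the block decomposition of $\trace_{gen}^{(1)}\concat\trace_{gen}^{(2)}$ is the concatenation of the two block decompositions, so an independent lift of the whole is an independent lift of each half, and $\semword{\att_1}\concat\semword{\att_2}$ is closed under this by the two hypotheses. (One small subtlety: if $\trace^{(1)}=\eword$ then $\trace^{(2)}\in\semword{\att_2}$ is non-empty and one uses only the $\att_2$-generator; symmetrically for $\trace^{(2)}=\eword$, noting $\eword\in\semword{\att_2}$ forces the factorization to put everything in the first piece. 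These degenerate splits need care but are handled by the same reasoning.)

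The $\Cop$ case is where the restriction to $\ADTk[1]$ is essential: since $\att=\Cop(\att_1,\att_2)\in\ADTk[1]$, we have $\att_2\in\ADTk[0]$, hence by \Cref{remark-adt0sem} the language $\semword{\att_2}$ is $\lift$-upward closed. Given a non-empty $\trace_\traceset\in\semword{\att_1}\setminus\semword{\att_2}$, take a generator $\trace_{gen}\in\gen{\att_1}$ covering it; I claim $\trace_{gen}\in\gen{\att_1}\setminus\semword{\att_2}=\gen{\att}$. Indeed if $\trace_{gen}\in\semword{\att_2}$ then, since $\trace_\traceset$ is a lift of $\trace_{gen}$ (being an independent lift of its blocks implies $\trace_{gen}\lift\trace_\traceset$), upward closure would give $\trace_\traceset\in\semword{\att_2}$, a contradiction. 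Moreover any independent block-lift $\trace$ of $\trace_\traceset$ satisfies $\trace\in\semword{\att_1}$ by the $\att_1$-hypothesis, and $\trace\notin\semword{\att_2}$ because $\trace_{gen}\lift\trace$ and $\trace_{gen}\notin\semword{\att_2}$ --- wait, upward closure goes the wrong way here, so instead argue: if $\trace\in\semword{\att_2}$, note $\trace_\traceset$ and $\trace$ are both lifts of $\trace_{gen}$ but that does not relate them; the correct move is that the generating property for $\semword{\att}$ only requires $\trace\in\semword{\att_1}\setminus\semword{\att_2}$, and since $\semword{\att_2}$ is $\lift$-upward closed and $\trace_{gen}\notin\semword{\att_2}$, every lift of $\trace_{gen}$ would be in $\semword{\att_2}$ only if... actually the clean statement is: the \emph{complement} $\Traces\setminus\semword{\att_2}$ is $\lift$-\emph{downward} closed, which is not what we want either. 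The resolution is that \Cref{def:generator} is tailored so that the blocks $\word_i\val_i$ are lifted to $\word_i'\val_i$ with $\word_i'\val_i\lift\word_i\val_i$, i.e.\ the generated trace $\trace$ is \emph{larger} than... no --- re-reading \Cref{def:generator}, $\trace_\traceset=\word_1\val_1\cdots\word_n\val_n$ is already the lift of $\trace_{gen}$, and $\trace=\word_1'\val_1\cdots\word_n'\val_n$ with each $\word_i'\val_i\lift\word_i\val_i$ is an even larger trace; so $\trace$ is a lift of $\trace_{gen}$ and we need $\trace\in\semword{\att}$. For $\Cop$: $\trace\in\semword{\att_1}$ by hypothesis; and $\trace\notin\semword{\att_2}$ needs $\semword{\att_2}$ downward-closed under $\lift$, which is false in general.

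So the genuine obstacle, and the heart of the proof, is getting the $\Cop$ case right: one cannot simply inherit $\trace\notin\semword{\att_2}$ from $\trace_{gen}\notin\semword{\att_2}$ via upward closure. The fix must be that the generating property is proved jointly with a stronger invariant --- likely that for $\att\in\ADTk[1]$ the set $\semword{\att}$ itself, while not $\lift$-upward closed (e.g.\ $\adtle$), is nonetheless ``generated'' in the precise block-wise sense, and that the $\Cop$-subtraction removes exactly an upward-closed set so that what remains is still block-generated. Concretely: $\semword{\Cop(\att_1,\att_2)}=\semword{\att_1}\setminus\semword{\att_2}$ where $\semword{\att_1}$ is block-generated and $\semword{\att_2}$ is $\lift$-upward closed; I would prove the lemma ``a block-generated set minus a $\lift$-upward-closed set is block-generated, with generator set $\gen{\att_1}\setminus\semword{\att_2}$'' as a standalone combinatorial fact, whose proof uses that for $\trace_{gen}\notin\semword{\att_2}$ upward-closedness gives $\trace_\traceset\notin\semword{\att_2}$ in the \emph{right} direction (we need the \emph{contrapositive}: $\trace_\traceset\in\semword{\att_2}\Rightarrow\trace_{gen}\in\semword{\att_2}$, which \emph{is} upward closure since $\trace_{gen}\lift\trace_\traceset$) to show $\trace_{gen}$ stays a generator, and symmetrically handles arbitrary lifts $\trace$. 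The $\ANDop$ case is then handled by intersecting the shuffle of the two generator sets with $\semword{\ANDop(\att_1,\att_2)}$ and arguing via \Cref{ex:andword} that any block-wise lift of a shuffle-generator that lands in $\semword{\ANDop(\att_1,\att_2)}$ is covered --- this requires observing that the $\andword$ operation, being built from concatenation with $\alphabet^*$ and intersection, interacts predictably with the block structure, and is the second most delicate case. I would write $\ANDop$ last, reducing as much as possible to the $\SANDop$ and intersection machinery already in place.
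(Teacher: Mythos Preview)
Your overall plan---structural induction following \Cref{def:gen}, with the $\Cop$ case hinging on \Cref{remark-adt0sem}---is exactly the paper's approach, and your treatment of the leaves, $\ORop$, and $\SANDop$ is fine. The problem is that you have misread the direction of the lift relation in \Cref{def:generator}, and this misreading is what sends you spiralling in the $\Cop$ case.

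Recall from \Cref{def:lift} that $\trace'\lift\trace$ means $\trace$ is a lift of $\trace'$, i.e.\ $\trace'$ is the \emph{smaller} word. So in \Cref{def:generator} the condition $\word_i'\val_i\lift\word_i\val_i$ says that each block $\word_i'\val_i$ of the generic trace $\trace$ is \emph{smaller} than the corresponding block $\word_i\val_i$ of $\trace_\traceset$; in particular $\trace\lift\trace_\traceset$ (block-wise sublift implies global sublift, since subsequences concatenate and the last letters $\val_n$ coincide). Your paraphrase ``any trace obtained by independently lifting each block'' and your later claim that $\trace$ is ``an even larger trace'' are therefore exactly backwards.

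Once the direction is corrected, the $\Cop$ case becomes the one-liner the paper gives: since $\att_2\in\ADTk[0]$, the set $\semword{\att_2}$ is $\lift$-upward closed by \Cref{remark-adt0sem}; we know $\trace_\traceset\notin\semword{\att_2}$; and every candidate $\trace$ satisfies $\trace\lift\trace_\traceset$; hence $\trace\notin\semword{\att_2}$ (contrapositive of upward closure), while $\trace\in\semword{\att_1}$ by the inductive hypothesis on $\att_1$. The same contrapositive applied with $\trace=\trace_{gen}$ shows $\trace_{gen}\in\gen{\att_1}\setminus\semword{\att_2}=\gen{\att}$. There is no need for a ``stronger invariant'' or a standalone lemma about subtracting upward-closed sets---that machinery is a symptom of the direction error, not a genuine requirement. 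Your sketch of the $\ANDop$ case is vague but in the right spirit; the paper handles it by extracting from $\trace_\att$ the positions witnessing membership in $\semword{\att_1}$ and $\semword{\att_2}$, taking a shuffle of the two sub-generators as $\trace_{gen}$, and then reusing essentially the $\SANDop$ argument.
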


\begin{proof}
 
We show this result by induction over the shape of an \adt.

If $\att=\emptyleaf$, then its semantics contains only the empty trace. Thus $\emptyset$ is a set of generators. If $\att=\formula$, then $\semword{\formula}=\{\val_1...\val_n\in \Traces:\val_n \models\formula\}=\Traces\concat \gen{\formula}$, thus $\gen{\formula}\subseteq \semword{\formula}$ and is a set of generators.

Now, for two \adts $\att_1$ and $\att_2$, we assume that $\gen{\att_1}$ and $\gen{\att_2}$ are sets of generators and we show that $\gen{\ORop(\att_1, \att_2)}$, $\gen{\SANDop(\att_1, \att_2)}$ and $\gen{\ANDop(\att_1, \att_2)}$ are sets of generators for their corresponding \adt.

First of all, for $\att \in \{\ORop(\att_1, \att_2), \SANDop(\att_1, \att_2), \ANDop(\att_1, \att_2), \Cop(\att_1, \att_2)\}$, we have $\gen{\att}\subseteq \semword{\att}$. Indeed,
for $ \gen{\ORop(\att_1, \att_2)}$ and $ \gen{\SANDop(\att_1, \att_2)}$, the composition rule is the same as the one used for $\semword{\att}$. Moreover, for $ \gen{\ANDop(\att_1, \att_2)}$ and $ \gen{\Cop(\att_1, \att_2)}$, we intersect with $\semword{\att}$. Furthermore, from \cref{prop:genbounded}, each element of $\gen{\att}$ is bounded, thus $\gen{\att}$ is a finite set. 

It remains to show that for each $\trace_\att \in \semword{\att}$, we have that there exists $\trace_{gen}=\val_1...\val_n\in \gen{\att}$ such that, $\trace_\att=\trace_1\val_1...\trace_n\val_n$ and for each trace $\trace=\trace_1'\val_1...\trace_n'\val_n$ with $\trace_1'\val_1\lift\trace_1\val_1, ..., \trace_n'\val_n\lift\trace_n\val_n$, we have $\trace\in \semword{\att}$. We distinguish the four \adts operators:

\begin{itemize}
    \item $\att=\ORop(\att_1, \att_2)$. If $\trace_\att \in \semword{\att}$, then either $\trace_\att\in \semword{\att_1}$ or $\trace_\att\in \semword{\att_2}$. With no loss of generality, we assume  $\trace_\att\in \semword{\att_1}$. Since $\gen{\att_1}$ is a set of generators for $\att_1$, there exists $\trace_{gen}=\val_1...\val_n \in \gen{\att_1} \subseteq \gen{\att}$ such that $\trace_\att=\trace_1\val_1...\trace_n\val_n$ and for each trace $\trace=\trace_1'\val_1...\trace_n'\val_n$ with $\trace_1'\val_1\lift\trace_1\val_1, ..., \trace_n'\val_n\lift\trace_n\val_n$, we have $\trace\in \semword{\att_1} \subseteq \gen{\att}$.
    
    \item $\att=\SANDop(\att_1, \att_2)$. If $\trace_\att \in \semword{\att}$, there exists $\trace_{gen}=\val_1...\val_n\in \gen{\att}$ such that $\trace_\att=\trace_1\val_1...\trace_n\val_n$. Indeed, we can write $\trace_\att=\trace_1\concat \trace_2$ with $\trace_1\in \semword{\att_1}$ and $\trace_2\in \semword{\att_2}$, thus there exists $\trace_{gen_1}=\val_1...\val_i$ and $\trace_{gen_2}=\val_{i+1}...\val_n$ with $\trace_{gen_1} \in \gen{\att_1}$ and $\trace_{gen_2} \in \gen{\att_2}$. Therefore we can consider $\trace_{gen}=\trace_{gen_1}\concat\trace_{gen_2}$. Moreover, the following holds: property for each trace $\trace=\trace_1'\val_1...\trace_n'\val_n$ with $\trace_1'\val_1\lift\trace_1\val_1, ..., \trace_n'\val_n\lift\trace_n\val_n$, we have $\trace\in \semword{\att_1} \subseteq \gen{\att}$. Indeed we can simply write $\trace=\trace_a\concat\trace_b$ with $\trace_a=\trace'_1 \val_1... \trace_i'\val_i$ and $\trace_b=\trace_{i+1}'\val_{i+1}...\trace_n\val_n$. So, $\trace_a\in \semword{\att_1}$ and $\trace_b\in \semword{\att_2}$ and we conclude $\trace\in \semword{\att}$.
    
    \item $\att=\ANDop(\att_1, \att_2)$. First, let's remark that, for a trace $\trace_\att\in\semword{\att}$, there exists $\trace_1 \in \semword{\att_1}$ and $\trace_2 \in \semword{\att_2}$ and $\trace_{gen}=\val_1...\val_n\in (\trace_1\shuff\trace_2) $ such that $\trace_\att=\trace_1\val_1...\trace_n\val_n$. Indeed, we can write $\trace_\att=\trace_a\concat\trace_b$ with either $\trace_a \in \semword{\att_1}$ and $\trace_\att \in \semword{\att_2}$ or $\trace_a \in \semword{\att_2}$ and $\trace_\att \in \semword{\att_1}$. If we assume the former case (the latter is symmetrical), then there exists $\trace_{gen_1}=b_1...b_m\in \gen{\att_1}$ such that $\trace_a=u_1b_1...u_mb_m$ and there exists $\trace_{gen_2}=c_1...c_p\in \gen{\att_2}$ such that $\trace_\att=v_1c_1...v_pc_p$. If we consider the trace $\trace_{gen}$ as the trace obtained from $\trace_\att$ by removing all valuations appearing neither in $\trace_{gen_1}$ nor in $\trace_{gen_2}$. Then, by construction: $\trace_{gen}=\val_1...\val_n\in (\trace_{gen_1}\shuff\trace_{gen_2})$ and $\trace_\att=\trace_1\val_1...\trace_n\val_n$. The rest of the proof is similar to the $\SANDop$ as we show that a trace $\trace=\trace_1'\val_1...\trace_n'\val_n$ with $\trace_1'\val_1\lift\trace_1\val_1, ..., \trace_n'\val_n\lift\trace_n\val_n$ always can be written as $\trace=\trace_1\concat\trace_2$ with $\trace_1\in \semword{\att_1}$ and $\trace\in \semword{\att_2}$. 
    
    \item $\att=\Cop(\att_1, \att_2)$. For a trace $\trace_\att\in \semword{\att}$, there exists $\trace_{gen_1}=\val_1...\val_n\in \gen{\att_1}$ such that, $\trace=\trace_1'\val_1...\trace_n'\val_n$ with $\trace_1'\val_1\lift\trace_1\val_1, ..., \trace_n'\val_n\lift\trace_n\val_n$, we have $\trace\in \semword{\att_1}$. Moreover, if $\trace_\att \in \semword{\Cop(\att_1,\att_2)}$, then $\trace_\att \notin \semword{\att_2}$. Furthermore, if $\Cop(\att_1, \att_2)$ is in $\ADTk[1]$, then $\att_2\in \ADTk[0]$, thus from \Cref{remark-adt0sem}, for each $\trace'$ such that $\trace' \lift\trace_\att$ we have $\trace'\notin\semword{\att_2}$. This shows us that $\trace_{gen_1}\in \semword{\Cop(\att_1,\att_2)}$ and by taking $\trace_{gen}=\trace_{gen_1}$, the property we need to prove holds. 
\end{itemize}
\end{proof}

\begin{remark}
\label{adt0shape}
    An \adt $\att\in \ADTk[0]$ with a non-empty semantics always can be written as an $\ORop$ over $\SANDop$ over leaves.
Indeed, for an \adt $\att\in \ADTk[0]$ and its set of generator $\gen{\att}=\{\trace_1, ...,\trace_n\}$, by combining \Cref{prop:genadt1} and \Cref{remark-adt0sem}, we have: $\semword{\att}=\underset{\val_1...\val_m \in \gen{\att}}{\bigcup} \Traces\val_1...\Traces\val_m$. Therefore, for  we can write $\att=\ORop(\att_{\trace_1}, ..., \att_{\trace_n})$, Where $\att_{\val_1...\val_m}=\SANDop(\formula_1, ..., \formula_m)$ with $\formula_j$ is the formula only satisfied by $\val_j$. 
\end{remark}

By \Cref{prop:genadt1},  for $\att\in\ADTk[1]$,  if some non-empty trace $\trace_\att \in \semword{\att}$, there is
some
$\trace_{gen}\in \gen{\att}$ with
$\trace_{gen}\lift \trace_\att$ that is a witness. Together with \Cref{prop:genbounded},
we conclude the proof of \Cref{boundedsizetracesADT1}.
Remark that, since
$\ADTk[0]\subseteq\ADTk[1]$, the result also holds for $\ADTk[0]$.

Observe that our proof technique heavily relies on the notion of set of
generators, which calls for the existence of a set of generators for
\adts above $\ADTk[1]$.  There is an \adt in $\ADTk[2]$ that shows it
is hopeless: this \adt specifies the language $(ab)^+$ (see
\Cref{sec-ADThierarchy}) but has no set of generators (see
\Cref{ex:ab+generator}).\\

  







\section{Complements of \Cref{sec-expressiveness}}
\label{app:lem:adt0tofoalt2foalt2dual}

\label{app:lem:adttofo}\lemmaadttofo*

\begin{proof}
We will show this result by induction over the \adt.

If $\att=\epsilon$, then we have clearly that $\foformula_\epsilon=\{\eword\}$. Similarly, for a Boolean formula $\formula$, we have that $\Models{\foformula_\formula}$ is all traces finishing with a valuation $\val$ such that $\val \models \formula$. Thus, the property holds for leaves attack-defense trees.

Now, if we assume that the property is true for $\att_1$ and $\att_2$, then, since semantics for the $\ORop$ operator is a union of sets and semantics of $C$ operator is a difference of sets, the property for $\foformula_{\ORop(\att_1, \att_2)}$ and $\foformula_{\Cop(\att_1, \att_2)}$ trivially holds.

The formulas for $\ANDop$ and $\SANDop$ operators are a little more difficult to understand since we need to distinguish whether an empty trace is in semantics of $\att_1$ or $\att_2$. Indeed, a $\trace$ is in semantics of $\SANDop(\att_1,\att_2)$ if and only if we can find $\trace_1$ and $\trace_2$ such that $\trace_1 \in \semword{\att_1}$ and $\trace_2 \in \semword{\att_2}$ if $\trace_1$ is nonempty, then this is equivalent to say that there exists a position in the trace where we can cut as done in the first parenthesis of $\foformula_{\SANDop(\att_1, \att_2)}$. Otherwise, if $\trace_1$ is empty, then we handle this case with the use of $\foformulalow{0}_{\att_1}$ in the second parenthesis of $\foformula_{\SANDop(\att_1, \att_2)}$. The idea for the $\ANDop$ operator is completely symmetrical except that we also need to consider the case where $\att_2$ has the empty trace in its semantics.

Finally, it is clear that $\foformula_\att$ is of size exponential as subformulas for subtrees may be duplicated (for $\SANDop$ and $\ANDop$).

\end{proof}

For the rest of this section, we often switch from an \adt $\att$ to its corresponding \FO-formulas $\foformula_\att$ as defined in \Cref{sec-expressiveness}, always while assuming \Cref{lem:adttofo}.

Let us recall some basic results on $\foalt$ and $\foaltdual$ hierarchies.

\begin{remark}[\cite{thomas1982classifying}, Lemma 2.4]
\label{easyfoalt}
    \begin{itemize}
        \item[(a)] The negation of a $\foalt$-formula is equivalent to a $\foaltdual$-formula.
        \item[(b)] A disjunction or conjunction of $\foalt$-formulas is equivalent to a $\foalt$ formula.
        \item[(c)] A Boolean combination of $\foalt$-formula is equivalent to a $\foalt[k+1]$-formula.
        \item[(d)] The statements (a)-(c) hold in dual form for $\foaltdual$-formulas.
    \end{itemize}
\end{remark}

\label{app:lem:adtktofok+1}\lemmaadtfoimproved*


\begin{proof}
Regarding \Cref{lem:adt0tofoalt2foalt2dual}, for both cases $\ADTk[0]
\subseteq \foalt[2]$ and $\ADTk[0] \subseteq \foaltdual[2]$, we
construct for each $\att \in \ADTk[0]$ an $\FO$-formula
$\foformula_\att$ by induction over $\att$ such that
$\semword{\att}=\{\trace \in \Traces: \trace \models
\foformula_\att\}$.

To prove that $\ADTk[0] \subseteq \foalt[2]$, one can refer to the
translation from \ADT into \FO, and easily verify that formula $\foformula_\att \in \foalt[2]$.



On the contrary, proving that $\ADTk[0] \subseteq
\foaltdual[2]$ requires some work. First remark that $\foformula_\epsilonadt \in
\foaltdual[1] \subseteq\foaltdual[2]$. Moreover, recall that for a Boolean formula
$\formula$, 
\[\foformula_\formula \equiv \forall y \exists x ( \lnot (x<y) \land \hat{\formula}(x))\]
where $\formulaFO(x)\egdef \bigvee_{\val \models \formula} \val(x)$.

Indeed, both formulae hold for a trace $\val_1...\val_n$ if, and only if, $\val_n\models\formula$.


Recall (\Cref{adt0shape}) that an arbitrary \adt $\att \in \ADTk[0]$,
can be equivalently written $\ORop(\att_1, ..., \att_n)$ where $\att_1,
...\att_n$ are either the empty leaf $\emptyleaf$ or of the form
$\SANDop(\formula_1, ..., \formula_m)$. Consider the following
$\foaltdual[2]$-formula for $\SANDop(\formula_1, ..., \formula_m)$:

\begin{align*}
\theta_{\SANDop(\formula_1, \formula_2, ..., \formula_m)} \egdef  
\forall y \ \exists x_1 \ \exists x_2 \ \ldots \ \exists x_m \quad  & x_1 < x_2 <... < x_m \ \land 
\ y \leq x_m \ \land \\ & \hat{\formula_1}(x_1)\land \hat{\formula_2}(x_2)\land ... \land \hat{\formula_m}(x_m)
\end{align*}

Clearly, for a word $\word \in \Traces$, we have $\word \in
\semword{\SANDop(\formula_1, ..., \formula_m)}$ if, and only if,
$\word \in \Traces \val_1 \concat ... \concat \Traces \val_m$ with
$\val_1 \models \formula_1, ..., \val_m \models \formula_m$, which is
exactly described by $\theta_{\SANDop(\formula_1, \formula_2, ...,
  \formula_m)} \in \foaltdual[2]$.  We now get back to the
construction of the $\foaltdual[2]$-formula for $\att$, that is a
disjunction of either
$\foformula_\emptyleaf$, or formulae of the form
$\theta_{\SANDop(\formula_1, \formula_2, ..., \formula_m)}$). Since they all belong to 
$\foaltdual[2]$, so does their disjunction. 

This concludes the proof of \Cref{lem:adt0tofoalt2foalt2dual}.\\

For \Cref{lem:itemadtktofok+1} of \Cref{lem:adtktofok+1},
 we start with some notations: let $k>0$, for each $d \in \setn$, we define $\ADTk(d) \subseteq \ADTk$ as follows.
 \[
 \left\{
 \begin{array}{l}
   \adtc{k}(0)\egdef \ADTk[k-1] \union \{\Cop(\att_1, \att_2) \,:\,\att_1, \att_2 \in \adtc{k-1}\}\\
\adtc{k}(d+1)\egdef \adtc{k}(d) \union \{\OP(\att_1, \att_2)\,:\, \OP \text{ arbitrary, and }
\att_1, \att_2 \in \adtc{k}(d)\}
 \end{array}
 \right.
 \]

We clearly have 
%
$\underset{d \in \mathbb{N}}{\bigcup} \adtc{k}(d)=\adtc{k}$.

We establish by a double induction over $k>0$ and $d\geq 0$ that for each $d \in \mathbb{N}$, $\ADTk(d) \subseteq
\foalt[k+1]$, which clearly entails $\adtc{k} \subseteq \foalt[k+1]$. 

\begin{description}
\item[$k=1$:] that is to show $\ADTk[1](d) \subseteq \foalt[2]$, for every $d \in \setn$.
\begin{description}
\item[$d=0$:] Let \adt $\att\in \ADTk[1](0)$. If $\att \in \ADTk[0]$, it is immediate by 
  \Cref{lem:adt0tofoalt2foalt2dual} that $\semword{\att}$ is definable by a $\foalt[2]$ formula.

  Otherwise $\att=\Cop(\att_1, \att_2)$ where $\att_1$ and $\att_2\in
  \ADTk[0]$. Again, by \Cref{lem:adt0tofoalt2foalt2dual} there exists
  formulas $\foformula_1 \in \foalt[2]$ and $\foformula_2 \in
  \foaltdual[2]$ that characterize $\semword{\att_1}$ and
  $\semword{\att_2}$ respectively, so that, by the very definition of
  operator $\Cop$, formula $\foformula_1 \land \lnot \foformula_2$
  characterizes $\semword{\att}$ and belongs to $\foalt[2]$ (see
  \Cref{easyfoalt}).

\item[$d>0$:] Let $\att\in \ADTk[1](d)$; note that $\counterdepth(\att)\leq 1$. If $\att \in \ADTk[1](d-1)$,
  we use the induction hypothesis over $d-1$ and we are
  done. Otherwise, $\att=\OP(\att_1,\att_2)$.

  Suppose $\OP \in \{\ORop, \SANDop,\ANDop\}$, and because
  $\att_1,\att_2 \in \ADTk[1](d-1)$, we know by induction over $d-1$
  that there exist $\foformula_1,\foformula_2 \in \foalt[2]$ that
  characterize $\semword{\att_1}$ and $\semword{\att_2}$ respectively.
  For $\OP=\ORop$ (resp.\ $=\SANDop$, $\ANDop$), formula
  $\foformula=\foformula_1 \lor\foformula_2$ (resp.\ $=\exists x
  [\leftrelativ[\foformula_1] \land \rightrelativ[\foformula_2]] \lor
  (\leftrelativ[\foformula_1][0] \land \foformula_2)$, $=\exists x
  [(\leftrelativ[\foformula_1] \land \foformula_{2}) \lor
    (\leftrelativ[\foformula_{2}] \land \foformula_{1})] \lor
  (\leftrelativ[\foformula_{1}][0] \land \foformula_{2}) \lor
  (\leftrelativ[\foformula_{2}][0] \land \foformula_{1})$) 
  characterises $\semword{\OP(\att_1,\att_2)}$ and belongs to $\foalt[2]$
  (see \Cref{foaltrelativization}).

  Now, suppose $\OP=\Cop$. Therefore, $\att_2 \in \ADTk[0](d-1)$
  otherwise $\counterdepth(\att)=2$. By
  \Cref{lem:adt0tofoalt2foalt2dual}, we can assume that $\foformula_2
  \in \foaltdual[2]$ Now formula $\foformula = \foformula_1 \land
  \lnot \foformula_2$ characterises $\semword{\att}$ and is indeed in
  $\foalt[2]$ (see \Cref{easyfoalt}).

\end{description}
\item[$k>1$:] that is to show $\ADTk(d) \subseteq \foalt[k+1]$, for every $d \in \setn$.
 \begin{description}
\item[$d=0$:] Let $\att \in \ADTk(0)$. If $\att \in \ADTk[k-1]$, we
  resort to the induction hypothesis over $k$ to conclude since
  $\foalt[k] \subseteq \foalt[k+1]$. Otherwise, $\att=\Cop(\att_1,
  \att_2)$ with $\att_1$ and $\att_2 \in \ADTk[k-1]$. By induction
  over $k-1$, $\semword{\att_1}$ and $\semword{\att_2}$ can be
  equivalently characterized by $\foalt[k]$-formulas, say
  $\foformula_1$ and $\foformula_2$ respectively. Now the formula $\foformula
  \egdef\foformula_1\land \lnot \foformula_2$ characterizes
  $\semword{\att}$ and, by \Cref{easyfoalt} is clearly in
  $\foalt[k+1]$.

\item [$d>0$:] Let $\att \in\ADTk(d)$. If $\att \in
  \ADTk(d-1)$, we use the induction hypothesis on $d-1$.  Otherwise,
  $\att=\OP(\att_1,\att_2)$ with $\att_1,\att_2 \in\ADTk(d-1)$. We can proceed in a way similar to what we did
  for the previous case $k=1,d>0$, by noticing that for the case where
  $\OP=\Cop$, namely $\att=\Cop(\att_1,\att_2)$, the tree $\att_2 \in
  \ADTk[k-1]$ otherwise $\att$ would not belong to $\ADTk$.

\end{description}

\end{description}

\end{proof}



\lemmadotdepth*

\begin{proof}

For \Cref{lem:withdotsdepth}, the proof is conducted by induction over $ell$. We start by showing that $\dotdepth[0]\subseteq\ADTk[2]$. We recall that a language $\lang$ is in $\dotdepth[0]$ if it is finite or co-finite. We distinguish the two cases. 

If $\lang$ is finite, we have that $L=\{\trace_1, ..., \trace_n\}=\{\trace_1\}\union ... \union \{\trace_n\}$ where $\trace_1, ..., \trace_n \in \Traces$.  For a trace $\trace=\val_1, ..., \val_n$, we use the notation $\att_\trace$ to express the \adt $\SANDop(\strict{\val_1}, ..., \strict{\val_n})$. Clearly, $\semword{\att_\trace}=\{\trace\}$ and $\att_\trace\in \ADTk[1].$ Then if we consider the \adt $\att_\lang=\ORop(\att_{\trace_1}, ...,\att_{\trace_n})$,  we have by construction that $\semword{\att_\lang}=\lang$. Moreover $\att_\lang\in \ADTk[1]$ since an \adt $\att_\trace$ only uses non-nested \Cop.

If $\lang$ is co-finite, then it can be written as $\alphabettraces\setminus \lang'$ with $\lang'$ a finite language. Thus, by definition of $\Cop$, if we consider $\att_\lang=\Cop(\true, \att_{\lang'})$, we have $\semword{\att_\lang}=\lang$. We also have $\att_\lang\in \ADTk[2]$. Since $\att_{\lang'}\in \ADTk[1]$. We concludes that $\dotdepth[0]\subseteq\ADTk[2]$. 

Now, for $k>0$, we assume $\dotdepth[\ell-1]\subseteq\ADTk[2\ell]$ and we show that $\dotdepth\subseteq\ADTk[2\ell+2]$.

Let $\lang \in \dotdepth$, so  $\lang$ is a Boolean combination of languages of the form $\lang_1\concat...\concat \lang_n$ with $\lang_1, ..., \lang_n \in \dotdepth[\ell-1]$. With no loss of generality, we can assume the Boolean combination written in disjunctive normal form where each conjunct is of the form $M_1 \inter ... \inter M_m$, where the sets $M_i$ can be written either as $\lang_1\concat...\concat \lang_n$ or as $(\lang_1\concat...\concat \lang_n)^c$ with $\lang_1, ..., \lang_n \in \dotdepth[\ell-1]$. Moreover, We can rewrite each conjunct by using the equality $M_1 \inter ... \inter M_m=(M_1^c\union ... \union M_m^c)^c$ to suppress all intersections of the Boolean expression, it results an expression only using union operators and complement operators. Furthermore, the maximal tower of complementary operator is $2$. By induction hypothesis, we know that an expression $\lang_1\concat...\concat \lang_n$ with $\lang_1, ..., \lang_n \in \dotdepth[\ell-1]$ is definable in $\ADTk[2\ell]$, indeed we consider the $\SANDop$ of the \adts associated with the languages $\lang_1$, ... $\lang_n$. Then, we can use the Boolean expression to construct the adt for $\lang$, where the union is replaced by the $\ORop$ and the complementation is replaced by $\coatt[.]$. Since the maximal tower of complementary operator of the Boolean expression is $2$, we need at most two more nested \Cop to express the full Boolean expression. Therefore, the final \adt is in $\ADTk[2\ell+2]$, which concludes.

\Cref{lem:withdotsdepth} of \Cref{lem:dotdepth} is an immediate corollary of
the fact that $\foalt \subseteq \dotdepth$.


Regarding \Cref{lem:fo1inADT0} of \Cref{lem:dotdepth}, let $\foformula
\in \foalt[1]$ a first-order formula of the form $\exists x_1, ...,
\exists x_n \foformula'(x_1, ..., x_n)$ with $\foformula'$
quantifier-free. With no loss of generality, we consider
$\foformula'=C_1\lor ... \lor C_m$ in disjunctive normal form, in
other words $C_1, ..., C_m$ only use conjunctions of literals of the
form $x_i<x_j$, or $\val(x)$, or their negations. As expected, the
\adt corresponding to $\foformula$ is of the form
$\ORop(\att_{C_1},\ldots,\att_{C_m})$, and we now explain how to build
$\att_C$.

If clause $C$ is not satisfiable, we
associate \adt $\false$, otherwise we proceed as follows.

First, we show that with no loss of generality clause $C$ of
$\foformula'$ specifies a linear order over $x_1, ..., x_n$ together
with literals based on predicates $\val(x_i)$.  To do so, we decompose
the clause $C$, according to $C^< \land C^{\val}$, where the first
conjunct gathers all $<$-based literals and the second gathers the
rest.  Now, the sub-clause $C^<$ naturally induces a partial order
between all free variables $x_1, ..., x_n$ of $\foformula'$. It is
easy to see that clause $C^<$ is equivalent to a disjunctive formula
$\foformula_{C^<}$ where each disjunct describes a possible
linearization of this partial order. For example, if
$C^<=x<y \land \lnot y<z$ then $\foformula_{C^<}=(x<y \land y=z) \lor
(x<z \land z<y) \lor (z=x \land x<y) \lor (z<x \land x<y)$, so that
$C$ is equivalent to $(x<y \land y=z \land C^{\val}) \lor (x<z \land
z<y\land C^{\val}) \lor (z=x \land x<y\land C^{\val}) \lor (z<x \land
x<y\land C^{\val})$, which concludes.

In clause $C$ (which now specifies a linear order of the variables), if $x_i=x_j$ with $i\neq j$, we
replace all occurrences of $x_j$ by $x_i$ in $C$ and obtain a clause equivalent to $C$ of the
form $x_{i_1}<...<x_{i_\ell} \land P_1(x_{i_1})\land ...\land P_\ell(x_{i_\ell})$ where each 
$P_k(x_{i_k})$ is the conjunction of literals of the form $\val(x_{i_k})$ and
$\lnot\val(x_{i_k})$. For convenience, we still write $C$ this clause.

We associate with clause $C$ the \adt
$\att_{C}=\attall[\SANDop(\formula_1, ..., \formula_\ell)]$ where
$\formula_i$ is made of the conjunction of the propositional formulas
stemming from the valuations, or negation of valuations, that occur in
$P_k(x_{i_k})$. It can be shown that a trace
$\trace \in \semword{\att_C}$ if, and only, if $\trace\models\exists
x_{i_1} ... \exists x_{i_\ell} x_{i_1}<...<x_{i_\ell} \land P_1(x_{i_1})\land ...\land
P_\ell(x_{i_\ell})$.

\end{proof}

\section{Complements of \Cref{sec-ADThierarchy}}
\label{app:Ltoadt}

We start this section with some results and definitions to help us to prove \Cref{Ltoadt}. We then conduct the proof of \Cref{Ltoadt}, and we finish with the proof of \Cref{lem-collapse}, a useful result to prove \Cref{expressiveness}.

\begin{lemma}
\label{wolfgangmesure}
    Let $\trace_1, \trace_2,\trace \in \{a,b\}^*$. The following assertions hold:
    \begin{enumerate}
        \item\label{wolfgangmesureitema} $\wolfgangmesure{\trace_1 \concat \trace_2}=\wolfgangmesure{\trace_1}+\wolfgangmesure{\trace_2}$
        \item\label{wolfgangmesureitemb} If $\wolfgangmesure{\trace}=k\geq 0$, then, for each $i \in \{0, 1, ... , k\}$ there exists $\trace'\prefix \trace$ such that $\wolfgangmesure{\trace'}=i$. 
        \item\label{wolfgangmesureitemc} If $\wolfgangmesure{\trace}=k>0$, then there exists $\trace_1$, $\trace_2$ such that $\trace=\trace_1\concat a \concat\trace_2$ with $\wolfgangmesure{\trace_2}=0$ and for each $\trace_2'\prefix\trace_2$, we have $\wolfgangmesure{\trace_2'}\geq0$
    \end{enumerate}
\end{lemma}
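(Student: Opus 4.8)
The plan is to dispatch the three items in order; all are elementary counting facts about $\wolfgangmesure{\cdot}$, and the only slightly non-routine point (needed for items~\ref{wolfgangmesureitemb} and~\ref{wolfgangmesureitemc}) is a discrete intermediate-value argument on the sequence of prefix measures. Item~\ref{wolfgangmesureitema} is immediate from the definition: the number of occurrences of $a$ (resp.\ of $b$) in $\trace_1\concat\trace_2$ equals the number in $\trace_1$ plus the number in $\trace_2$, and subtracting the two identities yields $\wolfgangmesure{\trace_1\concat\trace_2}=\wolfgangmesure{\trace_1}+\wolfgangmesure{\trace_2}$.

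For item~\ref{wolfgangmesureitemb}, put $n\egdef\sizetrace{\trace}$ and, for $0\leq j\leq n$, let $u_j$ be the prefix of $\trace$ of length $j$, so that $u_0=\eword$, $u_n=\trace$, and $u_j=u_{j-1}\concat\trace[j]$. By item~\ref{wolfgangmesureitema} and since $\trace[j]\in\{a,b\}$ we get $\wolfgangmesure{u_j}-\wolfgangmesure{u_{j-1}}=\wolfgangmesure{\trace[j]}\in\{-1,+1\}$. Thus $\bigl(\wolfgangmesure{u_0},\dots,\wolfgangmesure{u_n}\bigr)$ is an integer sequence with $\wolfgangmesure{u_0}=0$, $\wolfgangmesure{u_n}=k\geq 0$, and consecutive terms differing by exactly $1$; such a sequence attains every value of $\{0,1,\dots,k\}$, so for a given $i$ in this range one picks $j$ with $\wolfgangmesure{u_j}=i$ and takes $\trace'\egdef u_j$ (the empty prefix when $i=0$).

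For item~\ref{wolfgangmesureitemc}, keep the notation $n,u_j$ and assume $k=\wolfgangmesure{\trace}>0$. Let $j^\star\egdef\max\{\,j\in\{0,\dots,n\}:\wolfgangmesure{u_j}\leq k-1\,\}$; this set is non-empty since $\wolfgangmesure{u_0}=0\leq k-1$ (as $k\geq 1$), and $j^\star<n$ since $\wolfgangmesure{u_n}=k$. By maximality $\wolfgangmesure{u_j}\geq k$ for all $j>j^\star$, in particular $\wolfgangmesure{u_{j^\star+1}}\geq k$; together with $\wolfgangmesure{u_{j^\star+1}}=\wolfgangmesure{u_{j^\star}}+\wolfgangmesure{\trace[j^\star+1]}\leq (k-1)+1=k$, this forces $\wolfgangmesure{u_{j^\star+1}}=k$, whence $\wolfgangmesure{u_{j^\star}}=k-1$ and $\trace[j^\star+1]=a$. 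Setting $\trace_1\egdef u_{j^\star}$ and $\trace_2\egdef\trace[j^\star+2]\cdots\trace[n]$ gives $\trace=\trace_1\concat a\concat\trace_2$ with $a=\trace[j^\star+1]$; and for any prefix $\trace_2'$ of $\trace_2$ one has $\trace_1\concat a\concat\trace_2'=u_{j'}$ for some $j'$ with $j^\star+1\leq j'\leq n$, so by item~\ref{wolfgangmesureitema} $\wolfgangmesure{\trace_2'}=\wolfgangmesure{u_{j'}}-\wolfgangmesure{u_{j^\star+1}}=\wolfgangmesure{u_{j'}}-k$, which equals $0$ for $j'=n$ (i.e.\ $\trace_2'=\trace_2$) and is $\geq 0$ in general because $j'>j^\star$.

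None of the three items is genuinely difficult; the one place that calls for an idea rather than bookkeeping is the choice of cut point in item~\ref{wolfgangmesureitemc}, namely taking the \emph{last} prefix whose measure is still below $k$, after which everything follows from item~\ref{wolfgangmesureitema}. The remaining care is for the degenerate cases ($\trace_2'=\eword$, or $j^\star+1=n$ so that $\trace_2=\eword$) and for the implicit convention that $\eword$ counts as a prefix, which is what makes the case $i=0$ of item~\ref{wolfgangmesureitemb} work.
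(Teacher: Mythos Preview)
Your proof is correct and follows the same approach as the paper: item~\ref{wolfgangmesureitema} is immediate from the definition, item~\ref{wolfgangmesureitemb} is the discrete intermediate-value argument on the $\pm 1$ steps of the prefix measures, and item~\ref{wolfgangmesureitemc} is extracted from that same structure. The paper's proof is just a two-line sketch (``trivial'', ``IVT'', ``application of item~\ref{wolfgangmesureitemb}''), so you have simply written out in full what the paper leaves implicit; your explicit choice of $j^\star$ as the last prefix with measure $\le k-1$ is exactly what an ``application of item~\ref{wolfgangmesureitemb}'' unfolds to.
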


\Cref{wolfgangmesureitema} is trivial from the definition of $\wolfgangmesure{.}$, \Cref{wolfgangmesureitemb} stems from the classic "Intermediate Value Theorem" in mathematics -- that is applicable since  extending a trace $\trace$ with a letter only increments or decrements $\wolfgangmesure{\trace}$ by $1$. \Cref{wolfgangmesureitemc} is obtained by an application of \Cref{wolfgangmesureitemb}.

We now start the proof of \Cref{Ltoadt}.

\lemmaltoadt*
\begin{proof}

Recall $\witness \notin \ADTk[k-2]$ (page \pageref{Ltoadt}). We here focus on proving $\witness \notin \ADTk[k+1]$.

We introduce companion languages of $\witness$, that were introduced in \cite{thomas1984application} namely $\witness^+$ and $\witness^-$:

    \begin{itemize}
        \item $\witness^+\egdef\{\word \in \Traces$ s.t. $ \wolfgangmesure{\word}=k$ and for each $\word'\prefix\word$,  $0 \leq \wolfgangmesure{\word'} \leq k\}$,
        \item $\witness^-\egdef\{\word \in \Traces$ s.t. $ \wolfgangmesure{\word}=-k$ and for each $\word'\prefix\word$,  $-k \leq \wolfgangmesure{\word'} \leq 0\}$,
    \end{itemize}

In \cite{thomas1984application}, alternative definitions of $\witness$, $\witness^+$ and $\witness^-$ are provided:
\begin{itemize}
    \item $\witness[0]=\witness[0]^+=\witness[0]^-\egdef\{\eword\}$;
    \item $\witness[k+1]\egdef(\witness^+a\Traces\cap \Traces b \witness^-) \setminus (\Traces a \witness^+ a \Traces \cup \Traces b \witness^- b \Traces)$;
    \item $\witness[k+1]^+\egdef(\witness^+a\Traces\cap \Traces a \witness^+) \setminus (\Traces a \witness^+ a \Traces \cup \Traces b \witness^- b \Traces)$;
    \item $\witness[k+1]^-\egdef(\witness^-b\Traces\cap \Traces b \witness^-) \setminus (\Traces a \witness^+ a \Traces \cup \Traces b \witness^- b \Traces)$.
\end{itemize}

In the following, we may use the most convenient characterisation of $\witness$, $\witness^+$ and $\witness^-$. Our proof relies on a third characterisation. We build languages $\newwitness$, $\newwitness^+$ and $\newwitness^-$ where we eliminate the $\inter$ operator occurring in the recursive definitions of $\witness$, $\witness^+$ and $\witness^-$. For $\word\in \{a,b\}^*$ we let its \emph{swap} be the word $\dual{\word}$ obtained by swapping $a$ and $b$ in $\word$. We lift this operator to languages like usual: $\dual{L}=\{\dual{\word}: \word\in L\}$. Note that $\witness^-=\dual{\witness^+}$. 

 We set: 
    \begin{itemize}
    \item $\newwitness[0]=\newwitness[0]^+=\newwitness[0]^-\egdef\{\eword\}$;
    \item $\newwitness[k+1]\egdef(\newwitness^+a \Traces b \newwitness^-) \setminus (\Traces a \newwitness^+ a \Traces \cup \Traces b \newwitness^- b \Traces)$;
    \item $\newwitness[k+1]^+\egdef((\newwitness^+a \Traces a \newwitness^+)
    \union (\underset{i\leq k}{\bigcup} \newwitness[i]a \newwitness^+))
     \setminus (\Traces a \newwitness^+ a \Traces \cup \Traces b \newwitness^- b \Traces)$;
    \item $\newwitness[k+1]^-\egdef \dual{\newwitness[k+1]^+}$.
\end{itemize}

\begin{lemma}
\label{lem:newwitness}
    For every $k\geq 0$, we have $\witness=\newwitness$, $\witness^+=\newwitness^+$ and $\witness^-=\newwitness^-$.
    
\end{lemma}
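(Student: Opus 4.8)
The plan is to argue by strong induction on $k$. The base case $k=0$ is immediate, all six languages being $\{\eword\}$. For the inductive step, assume $\witness[j]=\newwitness[j]$, $\witness[j]^+=\newwitness[j]^+$ and $\witness[j]^-=\newwitness[j]^-$ for all $j\leq k$, and consider level $k+1$. Two simplifications come first. On the one hand, by the induction hypothesis the ``subtracted'' sets coincide at level $k$, i.e.\ $\Traces a \witness[k]^+ a \Traces \cup \Traces b \witness[k]^- b \Traces = \Traces a \newwitness[k]^+ a \Traces \cup \Traces b \newwitness[k]^- b \Traces$, and likewise $\newwitness[i]=\witness[i]$ for every $i\leq k$; so in the definitions of $\newwitness[k+1]$ and $\newwitness[k+1]^+$ I may replace every $\newwitness[i]$, $\newwitness[i]^{\pm}$ by $\witness[i]$, $\witness[i]^{\pm}$. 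On the other hand, since $\witness[k+1]^-=\dual{\witness[k+1]^+}$ and $\newwitness[k+1]^-\egdef\dual{\newwitness[k+1]^+}$, once $\witness[k+1]^+=\newwitness[k+1]^+$ is proved the identity for the $-$ version follows by applying the swap operator to both sides. Hence it remains to prove $\witness[k+1]=\newwitness[k+1]$ and $\witness[k+1]^+=\newwitness[k+1]^+$; and since the subtracted sets are identical in the two definitions, it suffices to compare the ``positive'' parts up to that subtracted set.

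For both identities, the inclusion of the $\newwitness$-language into the $\witness$-language is the easy direction and I would do it first. For $\witness[k+1]$ it amounts to $\witness[k]^+ a \Traces b \witness[k]^- \subseteq (\witness[k]^+ a \Traces)\cap(\Traces b \witness[k]^-)$, which is clear from the shape of a word $u\concat a\concat v\concat b\concat z$ with $u\in\witness[k]^+$, $z\in\witness[k]^-$. For $\witness[k+1]^+$ one checks that each summand of (the simplified) $\newwitness[k+1]^+$ lies in $(\witness[k]^+ a \Traces)\cap(\Traces a \witness[k]^+)$: the summand $\witness[k]^+ a \Traces a \witness[k]^+$ does so trivially, and for a word $w=u\concat a\concat u'$ with $u\in\witness[i]$ ($i\leq k$) and $u'\in\witness[k]^+$, a short computation with $\wolfgangmesure{\cdot}$ (\Cref{wolfgangmesureitema}) shows that $\wolfgangmesure{w}=k+1$ and all prefixes of $w$ have $\wolfgangmesure{\cdot}$-value in $[0,k+1]$; hence the prefix of $w$ that reaches value $k$ just before $\wolfgangmesure{\cdot}$ first hits $k+1$ is in $\witness[k]^+$ and is followed by an $a$, so $w\in\witness[k]^+ a \Traces$, and $w\in\Traces a \witness[k]^+$ is immediate. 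This yields $\newwitness[k+1]\subseteq\witness[k+1]$ and $\newwitness[k+1]^+\subseteq\witness[k+1]^+$.

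The reverse inclusions are the crux, and this is where the forbidden factor-languages do all the work. Take $w\in\witness[k+1]$: then $w$ has no factor $a\concat\xi\concat a$ with $\xi\in\witness[k]^+$ nor $b\concat\eta\concat b$ with $\eta\in\witness[k]^-$, and $w=z_1\concat a\concat t_1=t_2\concat b\concat z_2$ with $z_1\in\witness[k]^+$ and $z_2\in\witness[k]^-$; pick $z_1$ and $z_2$ of minimal length. Let $p$ be the position of the distinguished $a$ (the one just after $z_1$) and $q$ the position of the distinguished $b$ (the one just before $z_2$); necessarily $p\neq q$. If $p<q$, then $w=z_1\concat a\concat v\concat b\concat z_2$ with $v$ the (possibly empty) factor strictly between positions $p$ and $q$, so $w\in\witness[k]^+ a \Traces b \witness[k]^-\subseteq\newwitness[k+1]$. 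If $p>q$, the prefix $z_1$ and the suffix $z_2$ overlap; using \Cref{wolfgangmesureitemb} and \Cref{wolfgangmesureitemc} I would track the values of $\wolfgangmesure{\cdot}$ across the overlap and extract, from a factor of $w$ delimited by two consecutive extremal values of $\wolfgangmesure{\cdot}$, a forbidden factor $a\concat\xi\concat a$ (with $\xi\in\witness[k]^+$) or $b\concat\eta\concat b$ (with $\eta\in\witness[k]^-$), contradicting the choice of $w$; hence this case cannot occur and $\witness[k+1]\subseteq\newwitness[k+1]$. The argument for $\witness[k+1]^+\subseteq\newwitness[k+1]^+$ follows the same dichotomy, the only difference being that in the overlapping case one does not reach a contradiction but instead recognises $w$ as an element of $\witness[i]\concat a\concat\witness[k]^+$ for the value $i\leq k$ of $\wolfgangmesure{\cdot}$ at the left end of the overlap -- precisely the extra disjunct $\bigcup_{i\leq k}\witness[i]\, a\, \witness[k]^+$ added in the definition of $\newwitness[k+1]^+$ -- and the absence of the factors $a\concat\xi\concat a$, $b\concat\eta\concat b$ is exactly what forces $i$ to be at most $k$. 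I expect this overlap analysis to be the main obstacle: converting the absence of the two forbidden factor-languages into either a contradiction (for $\witness[k+1]$) or membership in the extra disjunct (for $\witness[k+1]^+$) requires a careful case analysis of where $\wolfgangmesure{\cdot}$ attains its local extrema along $w$.
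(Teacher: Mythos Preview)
Your overall architecture matches the paper's: induction on $k$, reducing the $-$ case to the $+$ case via the swap, and a position-based case split on the two distinguished occurrences. The easy direction $\newwitness[k+1]^{(\pm)}\subseteq\witness[k+1]^{(\pm)}$ is fine, and you in fact give more detail than the paper (which just calls it ``clear'').

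For the hard direction of $\witness[k+1]=\newwitness[k+1]$ you make life harder than necessary. The paper does \emph{not} use the forbidden-factor subtraction here at all: it proves the stronger equality of the positive parts, namely $(\witness[k]^+a\Traces)\cap(\Traces b\witness[k]^-)=\witness[k]^+a\Traces b\witness[k]^-$, by a direct $\wolfgangmesure{\cdot}$ computation showing $q<p$ is impossible. Concretely, if $q<p$ then $w=\tracec_2\,b\,w'\,a\,\traceb_2$ with $\tracec_2 b w'=z_1\in\witness[k]^+$ and $w' a \traceb_2=z_2\in\witness[k]^-$; since $\tracec_2$ is a prefix of $z_1$ one gets $\wolfgangmesure{\tracec_2 b}\leq k-1$, and since $w'$ is a prefix of $z_2$ one gets $\wolfgangmesure{w'}\leq 0$, so $\wolfgangmesure{z_1}\leq k-1$, contradiction. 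Your plan to ``extract a forbidden factor'' would be vacuous here (the case never arises), but you would likely not discover this and would instead chase a harder argument.

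For $\witness[k+1]^+$ there are two genuine gaps. First, the ``same dichotomy'' does not carry over: both distinguished occurrences are $a$'s, so $p=q$ is a priori possible and must be excluded separately (the paper does this by exhibiting a forbidden $a\,\xi\,a$ factor with $\xi\in\witness[k]^+$). Second, your description of the overlap case is incorrect. In that case one writes $w=\traced_1\,a\,\traced_2\,a\,\traced_3$ with $\traced_1 a \traced_2\in\witness[k]^+$ and $\traced_2 a \traced_3\in\witness[k]^+$; to place $w$ in $\bigcup_{i\leq k}\witness[i]\,a\,\witness[k]^+$ one needs $\traced_1\in\witness[i]$, which forces $\wolfgangmesure{\traced_1}=0$ (the index $i$ is the \emph{maximum} of $\wolfgangmesure{\cdot}$ over prefixes of $\traced_1$, not its value at the end). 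Your sentence ``for the value $i\leq k$ of $\wolfgangmesure{\cdot}$ at the left end of the overlap'' conflates these two quantities. The real work, done in the paper via the forbidden $a\,\xi\,a$ factor and \Cref{wolfgangmesure}, is precisely to show $\wolfgangmesure{\traced_1}=0$; that $i\leq k$ then follows for free since $\traced_1$ is a prefix of an element of $\witness[k]^+$.
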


\begin{proof}

The proof is conducted by induction over $k$. By definition, $\witness[0]=\newwitness[0]$, $\witness[0]^+=\newwitness[0]^+$ and $\witness[0]^-=\newwitness[0]^-$. For the rest of the proof, we fix $k>0$.


\begin{itemize}
    \item We start to show $\witness[k+1]=\newwitness[k+1]$. Namely, by replacing $\newwitness$ (respectively $\newwitness^+$, $\newwitness^-$) by $\witness$ (respectively $\witness^+$, $\witness^-$) that: 
    \begin{align*}
        (\witness^+a\Traces\cap \Traces b \witness^-) \setminus (\Traces a \witness^+ a \Traces \cup \Traces b \witness^- b \Traces) \\ =(\witness^+a \Traces b \witness^-)\setminus (\Traces a \witness^+ a \Traces \cup \Traces b \witness^- b \Traces)
    \end{align*}
    It is enough to show that $(\witness^+a\Traces\cap \Traces b \witness^-)=(\witness^+a \Traces b \witness^-)$ for the equality to hold. Because $(\witness^+a\Traces\cap \Traces b \witness^-)\supseteq(\witness^+a \Traces b \witness^-)$ is clear, we focus on showing $(\witness^+a\Traces\cap \Traces b \witness^-)\subseteq(\witness^+a \Traces b \witness^-)$.
Let $\word$ be in $\witness^+a\Traces\cap \Traces b \witness^- $. As $\word \in \witness^+a\Traces$, we can write $\word=\traceb_1 \concat a \concat \traceb_2$ with $\traceb_1\in \witness^+$ and let $n$ be the position of the distinguished $a$ occurrence in $\word$. As we also have $\word \in \Traces b \witness^-$, we can write $\word=\tracec_2\concat b\concat \tracec_1$ with $\tracec_1 \in \witness^-$ and let $m$ be the position of the distinguished $b$ occurrence in $\word$. Clearly $n\neq m$. Moreover, we cannot have $m<n$. Indeed, if $m<n$ we can write $\word=\tracec_2\concat b \concat \word' \concat a \concat \traceb_2$ with $\tracec_2 \concat b \concat \word'=\traceb_1 \in \witness^+$ and $\word'\concat a\concat \traceb_2=\tracec_1 \in \witness^-$. Since $\tracec_2$ is a prefix of $\traceb_1\in \witness^+$, we have  $\wolfgangmesure{\tracec_2} \leq k$, and therefore $ \wolfgangmesure{\tracec_2 \concat b}\leq k-1$ (by \Cref{wolfgangmesureitema} of \Cref{wolfgangmesure}). Moreover, since $\word'$ is a prefix of $\tracec_1\in \witness^-$, we have $\wolfgangmesure{\word'}\leq 0$. We conclude that $\wolfgangmesure{\tracec_2\concat b \concat \word'}\leq k-1$ (by \Cref{wolfgangmesureitema} of \Cref{wolfgangmesure}), which contradicts $\tracec_2\concat b \concat \word' \in \witness^+$. Therefore $n<m$, and $\word \in (\witness^+a \Traces b \witness^-)$, which concludes.

\item We now prove that $\witness[k+1]^+=\newwitness[k+1]^+$. Because the inclusion $\witness[k+1]^+\supseteq\newwitness[k+1]^+$ is clear, we focus on $\witness[k+1]^+\subseteq\newwitness[k+1]^+$. Namely, by replacing $\newwitness$ (respectively $\newwitness^+$, $\newwitness^-$) by $\witness$ (respectively $\witness^+$, $\witness^-$) that:
\begin{align*}
    (\witness^+a\Traces\cap \Traces a \witness^+) \setminus (\Traces a \witness^+ a \Traces \cup \Traces b \witness^- b \Traces) \subseteq \\ (\witness^+a \Traces a \witness^+) \union (\underset{i\leq k}{\bigcup} \witness[i]a \witness^+))\setminus (\Traces a \witness^+ a \Traces \cup \Traces b \witness^- b \Traces)
\end{align*} 

Let $\word\in\witness[k+1]^+$. As, on the one hand, $\word\in
\witness^+a\Traces$, we can write $\word=\traceb_1\concat
a\concat\traceb_2$ with $\traceb_1\in \witness^+$ and let $n$ be the
position of the distinguished $a$ occurrence in $\word$. As, on the
other hand, we also have $\word\in \Traces a \witness^+$, we can write
$\word=\tracec_2\concat a\concat\tracec_1$ with $\tracec_1 \in
\witness^+$ and let $m$ be the position of the distinguished $a$
occurrence in $\word$.

First of all, $n\neq m$: indeed, if $n=m$, we can write $\word=
\traceb_1 \concat a\concat \tracec_1$ with $\traceb_1,\tracec_1\in
\witness^+$. Since $\wolfgangmesure{\traceb_1\concat a}=k+1$, we can
write $\traceb_1\concat a=\tracef_1\concat a\concat \tracef_2$ with
$\tracef_2 \in \witness^+$. Since $\tracec_1 \in \witness^+$, we can
write $\tracec_1=a\concat \tracec_1'$. So that $\word=
\tracef_1\concat a\concat \tracef_2\concat a\concat \tracec_1'\in
(\Traces a \witness^+ a \Traces)$ which entails
$\word\notin\witness[k+1]$, leading to a contradiction.

We distinguish the two remaining cases.

\bigskip

\begin{itemize}
    \item If $n<m$, then we have $\word \in (\witness^+a \Traces a \witness^+)$, which entails $\word\in\newwitness[k+1]^+$.
\bigskip
    \item If $m<n$, we can write $\traceb_1 = \traced_1\concat
      a\concat \traced_2 \in \witness^+$ and $\tracec_1 = \traced_2
      \concat a\concat \traced_3 \in \witness^+$, where
      $\word=\traced_1 \concat a \concat \traced_2 \concat a\concat
      \traced_3$. Notice that $\wolfgangmesure{\traced_1}\ge 0$, as
      $\traceb_1 \in \witness^+$, but we show
      that $\wolfgangmesure{\traced_1}= 0$: if
      $\wolfgangmesure{\traced_1} > 0$, we can rewrite
      $\traced_1=\tracef_1\concat a\concat \tracef_2$ with
      $\wolfgangmesure{\tracef_2}=0$ and for each $\tracef_2'$ prefix
      of $\tracef_2$, we have $\wolfgangmesure{\tracef_2'}\geq 0$ (by
      \Cref{wolfgangmesureitemc} of \Cref{wolfgangmesure}). Moreover,
      $\wolfgangmesure{a\concat \traced_2 \concat a\concat
        \traced_3}=k+1$ (by \Cref{wolfgangmesureitema} of
      \Cref{wolfgangmesure} since $\wolfgangmesure{\traced_2 \concat a\concat
        \traced_3}=k$), thus we can write
      $a\concat\traced_2\concat a\concat \traced_3= \traceg_1\concat
      a\concat \traceg_2$ with $\traceg_1\in \witness^+$ (by
      \Cref{wolfgangmesureitemb} of \Cref{wolfgangmesure}). Therefore,
      $\word=\tracef_1\concat a\concat \tracef_2\concat
      \traceg_1\concat a \concat \traceg_2$ with
      $\tracef_2\concat\traceg_1\in \witness^+$(by
      \Cref{wolfgangmesureitema} of \Cref{wolfgangmesure}), so that $\word
      \in (\Traces a \witness^+ a \Traces)$, which contradicts 
      $\word\in\witness[k+1]$.

      Since $\wolfgangmesure{\traced_1}= 0$, we have $\traced_1 \in
      \underset{i\leq k}{\bigcup} \witness[i]$, and we conclude
      $\word=\traced_1\concat a\concat \tracec_1 \in \underset{i\leq
        k}{\bigcup} \witness[i]a \witness^+$. Also because $\word \in
      \witness[k+1]^+$, $\word \notin (\Traces a \witness^+ a \Traces \cup \Traces b \witness^- b \Traces)$ and we obtain $\word \in
      \newwitness[k+1]^+$, which concludes.

\end{itemize}

\bigskip

\item Finally, $\witness[k+1]^-=\newwitness[k+1]^-$ because $\witness[k+1]^-=\dual{\witness[k+1]^+}$ and $\newwitness[k+1]^-=\dual{\newwitness[k+1]^+}$ and we have already shown $\witness[k+1]^+=\newwitness[k+1]^+$.
\end{itemize}
 This conclude the proof of \Cref{lem:newwitness}.
\end{proof}

We use \Cref{lem:newwitness} to construct three \adts $\attwitness, \attwitness^+$ and $\attwitness^-$ such that $\semword{\attwitness}=\witness$, $\semword{\attwitness^+}=\witness^+$ and $\semword{\attwitness^-}=\witness^-$, $\counterdepth(\attwitness)=\counterdepth(\attwitness^+)=\counterdepth(\attwitness^-)=k+1$ which achieves the proof of $\Cref{Ltoadt}$. The proof is conducted by induction over $k\geq 1$.

To capture $\witness[1]$, we propose the following \adt depicted in \cref{fig:attwitness1}:
\[
\attwitness[1] \egdef \Cop(b,\ORop(\attall[\strict{b}], \allattall[\Cop(\adtge[2],\ANDop(\attall[a], \attall[b]))])),\]

We prove $\semword{\attwitness[1]}=(ab)^+=\witness[1]$:
we point to \Cref{ex:adtsexamples} for the semantics for
$\etrue, \coatt$ and $\allattall$.
First of all, $\ANDop(\attall[a], \attall[b])$ defines
the set of all traces with at least one occurrence of $a$ and one occurrence of
$b$. Thus 
$\semword{\Cop(\adtge[2],\ANDop(\attall[a], \attall[b]))}=aa^+ \union
bb^+$. Write $\att\egdef \Cop(\adtge[2],\ANDop(\attall[a], \attall[b]))$. 
So, 
$\semword{\allattall[\att]}=\Traces . (aa^+ \union bb^+).\Traces=\Traces . aa.\Traces \union \Traces.bb.\Traces.$ 
On this basis, \\
$
\semword{\attwitness[1]}=\semword{\Cop(b,\ORop(\attall[\strict{b}], \allattall[\att]))}=\Traces b \setminus (b\Traces\union \Traces . aa.\Traces \union \Traces.bb.\Traces)=(ab)^+=\witness[1]$. 

We now compute $\counterdepth(\attwitness[1])$ since $\counterdepth(\att)=1$, we have $\counterdepth(\attwitness[1])=\counterdepth(\Cop(b,\ORop(\attall[\strict{b}], \allattall[\att])))=\max\{0, \max\{ 1, 1\}+1\}=2$.

Similarly, we define:
\begin{itemize}
\item $\attwitness[1]^+ \egdef \Cop(a,\ORop(\attall[\strict{b}], \allattall[\Cop(\adtge[2],\ANDop(\attall[a], \attall[b]))]))$;
\item $\attwitness[1]^- \egdef \Cop(b,\ORop(\attall[\strict{a}], \allattall[\Cop(\adtge[2],\ANDop(\attall[a], \attall[b]))]))$.
\end{itemize}
As done for $\attwitness[1]$, one can verify that
$\semword{\attwitness[1]^+}=(ab)^*a=\witness[1]^+$ and 
$\semword{\attwitness[1]^-}=b(ab)^*=\witness[1]^-$
and that
$\cdepth(\attwitness[1]^+)=\cdepth(\attwitness[1]^-)=2$.

We make use of $\newwitness$,
$\newwitness^+$ and $\newwitness^-$ to inductively define $\attwitness$, $\attwitness^+$ and
$\attwitness^-$. For readability, we introduce for $k\geq 2$ the subtrees $\attwitnesscount$ and $\attwitnessunion$ and we extend the definition of $\dual{.}$ to \adts by applying the swap to leaves. 


\[\attwitnesscount\egdef\ORop(\SANDop(a, \attwitness[k]^+, \attall[\strict{a}]), \SANDop(b, \attwitness[k]^-, \attall[\strict{b}]))\]

\[ \attwitnessunion\egdef \ORop(\SANDop(\attwitness[1], \strict{a}, \attwitness[k]^+), \SANDop(\attwitness[2], \strict{a}, \attwitness[k]^+),...,\SANDop(\attwitness[k], \strict{a}, \attwitness[k]^+)) \]

And $\attwitness$, $\attwitness^+$ and
$\attwitness^-$ are defined by:
\begin{itemize}
    \item $\attwitness[k]\egdef \Cop(\SANDop(\attwitness[k-1]^+, \strict{a}, \etrue, \strict{b}, \attwitness[k-1]^-), \attwitnesscount[k-1])$
    \item $\attwitness[k]^+\egdef \Cop(\ORop(\SANDop(\attwitness[k-1]^+, \strict{a}, \etrue, \strict{a}, \attwitness[k-1]^+), \attwitnessunion[k-1]), \attwitnesscount[k-1])$
    \item $\attwitness[k]^-\egdef \dual{\attwitness^+}$
\end{itemize}


The \adts $\attwitness$, $\attwitness^+$ and $\attwitness^-$ are built in such a way that a direct application of \adt semantics yields
 $\witness=\semword{\attwitness}$, $\witness^+=\semword{\attwitness^+}$ and
$\witness^-=\semword{\attwitness^-}$. 

It remains to show that $\counterdepth(\attwitness)=\counterdepth(\attwitness^+)=\counterdepth(\attwitness^-)=k+1$ assuming $\counterdepth(\attwitness[k-1])=\counterdepth(\attwitness[k-1]^+)=\counterdepth(\attwitness[k-1]^-)=k$. 
By definition,  $\counterdepth(\attwitness)=\max\{\counterdepth(\att_1), \counterdepth(\attwitnesscount[k-1])+1\}$ with $\att_1=\SANDop(\attwitness[k-1]^+, \strict{a}, \etrue, \strict{b}, \attwitness[k-1]^-)$. By using \Cref{ex:counterdepth}, we have $\counterdepth(\att_1)=\max\{\counterdepth(\attwitness[k-1]^+),\break 1, 0, 1, \counterdepth(\attwitness[k-1]^-)\}$ and $\counterdepth(\attwitnesscount[k-1])=\max\{0,\counterdepth(\attwitness[k-1]^+), 1, 0, \counterdepth(\attwitness[k-1]^-),1\}$. Applying the induction hypothesis over $\counterdepth(\attwitness[k-1]^+)=\counterdepth(\attwitness[k-1]^-)=k$, we obtain: $\counterdepth(\att_1)=\max\{k, 1, 0, 1, k\}=k$ and $\counterdepth(\att_2)=\max\{0,k, 1, 0,k,1\}=k$, thus $\counterdepth(\attwitness)=\max\{k, k+1\}=k+1$, which concludes. Similarly, we can establish $\counterdepth(\attwitness^+)=\counterdepth(\attwitness^-)=k+1$.

We have shown $\witness=\semword{\attwitness}$ with $\counterdepth(\attwitness)=k+1$ which concludes the proof of \Cref{Ltoadt}.

\end{proof}




\begin{restatable}{lemma}{lemmacollapse}
  \label{lem-collapse} If 
 $\ADTk[k_0]=\ADTk[k_0+1]$ for some $k_0>0$, then all $\ADTk$ collapse from $k_0$.
\end{restatable}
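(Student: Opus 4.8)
The plan is to reduce everything to a single-step propagation statement: for every $j\ge 0$, if $\ADTk[j]=\ADTk[j+1]$ (as classes of definable languages) then $\ADTk[j+2]=\ADTk[j+1]$. Granting this, the lemma follows by a trivial induction on $k$: starting from the hypothesis $\ADTk[k_0]=\ADTk[k_0+1]$ we get $\ADTk[k_0+2]=\ADTk[k_0+1]$, hence $\ADTk[k_0]=\ADTk[k_0+1]=\ADTk[k_0+2]$; applying the statement again with $j=k_0+1$ yields $\ADTk[k_0+3]=\ADTk[k_0+2]$, and so on, so that $\ADTk[k]=\ADTk[k_0]$ for all $k\ge k_0$ (the chain of inclusions $\ADTk[k_0]\subseteq\ADTk[k_0+1]\subseteq\cdots$ is free from the definition).

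For the single-step statement, since $\ADTk[j+1]\subseteq\ADTk[j+2]$ holds by definition, it suffices to show that $\semword{\att}$ is $\ADTk[j]$-definable whenever $\att\in\ADTk[j+2]$. First I would fix such an $\att$ and argue by structural induction on $\att$, producing an equivalent $\att'\in\ADTk[j]$. The leaves $\epsilonadt$ and $\formula$ have countermeasure-depth $0$, so they already lie in $\ADTk[j]$. For a node $\OP(\att_1,\ldots,\att_n)$ with $\OP\in\{\ORop,\SANDop,\ANDop\}$, each $\att_i$ is a strictly smaller \adt with $\counterdepth(\att_i)\le\counterdepth(\att)\le j+2$, hence by the induction hypothesis $\att_i$ is equivalent to some $\att_i'\in\ADTk[j]$; since the semantics of $\OP$ depends only on the semantics of its arguments, $\OP(\att_1',\ldots,\att_n')\in\ADTk[j]$ is equivalent to $\att$.

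The crucial case is $\att=\Cop(\att_1,\att_2)$, and here the hypothesis is used \emph{twice}. From $\counterdepth(\Cop(\att_1,\att_2))=\max\{\counterdepth(\att_1),\counterdepth(\att_2)+1\}\le j+2$ we read off $\att_1\in\ADTk[j+2]$ and $\att_2\in\ADTk[j+1]$. By the structural induction hypothesis, $\att_1$ is equivalent to some $\att_1'\in\ADTk[j]$; and because $\semword{\att_2}$ is $\ADTk[j+1]$-definable, the hypothesis $\ADTk[j+1]=\ADTk[j]$ gives an $\att_2'\in\ADTk[j]$ with $\semword{\att_2'}=\semword{\att_2}$. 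Then $\Cop(\att_1',\att_2')$ is equivalent to $\att$, but it lies only in $\ADTk[\max\{j,j+1\}]=\ADTk[j+1]$; one more application of the hypothesis shows $\semword{\Cop(\att_1',\att_2')}$ is $\ADTk[j]$-definable, completing the induction and the single-step statement. I do not expect a genuine obstacle; the only point needing care is precisely this bookkeeping in the $\Cop$ case — noticing that replacing $\att_2$ via the hypothesis still leaves $\Cop(\att_1',\att_2')$ at depth $j+1$ rather than $j$, so that a second invocation of the collapse hypothesis is indispensable.
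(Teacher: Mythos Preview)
Your proof is correct and takes essentially the same approach as the paper's: establish the single-step propagation $\ADTk[j]=\ADTk[j+1]\Rightarrow\ADTk[j+1]=\ADTk[j+2]$ by replacing second arguments of $\Cop$-nodes via the collapse hypothesis, then iterate. Your structural induction is in fact tidier than the paper's sketch; as a minor simplification, had you aimed to show only $\ADTk[j+1]$-definability (rather than $\ADTk[j]$) in the inductive step, the second invocation of the hypothesis in the $\Cop$ case would disappear, since $\Cop(\att_1',\att_2')$ with $\att_1'\in\ADTk[j+1]$ and $\att_2'\in\ADTk[j]$ already lies in $\ADTk[j+1]$.
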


\begin{proof}
Let $k>0$ be such that for all $\att \in \ADTk[k+1]$, we have that $\att$ is $\ADTk$-definable. Given
$\att_{k+2} \in \ADTk[k+2]$, we know that for each subtree of the form $\Cop(\att_1,\att_2)$ in  $\att_{k+2}$, we have that $\att_2\in \ADTk[k+1]$. Therefore $\att_2$ is
$\ADTk$-definable. Hence there exists $\att_2'\in \ADTk$ such
that $\att_2\equiv\att_2'$. By replacing $\att_2$ by
$\att_2'$ in $\att_{k+2}$, we do not change its semantics. By applying this procedure over all $\Cop$ operator of minimal depth (ie. $\Cop$ operator having no $\Cop$ in their ancestors), we obtain $\att'_{k+1}\in \ADTk[k+1]$ such that $\att_{k+1}\equiv\att'_{k+2}$. By hypothesis, we know that 
$\att'_{k+1}$ is $\ADTk$-definable. This implies that $\att_{k+1}$ is also
$\ADTk$-definable. We can then extend this result for each $l>k$
by induction.
\end{proof}


\section{Complements of \Cref{sec-decisionproblems}}
\label{app:sec-decisionproblems}
 
Regarding the upper-bound complexity of \ADTmemb
(\Cref{table:results}), we present here an alternating algorithm using
the following logarithmic space (hence a \PTIME complexity for
\ADTmemb) \Cref{algo:mem}. Remark that our algorithm can be extended
to allow arbitrary \eres (with the Kleene star) as inputs, but this is out of the scope of the paper.

\begin{algorithm}[h]
\caption{$\Membership(\att, \trace$)}
\label{algo:mem}
\textbf{Input:}  $\att$ an \adt and $\trace$ a trace \newline
\textbf{Output:} $True$ if $\trace \in \semword{\att}$, $False$ otherwise.
\vspace{-0.4cm}
\begin{multicols}{2}
  \begin{algorithmic}[1]
    \SWITCH{$\att$}
\CASE{$\epsilonadt$}
\RETURN $(\trace=\eword)$
\ENDCASE
\CASE{$\att=\formula$}
\RETURN $(last(\trace)\models \formula)$
\ENDCASE
\CASE{$\att=OP(\att')$}
\RETURN $\Membership(\att', \trace)$
\ENDCASE
\CASE{$\att=\ORop(\att_1, ... ,\att_n)$}
\STATE \EXISTGUESS $\att' \in \{\att_1, ..., \att_n\}$
\RETURN $\Membership(\att', \trace)$
\ENDCASE
\CASE{$\att=\SANDop(\att_1, ..., \att_n)$}
\STATE \EXISTGUESS $i \in \{1, ..., \sizetrace{\trace}\}$
\STATE \FORALLGUESS  $test \in \{first,others\}$
\IF{$test=first$}
\RETURN $\Membership(\att_1, \trace_1...\trace_i)$
\ELSE
\RETURN
\STATE $\Membership(\SANDop(\att_2, ..., \att_n), \trace_{i+1}...\trace_{\sizetrace{\trace}})$
\ENDIF
\ENDCASE
\CASE{$\att=\ANDop(\att_1,..., \att_n)$}
\STATE \EXISTGUESS $(i,a) \in \{1, ..., \sizetrace{\trace}\} \times \{1,..., n\} $
\FORALLGUESS  $test \in \{first, others\}$
\IF{$test=first$}
\RETURN $\Membership(\att_a, \trace_1...\trace_i)$
\ELSE
\RETURN
\STATE $\Membership(\ANDop(\att_1,..., \att_{i-1}, \att_{i+1},..., \att_n), \trace)$
\ENDIF
\ENDCASE
\CASE{$\att=\Cop(\att_1, \att_2)$}
\STATE \FORALLGUESS  $test \in \{first, second\}$
\IF{$test=first$}
\RETURN $\Membership(\att_1, \trace)$
\ELSE
\RETURN $\lnot \Membership(\att_2, \trace)$
\ENDIF
\ENDCASE
\ENDSWITCH
\end{algorithmic}
\end{multicols}
\end{algorithm}
.

\begin{proposition}
 \Cref{algo:mem} solves \ADTmemb in logarithmic space.
\end{proposition}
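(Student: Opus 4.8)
The plan is to establish two properties of the alternating recursive procedure $\Membership$ of \Cref{algo:mem}: \textbf{(A)} it is correct, i.e. for every \adt $\att$ and trace $\trace$ the alternating computation started on $(\att,\trace)$ accepts if and only if $\trace\in\semword{\att}$; and \textbf{(B)} every configuration of this computation can be stored in $O(\log n)$ bits, where $n$ is the size of the input $(\att,\trace)$. Property (B) is exactly the claim; combined with (A) and the classical fact that alternating logarithmic space coincides with \PTIME, it also yields the \PTIME upper bound for \ADTmemb recorded in \Cref{table:results}.

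For (A) I would proceed by structural induction on $\att$. The first thing to check is that the procedure terminates: every recursive call is on an argument strictly smaller in the well-founded order comparing first the number of internal nodes and then the arity of the root, since the \ORop, \SANDop and \ANDop cases recurse either on a child subtree or on the same root operator with one argument deleted, and the \Cop case recurses on its two children. Each case is then matched against the corresponding clause of \Cref{def:semword}: the existential (nondeterministic) choices implement the union in $\semword{\ORop(\cdot)}$ and the union together with the choice of a cut position underlying $\concat$ and the ``prefix'' disjunct of $\andword$ (\Cref{def:andword}); the universal choices implement the conjunctions ``left factor and right factor'' for \SANDop and \ANDop and ``belongs to $\att_1$ and not to $\att_2$'' for \Cop. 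One has to be a little careful with the empty-trace corner cases -- a split with an empty left or right factor, and the possibility $\eword\in\semword{\att_i}$ inside an \ANDop -- which are handled by letting the guessed cut position also take the value $0$ (the empty prefix) and by a base-case inspection when $\trace=\eword$. The negation $\lnot\Membership(\att_2,\trace)$ in the \Cop case is realized, as is standard for alternating machines, by carrying a polarity bit that toggles when one descends into the second argument of a \Cop and dualizes all the choices and leaf tests below it; by the induction hypothesis applied to $\att_2$ this branch accepts iff $\trace\notin\semword{\att_2}$.

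For (B) the key observation is that a configuration needs to record only a pointer to the current node of $\att$ (which fits in $O(\log n)$ bits, the encoding of $\att$ having at least as many symbols as it has nodes), the endpoints of the subinterval of trace positions currently under consideration (two indices in $\{0,\dots,\sizetrace{\trace}\}$), the polarity bit, and $O(1)$ local variables for the most recently guessed indices -- and that \emph{no data structure grows with the depth of the recursion}. This becomes immediate once we first invoke \Cref{rem:associative} to assume every \ORop, \SANDop and \ANDop node is binary: then ``the same operator with one argument deleted'' is just the other child, again a genuine subtree, so every call lives on a subtree of $\att$ paired with a subinterval of $\trace$. Hence each configuration has size $O(\log n)$, which is what is required.

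The step I expect to be the main obstacle is precisely this last point for \ANDop. Read literally, the $n$-ary \ANDop case expands $\ANDop(\att_1,\dots,\att_n)$ by deleting one child and recursing on a smaller \ANDop over the \emph{same} trace, so a configuration deep in such a recursion would have to remember which subset of the original children still survives -- that is $\Omega(n)$ bits, not logarithmic. Showing this is avoidable is where the care lies: either one binarizes via \Cref{rem:associative} as above, or one restructures the $n$-ary case along the characterization of \Cref{ex:andword} (existentially fix the argument $\att_i$ responsible for the full trace, then universally pick another argument and existentially guess the prefix length it consumes, storing only those two indices), after which one must re-verify that the restructured procedure still computes $\wordlang_1\andword\dots\andword\wordlang_n$. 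Everything else -- the leaf cases, \ORop, \SANDop, \Cop -- is routine once the recursion has been phrased so that each call operates on a subtree of $\att$ together with a contiguous range of trace positions.
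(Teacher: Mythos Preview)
Your proposal is correct and follows the same two-part plan as the paper: correctness by structural induction on $\att$, logarithmic space by bounding each configuration to a node pointer plus a pair of trace indices. You are in fact more careful than the paper on the point you flag as the main obstacle---the paper's space argument simply asserts that ``which part of $\att$'' can be recorded in constant space and never discusses the $n$-ary \ANDop recursion on $\ANDop(\att_1,\ldots,\att_{a-1},\att_{a+1},\ldots,\att_n)$, whereas you correctly observe this is not a genuine subtree and resolve it via \Cref{rem:associative} or the characterization of \Cref{ex:andword}.
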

 
\begin{proof}
We start by showing that \Cref{algo:mem} runs in logarithmic space,
then we prove its correctness.

Since at each recursive call we only need to recall over which factor
of $\trace$ we are computing (in constant space) and over which part
of $\att$ (in constant space) we are pursuing the
computation, \Cref{algo:mem} runs in logarithmic space.

Regarding the correctness of \Cref{algo:mem}, we conduct a proof by induction over $\att$.

The cases where $\att$ is a
leaf (lines 1 to 7) are correct by definition. For the case where
$\att=OP(\att')$ (line 9), then, \Cref{algo:mem} is correct too since
$\semword{\att}=\semword{\att'}$.

If $\att=\ORop(\att_1, ..., \att_n)$, then the case lines 13 to 16 are
correct too since by definition, $\trace \in \semword{\att}$ if
and only if one can find $i \in \{1, ...,n\} $ such that
$\trace \in \semword{att_i}$.

If $\att=\SANDop(\att_1, ..., \att_n)$, then, by associativity,
$\att\equiv \SANDop(\att_1, \SANDop(\att_2, ..., \att_n))$. Moreover,
we have that $\trace=\val_1\ldots\val_m \in \semword{\att}$ if and only if one can find
$i \in \{1, ...,m\} $ such that
$\val_1...\val_i \in \att_1$ and
$\val_{i+1}...\val_m \in \SANDop(\att_2,
..., \att_n)$, which is what is done in lines 18-26.

If $\att=\ANDop(\att_1, ..., \att_n)$, then, by
commutativity and associativity for each $1\leq j \leq n$ we have
$\att\equiv \ANDop(\att_j, \ANDop(\att_1, ..., \att_{j-1}, \att_{j+1},
..., \att_n))$. Moreover, if a trace $\trace\in \ANDop(\att_1,
..., \att_n)$, we can always choose $j$ such that there is $1\leq
k\leq n$, with $k\neq j$ and $\trace \in \semword{\att_k}$. Therefore
$\trace$ is in $\semword{\att}$ if and only if $\trace$ has a prefix
in $\semword{\att_j}$ and $\trace$ is in semantics of $\ANDop(\att_1,
..., \att_{j-1}, \att_{j+1}, ..., \att_n)$. Thus procedure describes
from line 28 to 36 is correct.

Finally, if $\att=\Cop(\att_1, \att_2)$n then trace $\trace \in \semword{\att}$ if, and only if, $\trace \in \semword{\att_1}$ and $\trace \not \in \semword{\att_2}$, which is what is done in lines 38-45. 
\end{proof}


\begin{restatable}{proposition}{lemmaadtneklowerbound}
\label{ADTNEklowerbound}
\ADTNE[k] with $k \ge 6$ is not solvable in $\NSPACE(g(k-5,c \sqrt{\frac{n-1}{3}}))$.
\end{restatable}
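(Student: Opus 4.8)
\end{restatable}

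\begin{proof}
The plan is to transfer the known hardness of the non-emptiness problem for \shseres, parameterised by the nesting depth of complementations, as given by \cite[Theorem~4.29]{stockmeyer1974complexity}, to $\ADTNE[k]$ by means of the size-linear, polynomial-time translation of \shseres into \adts underlying the ``only if'' direction of \Cref{theo:starfreeequalsADT}. First I would recall from that direction the structural map $\reg \mapsto \att_\reg$, which turns a \shsere $\reg$ into an equivalent \adt with $\sizeatt{\att_\reg} \in O(\sizereg{\reg})$ (so that $\reg$ denotes a non-empty language if and only if $\att_\reg \neq \emptyset$), computable in polynomial time.

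The crucial step is then to control how $\reg \mapsto \att_\reg$ acts on countermeasure-depth: a single letter $\val$ becomes $\strict{\val}$, a union becomes $\ORop$, a concatenation becomes $\SANDop$ (none of these raising the countermeasure-depth above that of the arguments), while a complementation becomes the wrapper $\coatt[\cdot]$, which raises the countermeasure-depth by exactly $1$, and an intersection becomes $\andatt{\cdot}{\cdot}$, which raises it by $2$ (see \Cref{ex:counterdepth}). I would then inspect Stockmeyer's construction: for each $\ell$ it produces a \shsere $\reg_\ell$ of size $n$, whose complementations are nested to a depth governed by $\ell$ (and whose intersections, if present, are used in a controlled way), and whose non-emptiness is not decidable in $\NSPACE(g(\ell-5, c\sqrt{(n-1)/3}))$. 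Feeding $\reg_\ell$ through the translation yields an \adt $\att_{\reg_\ell} \in \ADTk[\ell + O(1)]$ of size $O(n)$ with $\att_{\reg_\ell} \neq \emptyset$ iff $\reg_\ell$ is non-empty; a careful accounting of these additive constants — which is precisely what forces the hypothesis $k \ge 6$ — shows that, choosing $\ell$ so as to match $k$, the instance lies in $\ADTk$.

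The lower bound then follows by contradiction. If $\ADTNE[k]$ were solvable in $\NSPACE(g(k-5, c\sqrt{(n-1)/3}))$, then composing that algorithm with the translation (polynomial time, constant-factor size blow-up) would decide non-emptiness of Stockmeyer's level-$k$ \shseres within $\NSPACE(g(k-5, c'\sqrt{(n-1)/3}))$ for a suitable constant $c'$: the linear size blow-up alters only the multiplicative constant inside the square root and leaves the tower height $k-5$ unchanged, and $g$ is monotone in its second argument, so this reparametrisation is legitimate, and by the insensitivity of Stockmeyer's bound to constant-factor changes of the input size it still contradicts \cite[Theorem~4.29]{stockmeyer1974complexity}. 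Hence $\ADTNE[k]$ is not solvable in $\NSPACE(g(k-5, c\sqrt{(n-1)/3}))$.

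I expect the main obstacle to be the parameter bookkeeping of the second paragraph: one has to examine Stockmeyer's construction closely enough to certify that its level-$\ell$ witnesses translate, under $\reg \mapsto \att_\reg$, into \adts whose countermeasure-depth matches $k$ exactly — so that the tower height appearing in $g$ is preserved and not shifted by a constant — and to identify the exact off-by-one responsible for requiring $k \ge 6$ rather than $k \ge 5$. The other ingredients (the $O(n)$ size bound on $\att_\reg$ and the fact that a linear-size polynomial-time reduction preserves a space lower bound of this order up to the constant $c$) are routine.
\end{proof}
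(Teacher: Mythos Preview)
Your proposal is correct and follows essentially the same route as the paper: import Stockmeyer's lower bound for \shsere non-emptiness (\cite[Theorem~4.29]{stockmeyer1974complexity}) via the linear translation of \shseres into \adts from the proof of \Cref{theo:starfreeequalsADT}, and then chase the parameters. The paper's proof is simply more concrete where you defer to ``parameter bookkeeping'': it notes that for $\reg$ of size $n$ and complement-depth $d$, the non-emptiness of $\complement\reg$ is not in $\NSPACE(g(d-3,c\sqrt{n}))$, and that this question reduces to $\ADTNE$ on the \adt $\Cop(\true,\att_\reg)$, which lies in $\ADTk[d+2]$ and has size at most $3n+1$ --- so setting $k=d+2$ yields the tower height $k-5=d-3$ and the size substitution $n\mapsto(n-1)/3$, exactly accounting for both the $k\ge 6$ threshold and the $\sqrt{(n-1)/3}$ in the statement.
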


\begin{proof}    
From \cite[Theorem 4.29]{stockmeyer1974complexity}, for an
\ere $\reg$ of size $n$ and of $\complement*$-depth $d$, there
exists a constant $c$ such that the non-emptiness of
$\complement \reg$ is not solvable in $\NSPACE(g(d-3,
c \sqrt{n}))$. Moreover, from the proof of \Cref{theo:starfreeequalsADT}, it
can be shown that the non-emptiness of $\complement \reg$ reduces to
answering \ADTNE for \adt $\Cop(\true, \att_\reg) \in \ADTk[d+2]$ of
size at most $3n+1$, which concludes.

\end{proof}



\begin{restatable}{proposition}{lemmaadtnekupperbound}
\label{ADTNEkupperbound}
For $k\geq 1$, \ADTNE[k] is in $(k+1)$-\EXPSPACE.
\end{restatable}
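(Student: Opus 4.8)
The plan is to decide \ADTNE[k] via the effective translation from $\ADTk$ into the first-order fragment $\foalt[k+1]$, and then to invoke the known space complexity of satisfiability for this fragment.

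First, given an \adt $\att \in \ADTk$ (with $k \geq 1$ fixed), I would use \Cref{lem:adtktofok+1} (item~\ref{lem:itemadtktofok+1}) to compute, effectively, a formula $\foformula \in \foalt[k+1]$ that defines $\semword{\att}$, in the sense that a trace $\trace$ satisfies $\foformula$ if, and only if, $\trace \in \semword{\att}$ (this uses the first item of \Cref{lem:adttofo}, identifying $\att$ with its canonical formula $\foformula_\att$, from which the $\foalt[k+1]$-form is extracted in the proof of \Cref{lem:adtktofok+1}). By \Cref{lem:sizefofromadt} the canonical formula is of size exponential in $\sizeatt{\att}$, and since the refinement into $\foalt[k+1]$ in the proof of \Cref{lem:adtktofok+1} only rearranges that formula via relativizations — which increase the size by at most a constant per quantifier — we may take $\sizeformula{\foformula} \leq 2^{p(\sizeatt{\att})}$ for a fixed polynomial $p$, the single exponential stemming only from expanding propositional-formula leaves and from duplicating subformulas for \SANDop and \ANDop. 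In particular, $\att \neq \emptyset$ if, and only if, $\foformula$ is satisfiable over finite words; note that merely writing $\foformula$ down uses only exponential space, which is well inside the target bound because $k \geq 1$.

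Second, I would apply the folklore fact recalled in the preliminaries: satisfiability for a $\foalt[\ell]$-formula is decidable in $(\ell-1)$-\EXPSPACE (with the convention $0$-\EXPSPACE $=$ \PSPACE). Writing $g$ for the tower function ($g(0,x)=x$ and $g(i+1,x)=2^{g(i,x)}$), this means space $g(\ell-1,\mathrm{poly}(N))$ on an input formula of size $N$. Taking $\ell = k+1$ and $N = \sizeformula{\foformula} \leq 2^{p(\sizeatt{\att})}$, the procedure runs in space $g(k,\mathrm{poly}(2^{p(\sizeatt{\att})}))$; since $g$ is monotone and $g(k,2^{x}) = g(k+1,x)$, this is bounded by $g(k+1,\mathrm{poly}(\sizeatt{\att}))$. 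Hence \ADTNE[k] lies in $(k+1)$-\EXPSPACE, as claimed. For $k=1$ the same reasoning yields the (non-tight) bound $2$-\EXPSPACE, consistent with the sharper \NP-completeness recorded in \Cref{table:results}, which follows instead from the Small Model Property of \Cref{boundedsizetracesADT1}.

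The one point deserving care is the composition of these two bounds: it is essential that the \ADT-to-\FO translation incurs only a \emph{single} exponential blow-up in size, because it is precisely that exponential which transforms the tower $g(k,\cdot)$ governing $\foalt[k+1]$-satisfiability into $g(k+1,\cdot)$; a doubly exponential translation would give only $(k+2)$-\EXPSPACE. This single-exponential bound is exactly the content of \Cref{lem:sizefofromadt}. Everything else — the effectiveness of the translation and the independence of the alternation level $k+1$ from the input size — is immediate from the results already established.
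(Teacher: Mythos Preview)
Your proof is correct and follows essentially the same approach as the paper: translate $\att \in \ADTk$ into a $\foalt[k+1]$-formula of at most exponential size (via \Cref{lem:adtktofok+1} and \Cref{lem:sizefofromadt}), then invoke the $(\ell-1)$-\EXPSPACE bound for $\foalt[\ell]$-satisfiability with $\ell=k+1$, and compose the two bounds to reach $(k+1)$-\EXPSPACE. If anything, you are more explicit than the paper about the tower-composition step $g(k,2^{p(n)}) \leq g(k+1,\mathrm{poly}(n))$ and about why the $\foalt[k+1]$-normalization of $\foformula_\att$ stays within a single exponential, which the paper leaves implicit.
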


\begin{proof}
For an \adt $\att \in \ADTk$, we have a formula
$\foformula_\att \in \foalt[k+1]$ with
$\sizeformula{\foformula_{\att}} \in O(2^{\sizeatt{\att}})$
(\Cref{lem:sizefofromadt}) that is equivalent to $\att$
(see \Cref{lem:adttofo}).  Now, the non-emptiness of $\semword{\att}$
is equivalent to the satisfiability of
$\foformula_\att \in \foalt[k+1]$ which,
by \cite{meyer2006weak} take $k$-exponential time in the size
of $\foformula_{\att}$ and therefore $(k+1)$-exponential time in
$\sizeatt{\att}$.
%
  
\end{proof}



\end{document}